\newcommand{\CCfont}[1]{\ensuremath{\mathsf{#1}}}
\newcommand{\dimfont}[1]{\ensuremath{\mathsf{#1}}}
\newcommand{\MCSP}{\CCfont{MCSP}}
\newcommand{\pair}[1]{\langle #1 \rangle}
\newcommand{\ceil}[1]{ \left\lceil #1 \right\rceil }
\newcommand{\myset}[2]{ \left\{ #1 \left| #2 \right.\right\} }
\newcommand{\prefix}{\sqsubseteq}
\newcommand{\liminfn}{\liminf\limits_{n\to\infty}}
\newcommand{\limsupn}{\limsup\limits_{n\to\infty}}
\newcommand{\PR}[2]{\underset{#1}{\CCfont{Pr}}\left[#2\right]}
\newcommand{\restr}{\negthinspace\upharpoonright\negthinspace}
\newcommand{\parityP}{{\oplus\P}}
\newcommand{\Sharp}{\CCfont{\#}}
\newcommand{\sharpP}{{\Sharp\P}}
\newcommand{\ParityP}{\parityP}
\newcommand{\SharpP}{\sharpP}
\newcommand{\N}{\mathbb{N}}
\newcommand{\p}{{\CCfont{p}}}
\newcommand{\pspace}{{\CCfont{pspace}}}
\newcommand{\comp}{{\CCfont{comp}}}
\newcommand{\mup}{\mu_\p}
\newcommand{\mupspace}{\mu_\pspace}
\renewcommand{\dim}{{\dimfont{dim}}}
\newcommand{\dimp}{\dim_\p}
\newcommand{\dimpspace}{\dim_\pspace}
\newcommand{\dimh}{\dim_\dimfont{H}}
\newcommand{\dimcomp}{\dim_\comp}
\newcommand{\cdim}{\dimfont{cdim}}
\newcommand{\Deltapthree}{{\Delta^\p_3}}
\newcommand{\dimDeltapthree}{\dim_\Deltapthree}
\newcommand{\Deltaptwo}{{\Delta^\p_2}}
\newcommand{\DeltaPtwo}{{\Delta^\P_2}}
\newcommand{\DeltaPthree}{{\Delta^\P_3}}
\newcommand{\dimpack}{{\dim}_{\dimfont{P}}}
\newcommand{\Dim}{{\dimfont{Dim}}}
\newcommand{\cDim}{{\dimfont{cDim}}}
\newcommand{\Dimp}{{\Dim_\p}}
\newcommand{\Dimcomp}{{\Dim_\comp}}
\newcommand{\Dimpspace}{{\Dim_\pspace}}
\newcommand{\str}{{\CCfont{str}}}
\newcommand{\C}{\CCfont{C}}
\newcommand{\co}[1]{\CCfont{co}#1}
\newcommand{\strings}{\{0,1\}^*}
\newcommand{\ALL}{\CCfont{ALL}}
\newcommand{\CE}{\CCfont{CE}}
\newcommand{\SPARSE}{\CCfont{SPARSE}}
\newcommand{\DENSE}{\CCfont{DENSE}}
\newcommand{\DTIME}{\CCfont{DTIME}}
\newcommand{\BPTIME}{\CCfont{BPTIME}}
\newcommand{\NP}{\CCfont{NP}}
\newcommand{\FewP}{\CCfont{FewP}}
\newcommand{\UP}{\CCfont{UP}}
\newcommand{\SIZE}{\CCfont{SIZE}}
\newcommand{\BQSIZE}{\CCfont{BQSIZE}}
\newcommand{\SIZEio}{\SIZE_\CCfont{i.o.}}
\newcommand{\BQP}{\CCfont{BQP}}
\newcommand{\BPP}{\CCfont{BPP}}
\newcommand{\E}{\CCfont{E}}
\newcommand{\NE}{\CCfont{NE}}
\newcommand{\coNE}{\co\NE}
\newcommand{\UE}{\CCfont{UE}}
\newcommand{\coUE}{\co\UE}
\newcommand{\BPE}{\CCfont{BPE}}
\newcommand{\PSPACE}{\CCfont{PSPACE}}
\newcommand{\ESPACE}{\CCfont{ESPACE}}
\newcommand{\DEC}{\CCfont{DEC}}
\renewcommand{\P}{\CCfont{P}}
\newcommand{\EXP}{\CCfont{EXP}}
\newcommand{\DeltaE}{\Delta^\E}
\newcommand{\DeltaEtwo}{\DeltaE_2}
\newcommand{\DeltaEthree}{\DeltaE_3}
\newcommand{\SigmaP}{\Sigma^\P}
\newcommand{\SigmaPtwo}{{\SigmaP_2}}
\newcommand{\PH}{\CCfont{PH}}
\newcommand{\poly}{\CCfont{poly}}
\newcommand{\Ppoly}{{\P/\poly}}
\newcommand{\qpoly}{\CCfont{qpoly}}
\newcommand{\PP}{\CCfont{PP}}
\newcommand{\SPP}{\CCfont{SPP}}
\newcommand{\AWPP}{\CCfont{AWPP}}
\newcommand{\GapP}{\CCfont{GapP}}
\newcommand{\FP}{\CCfont{FP}}
\newcommand{\calC}{{\mathcal C}}
\newcommand{\calD}{{\mathcal D}}
\newcommand{\calH}{{\mathcal H}}
\newcommand{\calM}{{\mathcal M}}
\newcommand{\calKS}{{\mathcal KS}}
\newcommand{\calK}{{\mathcal K}}
\newcommand{\io}{\CCfont{i.o.}}
\newcommand{\aev}{{\CCfont{a.e.}}}
\newcommand{\ioA}{A^\io}
\newcommand{\aeA}{A^\aev}
\newcommand{\HP}{\calH_\P}
\newcommand{\HPSPACE}{\calH_\PSPACE}
\newcommand{\HNP}{\calH_\NP}
\newcommand{\HC}{\calH_\calC}
\newcommand{\spanP}{\CCfont{SpanP}}
\newcommand{\SpanP}{\spanP}
\renewcommand{\HC}{\mathrm{HC}}
\newcommand{\nth}{n^{\mathrm{th}}}
\renewcommand{\SIZEio}{\SIZE^\CCfont{i.o.}}
\newcommand{\T}{\CCfont{T}}
\newcommand{\PT}{\CCfont{P}_{\CCfont{T}}}
\renewcommand{\Pr}{\CCfont{P}_r}
\newcommand{\PnaT}{\CCfont{P}_{n^\alpha\CCfont{-T}}}
\newcommand{\PTM}{\CCfont{PTM}}
\newcommand{\QTM}{\CCfont{QTM}}
\newcommand{\musharpp}{\mu_{\SharpP}}
\newcommand{\mugapp}{\mu_{\GapP}}
\newcommand{\muspanp}{\mu_{\SpanP}}
\newcommand{\muSpanP}{\muspanp}
\newcommand{\dimsharpp}{\dim_{\SharpP}}
\newcommand{\dimgapp}{\dim_{\GapP}}
\newcommand{\dimspanp}{\dim_{\SpanP}}
\newcommand{\dimSpanP}{\dimspanp}
\newcommand{\dimGapP}{\dim_\GapP}
\newcommand{\dimSharpP}{\dim_\SharpP}
\newcommand{\PSharpP}{{\P^\SharpP}}
\newcommand{\muPSharpP}{\mu_\PSharpP}
\newcommand{\Dimsharpp}{\Dim_{\SharpP}}
\newcommand{\Dimspanp}{\Dim_{\SpanP}}
\newcommand{\DimSpanP}{\Dimspanp}
\newcommand{\DimGapP}{\Dim_\GapP}
\newcommand{\DimSharpP}{\Dim_\SharpP}
\newcommand{\muSharpP}{\mu_\SharpP}
\newcommand{\muGapP}{\mu_\GapP}
\newcommand{\dimDeltaPthree}{\dim_\DeltaPthree}
\newcommand{\muDeltaPthree}{\mu_\DeltaPthree}
\def\binary{\lbrace 0,1 \rbrace}
\def\Pr{\CCfont{Pr}}
\newcommand{\NPSV}{\CCfont{NPSV}}
\newcommand{\UPSV}{\CCfont{UPSV}}
\newcommand{\Few}{\CCfont{Few}}
\renewcommand{\FP}{\CCfont{FP}}
\newcommand{\ClassMeasure}[1]{{\calM_{#1}}}
\newcommand{\MC}{\ClassMeasure{\calC}}
\newcommand{\MNP}{\ClassMeasure{\NP}}
\renewcommand{\HC}{\calH_\calC}
\newcommand{\lcm}{\CCfont{lcm}}
\newcommand{\twonn}{\frac{2^n}{n}}
\newcommand{\lutzbound}{\twonn\left(1+\frac{\alpha \log n}{n}\right)}
\newcommand{\dimbound}{\alpha\twonn}
\newcommand{\SIZEiolutzbound}{\SIZEio\!\left(\lutzbound\right)}
\newcommand{\SIZEdimbound}{\SIZE\!\left(\dimbound\right)}
\newcommand{\SIZEiodimbound}{\SIZEio\!\left(\alpha \frac{2^n}{n}\right)}
\newcommand{\BQTIME}{\CCfont{BQTIME}}
\newcommand{\BQE}{\CCfont{BQE}}
\newcommand{\FSPtwo}{\CCfont{FS}^\P_2}
\newcommand{\SEtwo}{\CCfont{S}^\E_2}
\newtheorem{theorem}{Theorem}[section]
\newtheorem{corollary}[theorem]{Corollary}
\newtheorem{lemma}[theorem]{Lemma}
\newtheorem{proposition}[theorem]{Proposition}
\newtheorem{conjecture}[theorem]{Conjecture}
\newenvironment{theorem_cite}[1]
{\begin{theorem}  {\rm (#1)}}
{\end{theorem}}
\newenvironment{lemma_cite}[1]
{\begin{lemma}  {\rm (#1)}}
{\end{lemma}}
\newenvironment{lemma_named}[1]
{\begin{lemma} {\rm (#1)}}
{\end{lemma}}
\newenvironment{definition_named}[1]
{\begin{definition} {\rm (#1)}}
{\end{definition}}
\newenvironment{construction_named}[1]
{\begin{construction}  {\rm (#1)}}
{\end{construction}}
\newtheorem{question}[theorem]{Question}
\newtheorem*{hypothesis}{Hypothesis}
\theoremstyle{definition}
\newtheorem*{definition}{Definition}
\newtheorem*{defn}{Definition}
\newenvironment{definition_cite}[1]
{\begin{definition}  {\rm (#1)}}
{\end{definition}}
\newtheorem{example}[theorem]{Example}
\newtheorem*{example*}{Example}
\newtheorem*{examples*}{Examples}
\newtheorem{construction}[theorem]{Construction}
\theoremstyle{remark}
\numberwithin{equation}{section}
\numberwithin{figure}{section}
\newenvironment{enumerateC}{\begin{enumerate}}{\end{enumerate}}
 \newcommand{\SPE}{\CCfont{SPE}}
\newcommand{\GapE}{\CCfont{GapE}}
\newcommand{\AWPE}{\CCfont{AWPE}}
\def\binary{\lbrace 0,1 \rbrace}
\renewcommand{\C}{\CCfont{C}}
\renewcommand{\SIZE}{\CCfont{SIZE}}
\renewcommand{\SIZEio}{\CCfont{SIZE}^\CCfont{i.o.}}
\renewcommand{\BQSIZE}{\CCfont{BQSIZE}}
\newcommand{\BQSIZEio}{\BQSIZE^\CCfont{i.o.}}
\renewcommand{\lutzbound}{\twonn\!\left(1+\frac{\alpha \log n}{n}\right)}
\renewcommand{\dimpack}{\Dim_\mathsf{pack}}
\renewcommand{\p}{\P}
\renewcommand{\pspace}{\PSPACE}
 \newtheorem*{acks}{Acknowledgments}
\title{\bf Counting Martingales for\\Measure and Dimension in Complexity Classes}
\date{}
\newcommand{\sectionnewpage}{\newpage}
\renewcommand{\sectionnewpage}{}
\author{John M. Hitchcock
\thanks{Department of Electrical Engineering and Computer Science, University of Wyoming. jhitchco@uwyo.edu. This research was supported in part by NSF grant 2431657.}
\and Adewale Sekoni
\thanks{Department of Mathematics, Computer Science \& Physics,
  Roanoke College. sekoni@roanoke.edu.}
\and
Hadi Shafei
\thanks{School of Computing and Engineering, University of Huddersfield. H.Shafei@hud.ac.uk.}
}
\begin{document}
\maketitle

\begin{abstract}
	This paper makes two primary contributions. First, we introduce the concept of {\em counting martingales} and use it to define {\em counting measures}, {\em counting dimensions}, and {\em counting strong dimensions}. Second, we apply these new tools to strengthen previous {\em circuit lower bounds}.

	Resource-bounded measure and dimension have traditionally focused on deterministic time and space bounds. We  use counting complexity classes to develop resource-bounded counting measures and dimensions. Counting martingales are constructed using functions from the $\SharpP$, $\SpanP$, and $\GapP$ complexity classes. We show that counting martingales capture many
	martingale constructions in complexity theory. The resulting counting measures and dimensions are intermediate in power between the standard time-bounded and space-bounded notions, enabling finer-grained analysis where space-bounded measures are known, but time-bounded measures remain open.
	For example, we show that $\BPP$ has $\sharpP$-dimension 0 and $\BQP$ has $\GapP$-dimension 0, whereas the $\P$-dimensions of these classes remain open.

	As our main application, we improve circuit-size lower bounds. Lutz (1992) strengthened Shannon's classic $(1-\epsilon)\twonn$ lower bound (1949) to $\pspace$-measure, showing that almost all problems require circuits of size $\lutzbound$, for any $\alpha < 1$. We extend this result to $\SpanP$-measure, with a proof that uses a connection through the Minimum Circuit Size Problem ($\MCSP$) to construct a counting martingale. Our results imply that the stronger lower bound holds within the third level of the exponential-time hierarchy, whereas previously, it was only known in $\ESPACE$. Under a derandomization hypothesis, this lower bound holds within the second level of the exponential-time hierarchy, specifically in the class $\E^\NP$. We study the $\SharpP$-dimension of classical circuit complexity classes and the $\GapP$-dimension of quantum circuit complexity classes. We also show that if one-way functions exist, then $\SharpP$-dimension is strictly more powerful than $\P$-dimension.
\end{abstract}

\newpage
\tableofcontents
\listoffigures
\newpage

\sectionnewpage
\section{Introduction}\label{sec:intro}

\hyphenation{
	re-source-bound-ed mea-sure
	compu-tational complex-ity theo-ry
	poly-nomial-time mea-sure
	poly-nomial-space mea-sure
}

Resource-bounded measures and dimensions \cite{Lutz:thesis,Lutz:CMCC,Lutz:AEHNC,Lutz:DCC,Athreya:ESDAICC} are fundamental tools for analyzing complexity classes, offering refined ways to understand the power and limitations of different computational resources \cite{Lutz:QSET,Hitchcock:FGCC,Lutz:TPRBM,AmbMay97}.
Traditional approaches based on real-valued martingales provide insights into classes like $\P$, $\NP$, $\PSPACE$, and $\EXP$, but they leave gaps when it comes to many intermediate complexity classes. This paper introduces counting martingales, which offer a new perspective on these intermediate classes by utilizing functions from counting complexity classes.

Our main contributions are:
\begin{enumerate}
	\item Counting Martingales and Counting Measures and Dimensions: We introduce {\em counting martingales}, which generalize traditional martingales by incorporating functions from
	      counting complexity classes \cite{Valiant79:Permanent,KoScTo89,FeFoKu94,Li93}, creating intermediate {\em counting measures and dimensions}.  For instance, we define $\SharpP$-measure, $\SpanP$-measure, and $\GapP$-measure as intermediate measures between $\P$-measure and $\PSPACE$-measure, allowing finer analysis of complexity classes.
	\item Applications to Circuit Complexity: Using these new measures, we provide novel results on nonuniform complexity and the Shannon-Lupanov bound \cite{Shannon49,Lupa58}. Shannon's  $(1-\epsilon)\twonn$ lower bound was improved by Lutz \cite{Lutz:AEHNC} who showed that for any $\alpha < 1$, almost all problems require circuits of size $\lutzbound$. We improve this result further by extending it to $\SpanP$-measure. In our proof, we construct a counting martingale using a connection through the Minimum Circuit Size Problem ($\MCSP$). Moreover, we study classical  and quantum circuit complexity classes using $\SharpP$-dimension and $\GapP$-dimension, respectively.
\end{enumerate}

The standard definitions of resource-bounded measure use martingales that are real-valued functions computable within deterministic time and space resource bounds \cite{Lutz:AEHNC}.
In this paper, we use the counting complexity classes $\SharpP$ \cite{Valiant79:Permanent}, $\SpanP$ \cite{KoScTo89}, and $\GapP$ \cite{FeFoKu94,Li93}
to introduce the concept of {\em counting martingales} and define {\em counting measures}, {\em counting dimensions}, and {\em counting strong dimensions} that are intermediate in power between the previous time- and space-bounded measures and dimensions. A $\SharpP$ function counts the number of accepting paths of a probabilistic Turing machine ($\PTM$), while a $\SpanP$ function counts the number of different outputs of a $\PTM$. Every $\SharpP$ function is also a $\SpanP$ function, and the classes are equal if and only if $\UP = \NP$ \cite{KoScTo89}.
A $\GapP$ function counts the difference between the number of accepting paths and the number of rejecting paths of a $\PTM$ \cite{FeFoKu94,Li93}. For more background on counting complexity, we refer to the surveys \cite{Schoning90,Fort97}.

A  martingale is
a function $d : \{0,1\}^* \to [0,\infty)$ such that \[d(w) =
	\frac{d(w0)+d(w1)}{2}\] for all $w \in \{0,1\}^*$.
We view a martingale as acting on Cantor Space $\C = \{0,1\}^\infty$ (the infinite binary tree). The value at any node is the average of the values below it. By induction, the value at any node is also the average of the values at any level of the subtree below the node: \[d(w) = \frac{1}{2^n} \sum_{x \in \{0,1\}^n} d(wx).\]
A martingale starts with a finite amount $d(\lambda)$ at the root. We may assume $d(\lambda) = 1$ without loss of generality.

Intuitively, because the average value of a martingale across $\{0,1\}^n$ is $d(\lambda) = 1$, a martingale is unable to obtain ``large'' values on ``many'' sequences. This intuition is formalized into characterizations of
Lebesgue measure \cite{Lebesgue1902}, Hausdorff dimension \cite{Haus19}, and packing dimension \cite{Athreya:ESDAICC} using martingales. A class $X \subseteq \C$ has measure 0 if and only if there is a martingale that attains unbounded values on all elements of $X$ \cite{Vill39}. A class $X \subseteq \C$ has Hausdorff dimension $s$ if $2^{(1-s)n}$ is the optimal infinitely-often growth rate of martingales on $X$ \cite{Lutz:DCC}.
A class $X \subseteq \C$ has packing dimension $s$ if $2^{(1-s)n}$ is the optimal almost-everywhere growth rate of martingales on $X$ \cite{Athreya:ESDAICC}.

Resource-bounded measure and dimension come from restricting these characterizations to martingales computable within some complexity class \cite{Lutz:AEHNC,Lutz:DCC,Athreya:ESDAICC}. The most used resource bounds are polynomial-time $(\p)$ and polynomial-space $(\pspace)$.
Generally, it is much easier to construct $\pspace$-martingales than it is $\p$-martingales \cite{Lutz:CWCPSC,Hitchcock:SDKCTHS}. For more background on resource-bounded measure and dimension we refer to the surveys \cite{Lutz:QSET,Lutz:TPRBM,AmbMay97,Hitchcock:FGCC,Lutz:EFD,Stull20,Mayordomo:EFDAIT,Mayordomo:EHD}.

\subsection{Counting Martingales}\label{subsec:intro_counting_martingales}
\begin{figure}
	\begin{center}
		\begin{tikzpicture}[level distance=1.5cm,
				level 1/.style={sibling distance=12cm},
				level 2/.style={sibling distance=6cm},
				level 3/.style={sibling distance=3cm},
				level 4/.style={sibling distance=1.5cm},scale=.47,
				every node/.style={draw, rectangle, align=center}]
			\node {$\frac{5}{16}$}
			child {node {$\frac{1}{2}$}
					child {node {$\tfrac{3}{4}$}
							child {node {$\tfrac{1}{2}$}
									child {node {0}}
									child {node [fill=green] {1}}
								}
							child {node {1}
									child {node [fill=green] {1}}
									child {node [fill=green] {1}}
								}
						}
					child {node {$\tfrac{1}{4}$}
							child {node {0}
									child {node {0}}
									child {node {0}}
								}
							child {node {$\tfrac{1}{2}$}
									child {node [fill=green] {1}}
									child {node {0}}
								}
						}
				}
			child {node {$\frac{1}{8}$}
					child {node {0}
							child {node {0}
									child {node {0}}
									child {node {0}}
								}
							child {node {0}
									child {node {0}}
									child {node {0}}
								}
						}
					child {node {$\frac{1}{4}$}
							child {node {$\frac{1}{2}$}
									child {node {0}}
									child {node [fill=green] {1}}
								}
							child {node {0}
									child {node {0}}
									child {node {0}}
								}
						}
				};
		\end{tikzpicture}
	\end{center}\caption{Cover Martingale Construction (Construction \ref{construction:cover_martingale})}\label{figure:martingale}
\end{figure}
Our main conceptual contribution is the introduction of {\em counting martingales}, which provide intermediate measure and dimension notions between $\P$ and $\PSPACE$ by using functions from counting complexity classes to define martingales. Many martingale constructions in complexity theory are expressed naturally as counting martingales.

For example, a standard technique for constructing a martingale is betting on a
cover (see Figure \ref{figure:martingale}) \cite{Lutz:QSET,AmbMay97}.
Given a set $A \subseteq \{0,1\}^*$ and a length $n \geq 0$, we choose $x \in \{0,1\}^n$ uniformly at random and define a martingale
$d_A$ by
\[d_A(w) = \PR{x \in \{0,1\}^n}{x \in A \mid w \prefix x} =
	\frac{| \{ x \in A \cap \{0,1\}^n \mid w \prefix x \} |}{2^{n - |w|}}\]
for all $w \in \{0,1\}^{\leq n}$. Strings of longer length have the value of their length-$n$ prefix. If we can construct an infinite family of such martingales that cover a class and the sum of their initial values converges, the Borel-Cantelli lemma applies to show the class has measure 0 \cite{DowHir:book}.

A key difference between space-bounded measure and time-bounded
measure is that a space-bounded martingale can enumerate a covering,
whereas a time-bounded martingale does not have time to do this.
The complexity of computing $d_A$ depends on the complexity of $A$ and the enumeration bottleneck.
Suppose that $A \in \P$.
Naively computing $d_A$ would involve an
enumeration of $A$, requiring polynomial space.
Computing $d_A$ in polynomial time would only be possible if we have some special
structure in $A$. For example, if $A$ is $\P$-rankable \cite{AllRub88}, then $d_A$ is
polynomial-time computable \cite{Hitchcock:DERC}. It is also possible
to approximately compute $d_A$ using an oracle from the
polynomial-time hierarchy \cite{Stoc85,Mayo94b,Hitchcock:DERC}.

The numerator in the definition of $d_A$ is in general a $\SharpP$
function if $A \in \P$. This is because we can use a $\PTM$ (Probabilistic Turing Machine) to count how many extensions of a string are in the cover. We call this a {\em $\SharpP$-martingale}. In general, a martingale is a $\SharpP$-martingale if it can be approximated by the
ratio of a $\SharpP$ function and a polynomial-time function.

The case when the cover $A \in \NP$ is also interesting, for example when $A$ is the Minimum Circuit Size Problem ($\MCSP$). Then the numerator is a $\SpanP$ function.
A martingale of this form is a {\em $\SpanP$-martingale.}

When we use a $\GapP$ function for the numerator of a martingale, we call it a {\em $\GapP$-martingale}. We will show that $\GapP$-martingales are capable of measuring quantum complexity classes. Until now, quantum complexity has not been addressed by resource-bounded measure and dimension.

We call $\SharpP$-martingales, $\SpanP$-martingales, and $\GapP$-martingales {\em counting martingales}.
Definition \ref{def:counting_martingales} contains the formal definitions of counting martingales.

\subsection{Counting Measures and Counting Dimensions}

A class $X$ has $\SharpP$-measure 0, written $\mu_{\SharpP}(X) = 0$,
if there is a $\SharpP$-martingale that succeeds on $X$.
Analogously, a class $X$ has $\SpanP$-measure 0, written $\mu_{\SpanP}(X) = 0$,
if there is a $\SpanP$-martingale that succeeds on $X$. Furthermore, a class $X$ has
$\GapP$-measure 0, written $\mu_{\GapP}(X) = 0$, if there is a $\GapP$-martingale that succeeds on $X$.
These {\em counting measures} are
intermediate between $\p$-measure and $\pspace$-measure \cite{Lutz:AEHNC}:
for every class $X \subseteq \C$,
\[\begin{array}{ccccccc}
		\mup(X) = 0 & \Rightarrow & \musharpp(X)=0  & \Rightarrow & \mugapp(X) = 0    \\
		            &             & \Downarrow      &             & \Downarrow        \\
		            &             & \muspanp(X) = 0 & \Rightarrow & \mupspace(X) = 0.\end{array}
\]We do not know of any relationship between $\muSpanP$ and $\muGapP$.
An individual problem $B$ is {\em $\Delta$-random} if no $\Delta$-martingale succeeds on $B$.
We show that $\UE$ problems are not $\SharpP$-random and $\NE$ problems are not $\SpanP$-random, where $\UE$ and $\NE$ are the exponential-time versions of $\UP$ and $\NP$.
When we have a proposition that is known in $\pspace$-measure, but open in
$\p$-measure, we can investigate it in $\SharpP$-measure, $\SpanP$-measure, or $\GapP$-measure.

We also introduce {\em counting dimensions}, $\SharpP$-dimension, $\SpanP$-dimension, and $\GapP$-dimension, written $\dimSharpP(X)$, $\dimSpanP(X)$, and $\dimGapP(X)$, respectively.
These dimensions analogously fall between $\P$-dimension and $\PSPACE$-dimension \cite{Lutz:DCC}. For all $X \subseteq \C$,
\newcommand{\rotleq}{\mathrel{\raisebox{1ex}{\rotatebox{270}{$\leq$}}}}
\[\begin{array}{ccccccc}
		0 & \leq & \dimpspace(X) & \leq & \dimGapP(X)                             \\
		  &      & \rotleq       &      & \rotleq                                 \\
		  &      & \dimSpanP(X)  & \leq & \dimSharpP(X) & \leq & \dimp(X) \leq 1.
	\end{array}
\]We do not know of any relationship between $\dimSpanP$ and $\dimGapP$.
We also develop {\em counting strong dimensions}, written $\DimSharpP(X)$, $\DimSpanP(X)$, and $\DimGapP(X)$, respectively.
These dimensions similarly fall between $\P$-strong dimension and $\PSPACE$-strong dimension \cite{Athreya:ESDAICC}.
For all $X \subseteq \C$,
\[\begin{array}{ccccccc}
		0
		 & \leq & \Dimpspace(X) & \leq & \DimGapP(X)                             \\
		 &      & \rotleq       &      & \rotleq                                 \\
		 &      & \DimSpanP(X)  & \leq & \DimSharpP(X) & \leq & \Dimp(X) \leq 1.
	\end{array}\]
Strong dimension is a more stringent criterion, requiring success to hold for almost all input lengths, rather than just for infinitely many. We work out the definitions and basic properties of counting measures and dimensions in Section \ref{sec:counting_martingales}.

\subsection{Our Techniques}

Many martingale constructions in complexity theory are expressed naturally as counting martingales.
These and other martingale constructions in this paper follow similar patterns.
In Section \ref{sec:constructions} we present five martingale constructions in a unified framework.

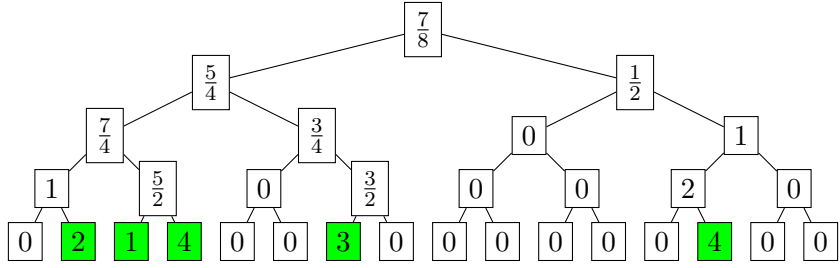
\begin{figure}
	\begin{center}
		\begin{tikzpicture}[level distance=1.5cm,
				level 1/.style={sibling distance=12cm},
				level 2/.style={sibling distance=6cm},
				level 3/.style={sibling distance=3cm},
				level 4/.style={sibling distance=1.5cm},scale=.47,
				every node/.style={draw, rectangle, align=center}]
			\node {$\tfrac{7}{8}$}
			child {node {$\frac{5}{4}$}
			child {node {$\tfrac{7}{4}$}
			child {node {1}child {node {0}}
			child {node [fill=green] {2}}
			}
			child {node {$\tfrac{5}{2}$}
					child {node [fill=green] {1}}
					child {node [fill=green] {4}}
				}
			}
			child {node {$\tfrac{3}{4}$}
					child {node {0}
							child {node {0}}
							child {node {0}}
						}
					child {node {$\tfrac{3}{2}$}
							child {node [fill=green] {3}}
							child {node {0}}
						}
				}
			}
			child {node {$\frac{1}{2}$}
					child {node {0}
							child {node {0}
									child {node {0}}
									child {node {0}}
								}
							child {node {0}
									child {node {0}}
									child {node {0}}
								}
						}
					child {node {$1$}
							child {node {$2$}
									child {node {0}}
									child {node [fill=green] {4}}
								}
							child {node {0}
									child {node {0}}
									child {node {0}}
								}
						}
				};
		\end{tikzpicture}
	\end{center}\caption{Conditional Expectation Martingale Construction (Construction \ref{construction:random_variable_martingale})}\label{figure:martingale_random_variable}
\end{figure}
\begin{enumerate}
	\item \textbf{Cover Martingale.} The Cover Martingale construction utilizes a cover set \( A \) and defines the martingale based on the conditional probability that an extension of the current string belongs to \( A \). If \( A \in \UP \), this construction produces a \(\SharpP\)-martingale; if \( A \in \NP \), it results in a \(\SpanP\)-martingale. If $A$ is in the counting complexity class $\SPP$, then this generates a $\GapP$-martingale. The class  $\SPP$ consists of all languages $L$ for which  there exists a $\GapP$ function $f$ such that if $x \in L$ then $f(x) = 1$, and $f(x) = 0$ otherwise \cite{FeFoKu94}. This approach is effective for capturing known martingale constructions in the literature, particularly in scenarios where membership within a subset can be determined with unique or nondeterministic witnesses. For further details, refer to Figure \ref{figure:martingale} and Construction \ref{construction:cover_martingale}.

	\item \textbf{Conditional Expectation Martingale.} This martingale generalizes the Cover Martingale by using a counting function \( f(x) \) as a random variable and taking its conditional expectation given the current prefix \( w \). This construction is adaptable to functions within \(\SharpP\), \(\SpanP\) and $\GapP$, making it effective for applications that require combining information from extensions of a given string. The Conditional Expectation Martingale averages the values of \( f \) over all extensions, providing a flexible tool. See Figure \ref{figure:martingale_random_variable} and Construction \ref{construction:random_variable_martingale} for an example and technical details.

	\item \textbf{Subset Martingale.} This martingale is designed to succeed on all infinite subsets of a language \( B \). If \( B \in \UP \), it produces a \(\SharpP\)-martingale, and if \( B \in \NP \), it yields a \(\SpanP\)-martingale. If $B \in \SPP$, then this is a $\GapP$-martingale. Unlike deterministic martingales that can focus directly on a language, the Subset Martingale is unique in its ability to succeed across all subsets of a language. This property allows us to conclude that \(\SharpP\)-random languages are not in \(\UE\), \(\SpanP\)-random languages are not in \(\NE\), and $\GapP$-random languages are not in $\SPE$ (the exponential version of $\SPP$). See Figure \ref{figure:subset_martingale} for an example and Construction \ref{construction:subset_martingale} for a formal definition.

	\item \textbf{Acceptance Probability Martingale.}
	      We also show that betting according to the acceptance probabilities of a $\PTM$ or $\QTM$ (quantum Turing machine) yields a counting martingale. Using this, we show that $\BPP$ has $\SharpP$-dimension 0 and $\BQP$  has $\GapP$-dimension 0. See Figure \ref{figure:acceptance_probability_martingale} for an example and Construction \ref{construction:probability_martingale_construction} for a formal definition.

	\item  \textbf{Bi-Immunity Martingale.} Mayordomo \cite{Mayo94} showed $\P$-random languages are $\E$-bi-immune and $\PSPACE$-random languages are $\ESPACE$-bi-immune using a construction that we call a bi-immunity martingale.
	      This construction is designed to succeed on all supersets of an infinite language. We show that this construction works as counting martingales to show $\SharpP$-random languages are $\UE \cap \coUE$-bi-immune, $\SpanP$-random languages are $\NE \cap \coNE$-bi-immune, and $\GapP$-random languages are $\SPE$-bi-immune. See Figure \ref{figure:biimunity_martingale} for an example and Construction \ref{construction:bi-immunity_martingale} for a formal definition.
\end{enumerate}

\noindent
{\bf Entropy Rates.} Hitchcock and Vinodchandran \cite{Hitchcock:DERC} showed a covering notion called the $\NP$-entropy rate is an upper bound for $\Deltapthree$-dimension, where $\Deltapthree = \P^{\SigmaPtwo}$. In Section \ref{subsec:entropy_rates}, we extend this by showing that $\SpanP$-dimension lies between $\Deltapthree$-dimension and the $\NP$-entropy rate. Informally, for a complexity class $\calC$ and $X \subseteq \C$, the $\calC$-entropy rate of $X$ is the infimum $s$ for which all elements of $X$ can be covered infinitely often by a language $A \in \calC$ that has $\frac{\log |A_{=n}|}{n} \leq s$ for all sufficiently large $n$. Intuitively, this corresponds to the compression rate when using $A$ as an implicit code, where it takes $\log |A_{=n}|$ bits to specify a member of $A_{=n}$.\\

\noindent
{\bf Kolmogorov Complexity.}
In Section \ref{subsec:kolmogorov_complexity}, we connect
$\SharpP$-measure and $\SharpP$-dimension to Kolmogorov complexity. For a time bound $t(n)$, $K^t(x)$ is the length of the shortest program that prints $x$ in $t(|x|)$ time on a universal Turing machine. In particular, we extend a result from Lutz \cite{Lutz:AEHNC} and show that $\{S \mid (\exists^\infty n) K^p(S \restr n) < n- f(n) \}$  has $\SharpP$-measure 0, where $p$ is a polynomial, and $\sum_{n=0}^\infty 2^{-f(n)}$ is a $\p$-convergent series. In other words, we prove that if $S$ is $\SharpP$-random, then $K^p(S \restr n) \geq n - f(n)$ almost everywhere. We also show that $\SharpP$-dimension is at most the polynomial-time Kolmogorov rate \cite{Hitchcock:phdthesis,Hitchcock:DERC}.
We build on recent work of Nandakumar, Pulari, Akhil S, and Sarma  \cite{NandakumarPulariSSarma24} on Kolmogorov complexity rates and polynomial-time dimension to show that if one-way functions exist, then $\SharpP$-dimension is distinct from $\P$-dimension.

\subsection{Our Results}

In Section \ref{sec:applications} we study the measure and dimension of classical and quantum circuit complexity classes.
Shannon \cite{Shannon49} showed that almost all Boolean functions on $n$ input bits require $(1-\epsilon)\twonn$-size circuits.
Lutz \cite{Lutz:AEHNC} strengthened this in two ways,
showing that the larger circuit complexity class
\[X_\alpha = \SIZEio\!\left(\frac{2^n}{n}\left(1+\frac{\alpha \log n}{n}\right)\right)\]
has $\pspace$-measure 0 for all $\alpha < 1$.
Frandsen and Miltersen \cite{FraMil05} strengthened the original upper bound of Lupanov \cite{Lupa58} to show that Lutz's bound is nearly tight: $X_\alpha$ contains all problems when $\alpha > 3$.

We improve Lutz's result to show that
$\muSpanP(X_\alpha) = 0$ for all $\alpha < 1$.
Lutz's proof extensively reuses polynomial space to consider only {\em novel} circuits, that compute a different function than any previously considered circuit. This proof does not adapt easily to our setting. In Section \ref{sec:entropy_rates_kolmogorov_complexity}, we introduce a measure notion called $\MNP$ that bridges the gap between
$\mupspace$ and $\muSpanP$ by utilizing the Minimum Circuit Size Problem ($\MCSP$) \cite{KabCai00}, allowing the construction of a counting martingale.
As a corollary, we conclude that
$X_\alpha$ has measure 0 in the third level
$\DeltaEthree = \E^\SigmaPtwo$ of the exponential-time hierarchy. While it was previously known how to construct a problem in $\DeltaEthree$ with maximum circuit-size complexity \cite{MiViWa99}, our result says most problems in $\DeltaEthree$ have nearly maximal circuit-size complexity.

Li \cite{Li24}, building on work of Korten \cite{Korten22} and Chen, Hirahara, and Ren \cite{ChenHiraharaRen24}, showed the first exponential-size circuit lower bound within the second level of the exponential-time hierarchy, that the symmetric alternation class $S^\E_2 \not\subseteq \SIZEio(\frac{2^n}{n}).$ We note that Li's proof extends to show
$S^\E_2 \not\subseteq X_\alpha$
for all $\alpha < 1$.
Under suitable derandomization assumptions (see Derandomization Hypothesis \ref{hypothesis:ENPtt_NPSVcircuits}), our $\DeltaEthree$ result improves by one level in the exponential hierarchy, showing $X_\alpha$ has measure 0 in
$\DeltaEtwo = \E^\NP$, for all $\alpha < 1$.

We also improve previous dimension results on circuit-size complexity \cite{Lutz:DCC,Hitchcock:DERC} and the density of hard sets \cite{Lutz:MSDHL,Lutz:DWCPAR,Fu95,Hitchcock:OLRBD}.
Additionally, we use $\GapP$-martingales to apply our counting measure framework to analyze quantum circuit complexity.  We show that the quantum circuit-size class $\BQSIZE\left(o\left(\frac{2^n}{n}\right)\right)$ has $\GapP$-dimension 0.
We also show that the class of
problems that $\Ppoly$-Turing
reduce to subexponentially dense sets has
$\SharpP$-measure 0.
\begin{figure}
	\begin{center}
		\begin{tabular}{|c|c|c|c|c|}
			\hline
			$\SIZEiolutzbound$                                & $\MNP$-measure 0                         & Theorem \ref{th:SIZEio_MNP_measure_0}                  \\
			\hline
			$\SIZEiolutzbound$                                & $\SpanP$-measure 0                       & Corollary \ref{co:SIZEio_SpanP_measure_0}              \\
			\hline
			$\SIZEiolutzbound$                                & $\Deltapthree$-measure 0                 & Corollary \ref{co:lutzbound_DeltaEthree}               \\
			\hline
			$\SIZEdimbound$                                   & $\SharpP$-strong dimension $\alpha$      & Theorem \ref{th:SIZE_dimSharpP}                        \\
			\hline
			$\Ppoly$                                          & $\SharpP$-strong dimension 0             & Corollary \ref{co:Ppoly_SharpP_dimension}              \\
			\hline
			$\SIZEiodimbound$                                 & $\SharpP$-dimension $\frac{1+\alpha}{2}$ & Theorem \ref{th:SIZEio_dimSharpP}                      \\
			\hline
			$\BQSIZE\left(o\left(\frac{2^n}{n}\right)\right)$ & $\GapP$-strong dimension 0               & Theorem \ref{th:BQSIZE_GapP_dimension_0}               \\
			\hline
			$\BQP/\poly$                                      & $\GapP$-strong dimension 0               & Corollary \ref{co:BQPpoly_has_GapP_strong_dimension_0} \\
			\hline
			$(\Ppoly)_\T(\DENSE^c)$                           & $\SharpP$-dimension 0                    & Theorem \ref{th:PPolyT_density_SharpP_dimension_0}     \\
			\hline
		\end{tabular}
	\end{center}
	\caption{Summary of Measure, Dimension, and Strong Dimension Results}\label{figure:summary_of_results}
\end{figure}

See Figure \ref{figure:summary_of_results} for a summary of our results.
We anticipate many further applications of counting measures to refine results where the $\pspace$-measure is known and the $\p$-measure is unknown. We discuss some of these directions and open questions in Section \ref{sec:conclusion}.

\subsection{Organization}

This paper is organized as follows. Section \ref{sec:preliminaries} covers preliminaries. Section \ref{sec:counting_martingales} introduces counting martingales, counting measures, and counting dimensions. In Section \ref{sec:constructions}, we detail the five constructions of counting martingales. Section \ref{sec:entropy_rates_kolmogorov_complexity} presents our tools on entropy rates and Kolmogorov complexity. Our primary applications on circuit complexity are presented in Section \ref{sec:applications}. Finally, Section \ref{sec:conclusion} provides concluding remarks and open questions.

\sectionnewpage
\section{Preliminaries}\label{sec:preliminaries}
The set of all finite binary strings is $\{0,1\}^*$.  The empty string
is denoted by $\lambda$.  We use the standard enumeration of binary
strings $s_0 = \lambda, s_1 = 0, s_2 = 1, s_3 = 00, \ldots$.  For two
strings $x,y \in \{0,1\}^*$, we say $x \leq y$ if $x$ precedes $y$ in
the standard enumeration and $x < y$ if $x$ precedes $y$ and is not
equal to $y$. Given two strings $x$ and $y$, we denote by $[x,y]$ the set of all strings $z$ such that $x\leq z \leq y$. Other types of intervals are defined similarly.
We write $x-1$ for the predecessor
of $x$ in the standard enumeration.  We use the notation $x \prefix y$ to say that
$x$ is a prefix of $y$.  The length of a string $x\in\strings$ is
denoted by $|x|$.

All {\em languages} (decision problems) in this paper are encoded as
subsets of $\{0,1\}^*$.  For a language $A \subseteq \strings$ and $n \geq 0$, we
define $A_{\leq n} = A \cap \{0,1\}^{\leq n}$ and $A_{=n} = A \cap
	\{0,1\}^n$.

The {\em Cantor space} of all infinite binary sequences is $\C$.  For a string $w \in \{0,1\}^*$, the {\em cylinder} $\C_w = w \cdot \C$ consists of all elements of $\C$ that begin with $w$.
We routinely identify a language $A \subseteq \{0,1\}^*$ with the element of Cantor space that is $A$'s characteristic sequence according to the
standard enumeration of binary strings.  In this way, each complexity
class is identified with a subset of Cantor space.  We write $A\restr
	n$ for the $n$-bit prefix of the characteristic sequence of $A$, and
$A[n]$ for the $\nth$-bit of its characteristic sequence.
We use $\log$ for the base 2 logarithm.

Our definitions of most complexity classes are standard \cite{AroraBarak09}. For any function $s : \N \to \N$, $\SIZE(s(n))$ is the class of all
languages $A$ where for all sufficiently large $n$, $A_{=n}$ can be decided
by a circuit with no more than $s(n)$ gates. We write $\SIZEio(s(n))$ for the class of all $A$ where $A_{=n}$ has an $s(n)$-size circuit for infinitely many $n$.

\subsection{Resource-Bounded Measure and Dimension}

Lutz used martingales to define resource-bounded measure
\cite{Lutz:AEHNC} and dimension \cite{Lutz:DCC}.  Athreya et al. \cite{Athreya:ESDAICC} defined strong dimension.
We review the basic definitions.
More background is
available in the survey papers
\cite{Lutz:QSET,Lutz:TPRBM,AmbMay97,Hitchcock:FGCC,Lutz:EFD,Stull20,Mayordomo:EFDAIT,Mayordomo:EHD}.

\begin{definition_named}{Martingale}
	A {\em martingale} is a function $d : \{0,1\}^* \to [0,\infty)$ such
	that for all $w \in \{0,1\}^*$, \[ d(w) = \frac{d(w0) + d(w1)}{2}. \]
	If we relax the equality to a $\geq$ inequality in the above equation, we call
	$d$ a {\em supermartingale}.
\end{definition_named}

\begin{definition_named}{Martingale Success} Let $d$ be a supermartingale or martingale.
	\begin{enumerateC}
		\item We say $d$ {\em succeeds on} a sequence $A \in \C$ if $\limsupn d(A\restr n) = \infty$.
		\item The {\em success set} $S^\infty[d]$ is the class of sequences that $d$ succeeds on.
		\item The {\em unitary success set} of $d$ is the set
		$S^1[d] = \{ A \in \C \mid (\exists n)\ d(A\restr n) \geq 1 \}.$
		\item For $s > 0$, we say  $d$  {\em $s$-succeeds on} $A$ if $(\exists^\infty n)\ d(A \restr n) \geq 2^{(1-s)n}$.
		\item For $s > 0$, we say  $d$  {\em $s$-strongly succeeds on} $A$ if $(\forall^\infty n)\ d(A \restr n) \geq 2^{(1-s)n}$.
		\item We say $d$ {\em $0$-succeeds on} $A$ if $d$ $s$-succeeds on $A$ for all $s > 0$.
		\item We say $d$ {\em $0$-strongly succeeds on} $A$ if $d$ $s$-strongly succeeds on $A$ for all $s > 0$.
		\item For $s \geq 0$, we say $d$ {\em $s$-succeeds on} a class $X \subseteq \C$ if $d$ $s$-succeeds on every member of $X$.
		\item For $s \geq 0$, we say $d$ {\em $s$-strongly succeeds on} a class $X \subseteq \C$ if $d$ $s$-strongly succeeds on every member of $X$.
	\end{enumerateC}
\end{definition_named}

In the following definition $\Delta$ can be any of the time or space resource bounds including $\P$ and
$\pspace$ considered by Lutz \cite{Lutz:AEHNC}, and their relativizations including $\Deltaptwo = \P^\NP$ and $\Deltapthree = \P^\SigmaPtwo$ \cite{Mayo94b,Hitchcock:DERC}.

\begin{definition_named}{Resource-Bounded Measure and Dimension} Let $\Delta$ be a resource bound and let $X
		\subseteq \C$.
	\begin{enumerateC}

		\item  A class $X \subseteq \C$ has {\em $\Delta$-measure 0}, and we write $\mu_\Delta(X) =
			0$, if there is a $\Delta$-computable martingale $d$ with $X
			\subseteq S^\infty[d]$.

		\item  A class $X \subseteq \C$ has {\em $\Delta$-measure 1}, and we write $\mu_\Delta(X) =
			1$, if $\mu_\Delta(X^c) = 0$, where $X^c$ is the complement of $X$ within $\C$.

		\item The {\em $\Delta$-dimension} of a class $X \subseteq \C$ is
		\[\dim_\Delta(X) = \inf \{ s \mid \exists\ \Delta\textrm{-martingale $d$ that $s$-succeeds on all of $X$} \}. \]

		\item The {\em $\Delta$-strong dimension} of a class $X \subseteq \C$ is
		\[ \Dim_\Delta(X) = \inf \{ s \mid \exists\ \Delta\textrm{-martingale $d$ that $s$-strongly succeeds on all of $X$} \}. \]

		\item The {\em $\Delta$-dimension} of a sequence $S \in \C$ is $\dim_\Delta(S) = \dim_\Delta(\{S\})$.
		\item The {\em $\Delta$-strong dimension} of a sequence $S \in \C$ is $\Dim_\Delta(S) = \Dim_\Delta(\{S\})$.

	\end{enumerateC}
\end{definition_named}
\noindent
We note that for all of the classical resource bounds, martingales and supermartingales are equivalent \cite{AmNeTe96}.

\subsection{Counting Complexity}

Valiant introduced $\SharpP$ in the seminal paper for counting complexity \cite{Valiant79:Permanent}.

\begin{definition_named}{$\SharpP$ \cite{Valiant79:Permanent}}
	Let $M$ be a polynomial-time probabilistic Turing machine that accepts or rejects on each computation path. The $\sharpP$ function computed by $M$ is defined as
	\[f(x) = \textrm{number of accepting computation paths of $M$ on input $x$}\]
	for all $x \in \{0,1\}^*$.
\end{definition_named}

K{\"o}bler, Sch{\"o}ning, and Toran \cite{KoScTo89} introduced
$\spanP$ as an extension of $\sharpP$.

\begin{definition_named}{$\SpanP$ \cite{KoScTo89}}  Let $M$ be a polynomial-time probabilistic
	Turing machine that on each computation path either outputs a string
	or outputs nothing.  The $\spanP$ function computed by $M$ is
	defined as \[f(x) = \textrm{number of distinct strings output by $M$ on
			input $x$}\] for all $x \in \{0,1\}^*$.
\end{definition_named}

Every $\sharpP$ function is also a $\spanP$ function.  K{\"o}bler et al. \cite{KoScTo89} showed that $\sharpP = \spanP$ if and only if $\UP=\NP$.
They also extended Stockmeyer's approximate counting \cite{Stoc85} of $\sharpP$ functions in polynomial-time with a $\Sigma^\P_2$ oracle to $\spanP$.
\begin{theorem_cite}{K{\"o}bler, Sch{\"o}ning, and Toran
		\cite{KoScTo89}}\label{th:spanP} Let $f \in \spanP$.  Then there is
	a function $g \in \Deltapthree$ such that for all $n$, for all $x
		\in \{0,1\}^n$,
	$(1-1/n) g(x) \leq f(x) \leq (1+1/n) g(x).$
\end{theorem_cite}

\theoremstyle{definition}
\newtheorem{Hypothesis}[theorem]{Hypothesis}
\newtheorem{DerandHypothesis}[theorem]{Derandomization Hypothesis}

Shaltiel and Umans \cite{ShaUma05} showed that under a derandomization
assumption, $\sharpP$ functions can be approximated by a deterministic
polynomial-time algorithm with nonadaptive access to an $\NP$ oracle. Hitchcock and Vinodchandran \cite{Hitchcock:DERC} noted this extends to $\SpanP$ functions.
\begin{DerandHypothesis}\label{hypothesis:ENPtt_NPSVcircuits}
	$\E^\NP_{\parallel}$ requires exponential-size SV-nondeterministic circuits.
\end{DerandHypothesis}
\noindent We refer to \cite{ShaUma05} for the details of Derandomization Hypothesis \ref{hypothesis:ENPtt_NPSVcircuits}, including equivalent hypotheses. We note that Derandomization Hypothesis \ref{hypothesis:ENPtt_NPSVcircuits} is true under more familiar hypotheses like $\NP$ does not have $\P$-measure 0 \cite{Hitchcock:DERC} or $\E$ requires exponential-size $\NP$-oracle circuits.

\begin{theorem_cite}{\cite{ShaUma05,Hitchcock:DERC}}\label{th:spanP_derand}
	If Derandomization Hypothesis \ref{hypothesis:ENPtt_NPSVcircuits} is true,
	then for any function $f \in \spanP$,
	there is a function $g$ computable in polynomial time with
	nonadaptive access to an $\NP$ oracle such that for all
	$n$, for all $x \in \{0,1\}^n$,
	$g(x) \leq f(x) \leq g(x) (1+1/n).$
\end{theorem_cite}

Fenner, Fortnow, and Kurtz \cite{FeFoKu94} and Li \cite{Li93} introduced the class $\GapP$.

\begin{definition_named}{$\GapP$ \cite{FeFoKu94,Li93}}
	Let $M$ be a polynomial-time probabilistic Turing machine that accepts or rejects on each computation path. The $\GapP$ function computed by $M$ is defined as
	\begin{eqnarray*}
		f(x) &=&
		\textrm{number of accepting computation paths of $M$ on input $x$}\\
		&&  -  \textrm{number of rejecting computation paths of $M$ on input $x$}
	\end{eqnarray*}
	for all $x \in \{0,1\}^*$.
\end{definition_named}
Equivalently, $\GapP$ is the closure of $\SharpP$ under subtraction \cite{FeFoKu94}. Note that every $\SharpP$ function is also a $\GapP$ function and $\P^\SharpP = \P^\GapP$. The class $\SPP$ consists of all languages $L$ where the characteristic function of $L$ is a $\GapP$ function.

\sectionnewpage
\section{Counting Martingales}\label{sec:counting_martingales}

In this section,  we define {\em counting martingales} and use them to define {\em counting measures and dimensions}.  We then work out their foundations including union lemmas, Borel-Cantelli lemmas, and measure conservation that will be used in later sections.

\subsection{Counting Martingales Definitions}

We define {\em counting martingales} as martingales that are the ratio of a counting function from $\SharpP$, $\SpanP$, or $\GapP$ and a polynomial-time function that is always a power of $2$. We consider both approximately computable and exactly computable martingales.

\begin{definition_named}{Counting Martingales} \label{def:counting_martingales}
	Let $\Delta \in \{\SharpP,\SpanP,\GapP\}$ be a counting resource bound.
	\begin{enumerateC}
		\item A {\em $\Delta$-martingale} is a martingale $d(w)$
		where there exist     $f \in \Delta$ and $g \in \FP$ with $g(w, r)$
		being a power of 2 for all $w \in \{0,1\}^*$, such that for all
		$w \in \{0,1\}^*$ and $r \in \N$,
		\[ \left|d(w) - \frac{f(w,r)}{g(w,r)}\right| \leq 2^{-r}. \]
		Here $r$ is encoded in unary.
		\item An {\em exact $\Delta$-martingale} is a martingale
		$$d(w) = \frac{f(w)}{g(w)},$$ where $f \in \Delta$, $g \in \FP$, and $g(w)$
		is a power of 2 for all $w \in \{0,1\}^*$.
	\end{enumerateC}
\end{definition_named}

\subsection{Counting Measures and Dimensions Definitions}

Analogous to the original definitions of resource-bounded measure \cite{Lutz:AEHNC}, we use counting martingales to define counting measures.

\begin{definition_named}{Counting Measure Zero}
	Let $\Delta \in \{\SharpP,\SpanP,\GapP\}$ be a counting resource bound.
	A class $X\subseteq \C$ has {\em $\Delta$-measure 0}, written $\mu_\Delta(X) = 0$, if there is a $\Delta$-martingale $d$ with $X \subseteq S^\infty[d]$.
\end{definition_named}

\begin{definition_named}{Counting Random Sequences}
	Let $\Delta \in \{\SharpP,\SpanP,\GapP\}$ be a counting resource bound.
	A sequence $S \in \C$ is {\em $\Delta$-random} if $\{S\}$ does not have $\Delta$-measure 0.
\end{definition_named}
Equivalently, $S$ is $\Delta$-random if no $\Delta$-martingale succeeds on $S$.
We similarly extend the definitions of resource-bounded dimension \cite{Lutz:DCC} using stricter notions of martingale success.

\begin{definition_named}{Counting Dimensions}
	Let $\Delta \in \{\SharpP,\SpanP,\GapP\}$ be a counting resource bound.
	\begin{enumerate}
		\item The {\em $\Delta$-dimension} of a class $X \subseteq \C$ is
		      $$\dim_\Delta(X) = \inf \{ s \mid \exists\ \Delta\textrm{-martingale $d$ that $s$-succeeds on all of $X$} \}.$$
		\item The {\em $\Delta$-strong dimension} of a class $X \subseteq \C$ is
		      $$\Dim_\Delta(X) = \inf \{ s \mid \exists\ \Delta\textrm{-martingale $d$ that $s$-strongly succeeds on all of $X$} \}. $$
		\item The {\em $\Delta$-dimension} of a sequence $S \in \C$ is $\dim_\Delta(X) = \dim_\Delta(\{S\})$.
		\item The {\em $\Delta$-strong dimension} of a sequence $S \in \C$ is $\Dim_\Delta(X) = \Dim_\Delta(\{S\})$.
	\end{enumerate}
\end{definition_named}
\noindent

The following relationships are immediate.

\begin{proposition} Let $\Delta \in \{\SharpP,\SpanP,\GapP\}$ be a counting resource bound and let $X \subseteq \C$.
	\begin{enumerate}
		\item $0 \leq \dim_\Delta(X) \leq \Dim_\Delta(X) \leq 1$.
		\item If $\dim_\Delta(X) < 1$, then $\mu_\Delta(X) = 0$.
	\end{enumerate}
\end{proposition}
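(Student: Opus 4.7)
The proposition is the standard comparison among dimension, strong dimension, and measure, adapted from the analogous results of Lutz and Athreya et al. to the counting setting. The plan is to verify each inequality directly from the definitions, taking advantage of the fact that constant functions and the zero martingale are trivially in every counting class $\Delta \in \{\SharpP, \SpanP, \GapP\}$ (indeed, any $\FP$ function is a $\SharpP$ function and hence also a $\SpanP$ and $\GapP$ function).

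For part 1, I would establish the chain of inequalities from the outside in. The bound $0 \leq \dim_\Delta(X)$ is immediate since the defining infimum ranges over $s > 0$ (the $s=0$ case is reduced to ``for all $s>0$''). For $\Dim_\Delta(X) \leq 1$, exhibit the constant martingale $d \equiv 1$, which is an exact $\FP$-computable martingale (hence a $\Delta$-martingale for every choice of $\Delta$) and which $s$-strongly succeeds on every $A \in \C$ for every $s \geq 1$, since $d(A \restr n) = 1 \geq 2^{(1-s)n}$ holds for all $n$. For the middle inequality $\dim_\Delta(X) \leq \Dim_\Delta(X)$, observe that $s$-strong success is defined by a universal quantifier over sufficiently large $n$, while $s$-success uses $\exists^\infty n$; the former implies the latter, so every $s$ appearing in the strong-dimension set also appears in the dimension set, whence the infimum can only decrease.

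For part 2, suppose $\dim_\Delta(X) < 1$. Then there exists some $s$ with $\dim_\Delta(X) < s < 1$ and a $\Delta$-martingale $d$ that $s$-succeeds on all of $X$. For every $A \in X$, the definition of $s$-success gives $d(A \restr n) \geq 2^{(1-s)n}$ for infinitely many $n$. Since $1 - s > 0$, the quantity $2^{(1-s)n}$ tends to infinity, so $\limsup_{n \to \infty} d(A \restr n) = \infty$, meaning $d$ succeeds on $A$ in the classical sense. Hence $X \subseteq S^\infty[d]$, which witnesses $\mu_\Delta(X) = 0$.

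I do not expect any obstacle here: all three statements are essentially unpacking of the definitions, and the only substantive ingredient is that the constant martingale $d \equiv 1$ is a $\Delta$-martingale, which follows because it can be written exactly as $f(w)/g(w)$ with $f \equiv 1 \in \FP \subseteq \Delta$ and $g \equiv 1 = 2^0$. The same arguments used by Lutz and by Athreya et al. in the $\P$- and $\PSPACE$-settings carry over verbatim, since the only property of the resource bound that is used is closure under the trivial martingale witnesses.
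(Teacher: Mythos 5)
Your proof is correct. The paper states this proposition without proof, remarking only that the relationships are immediate; your argument is exactly the routine verification that remark presupposes --- the constant martingale $d \equiv 1$ (an exact $\Delta$-martingale since $1 \in \FP \subseteq \Delta$) gives $\Dim_\Delta(X) \leq 1$, almost-everywhere success implies infinitely-often success for the middle inequality, and unwinding $s$-success for $s<1$ gives $\limsup_n d(A \restr n) = \infty$ for part 2.
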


In the following proposition, $\mu$ is Lebesgue measure \cite{Lebesgue1902}, $\dimh$ is Hausdorff dimension \cite{Haus19} and $\dimpack$ is packing dimension \cite{Tric82}.
(We use the notation $\dimpack$ to differentiate it from the polynomial-time dimensions $\dimp$ and $\Dimp$.)
\begin{proposition} \label{prop:basic_counting_measure_dimension_relationships}
	For all $X \subseteq \C$,
	\[\begin{array}{ccccccc}
			\mup(X) = 0 & \Rightarrow & \musharpp(X)=0  & \Rightarrow & \mugapp(X) = 0                               \\
			            &             & \Downarrow      &             & \Downarrow                                   \\
			            &             & \muspanp(X) = 0 & \Rightarrow & \mupspace(X) = 0 & \Rightarrow & \mu(X) = 0,
		\end{array}\]
	\[
		\begin{array}{ccccccc}
			0 \leq \dimh(X) & \leq & \dimpspace(X) & \leq & \dimGapP(X)                             \\
			                &      & \rotleq       &      & \rotleq                                 \\
			                &      & \dimSpanP(X)  & \leq & \dimSharpP(X) & \leq & \dimp(X) \leq 1,
		\end{array}
	\]
	and
	\[
		\begin{array}{ccccccc}
			0 \leq \dimpack(X) & \leq & \Dimpspace(X) & \leq & \DimGapP(X)                             \\
			                   &      & \rotleq       &      & \rotleq                                 \\
			                   &      & \DimSpanP(X)  & \leq & \DimSharpP(X) & \leq & \Dimp(X) \leq 1.
		\end{array}
	\]
\end{proposition}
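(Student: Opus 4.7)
The plan is to derive all the implications and inequalities from a single uniform principle: an inclusion $\Delta_1 \subseteq \Delta_2$ of underlying function classes lifts to a containment of the corresponding martingale classes, so any $\Delta_1$-martingale that succeeds (resp.\ $s$-succeeds, $s$-strongly succeeds) on $X$ is automatically a $\Delta_2$-martingale with the same success property. Applied to the definitions of counting measure zero and counting (strong) dimension, this yields the entire diagram in one sweep.

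First I would collect the function-class inclusions, most of which are either trivial or already noted earlier in the paper. $\FP \subseteq \SharpP$: given $f \in \FP$, a $\PTM$ deterministically computes $f(x)$ and then nondeterministically generates exactly $f(x)$ accepting paths by guessing an index and accepting iff it encodes an integer less than $f(x)$. $\SharpP \subseteq \SpanP$: make each accepting path of the $\SharpP$ machine output its own path description (rejecting paths output nothing), so distinct outputs equal accepting paths. $\SharpP \subseteq \GapP$: the paper observes that $\GapP$ is the closure of $\SharpP$ under subtraction. Finally, both $\SpanP$ and $\GapP$ functions are computable in polynomial space: $\GapP$ by brute-force enumeration of paths, $\SpanP$ by iterating over all candidate output strings of polynomial length and for each checking in $\NP \subseteq \PSPACE$ whether some path produces it.

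Next I lift these to martingale classes. If $d$ is a $\Delta_1$-martingale witnessed by a pair $(f,g)$ with $f \in \Delta_1$ and $g \in \FP$ always a power of $2$, and $\Delta_1 \subseteq \Delta_2$ as function classes, then the very same $(f,g)$ witnesses $d$ as a $\Delta_2$-martingale. In particular, every $\p$-computable martingale $d$ admits a dyadic approximation $\tilde{d}(w,r) = f(w,r)/2^r$ with $f \in \FP$; since a martingale with $d(\lambda) = 1$ satisfies $d(w) \leq 2^{|w|}$, the numerator $f$ is polynomially bounded, so $f \in \SharpP$ by the embedding above and $d$ is a $\SharpP$-martingale. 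Similarly every $\SharpP$-martingale is both a $\SpanP$- and a $\GapP$-martingale, and every $\SpanP$- or $\GapP$-martingale is a $\pspace$-martingale.

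Because the predicates ``$d$ succeeds on $A$,'' ``$d$ $s$-succeeds on $A$,'' and ``$d$ $s$-strongly succeeds on $A$'' depend only on the real values of $d$ and not on the computational witness, each martingale-class containment immediately yields the corresponding measure-zero implication and counting (strong) dimension inequality in the diagram. The remaining outer bounds $\mu_\pspace(X) = 0 \Rightarrow \mu(X) = 0$, $\dimh(X) \leq \dim_\pspace(X)$, and $\dimpack(X) \leq \Dim_\pspace(X)$ are the classical facts that resource-bounded measure and dimension refine their classical counterparts, following from the standard martingale characterizations of Ville \cite{Vill39}, Lutz \cite{Lutz:DCC}, and Athreya et al.\ \cite{Athreya:ESDAICC}. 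The only delicate point is the bookkeeping in the $\p \to \SharpP$ step --- ensuring the approximation has a single power-of-$2$ denominator and a polynomially bounded $\SharpP$ numerator --- but this is routine; everything else is immediate from the definitions.
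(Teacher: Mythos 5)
Your proposal is correct and follows essentially the same route as the paper: the key step in both is embedding the ($\FP$-computable, dyadic) numerator of a $\P$-martingale into $\SharpP$ by guessing an integer below its value, and then deriving everything else from function-class containments plus the classical martingale characterizations of Lebesgue measure, Hausdorff dimension, and packing dimension. (One cosmetic slip: the numerator $f(w,r)$ need not be \emph{polynomially bounded} in value --- it can be as large as $2^{|w|+r}$ --- but the $\FP\subseteq\SharpP$ embedding only needs the output to have polynomially many bits, so nothing breaks.)
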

\begin{proof}
	By the Exact Computation Lemma \cite{Lutz:WCE}, the
	outputs of a $\P$-martingale $d$ may be expressed as dyadic rationals of the form $\frac{n}{2^m}$.  It is easy to see that
	the numerator is computed by a $\SharpP$ function. The polynomial-time PTM $M(s,w)$ associated with the $\SharpP$ function takes input a string $s$ and witness $w$ and computes $d(s) = \frac{n}{2^m}$ in $\poly(s)$ time. If $w$ encodes an integer in $[1,n]$ with no leading zeros, then $M(s,w)$ accepts. Then $M(s,w)$ has $n$ accepting paths, and the denominator $2^m$ is polynomial-time computable. This implies that every exactly computable $\P$-martingale is also a $\SharpP$-martingale. The other relationships follow by complexity class containments and the martingale characterizations of Lebesgue measure \cite{Vill39}, Hausdorff dimension \cite{Lutz:DCC}, and packing dimension \cite{Athreya:ESDAICC}.
\end{proof}

\subsection{Basic Properties of Counting Martingales}

We first note that counting martingales are closed under finite sums, which implies finite unions of counting measure 0 sets have counting measure 0.
\begin{lemma}\label{le:finite_union}
	Let $\Delta \in \{\SharpP,\SpanP,\GapP\}$ be a counting resource bound.
	Exact $\Delta$-martingales are closed under finite sums.
\end{lemma}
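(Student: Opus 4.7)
The plan is to reduce to the case of a binary sum $d = d_1 + d_2$ by induction on the number of summands, then write each summand in its exact form $d_i(w) = f_i(w)/g_i(w)$ with $f_i \in \Delta$, $g_i \in \FP$, and $g_i(w)$ a power of $2$, and combine to obtain
\[
d(w) \;=\; \frac{f_1(w)\, g_2(w) + f_2(w)\, g_1(w)}{g_1(w)\, g_2(w)}.
\]
The denominator is a product of two $\FP$-computable powers of $2$, so it is itself a power of $2$ computable in $\FP$. The martingale identity for $d$ is immediate from linearity of the averaging operator, so the only real content is showing that the numerator is a $\Delta$ function.

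To handle the numerator, I would establish two closure facts for each $\Delta \in \{\SharpP, \SpanP, \GapP\}$: closure under addition, and closure under multiplication by an $\FP$-computable power of $2$. For $\SharpP$ this is routine: a PTM for $f_1 + f_2$ first nondeterministically guesses a bit selecting which $M_i$ to simulate; a PTM for $2^{k(x)} \cdot f(x)$ first computes $k(x)$ in $\FP$, then nondeterministically guesses $k(x)$ extra bits before simulating $M$, branching each accepting path into $2^{k(x)}$ copies. For $\GapP$, closure under addition and under multiplication is a standard result of Fenner, Fortnow, and Kurtz, and since the $\FP$ power of $2$ is itself a $\GapP$ function, both operations apply.

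The main obstacle is the $\SpanP$ case, since $\SpanP$ counts distinct outputs rather than accepting paths and so does not enjoy the same closure as $\SharpP$ or $\GapP$. For addition, given $\SpanP$ machines $M_1, M_2$ with (polynomially) padded outputs, build $M$ that nondeterministically picks $b \in \{0,1\}$, simulates $M_b$ to produce output $y$, and outputs $b \cdot y$; the two output sets are disjoint by the leading bit, so the distinct outputs number $f_1(x) + f_2(x)$. For multiplication by $2^{k(x)}$ with $k \in \FP$, build $M'$ that first computes $k(x)$, nondeterministically guesses $s \in \{0,1\}^{k(x)}$, simulates $M$ to produce $y$ (of a fixed polynomial length via padding), and outputs $s \cdot y$; the distinct outputs are exactly the Cartesian product $\{0,1\}^{k(x)} \times \{y : y \text{ output by } M(x)\}$, of size $2^{k(x)} \cdot f(x)$.

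With these two closures in hand, $f_1 g_2 + f_2 g_1 \in \Delta$, so $d = d_1 + d_2$ is an exact $\Delta$-martingale; induction then gives closure under arbitrary finite sums. The lemma on finite unions of $\Delta$-measure $0$ sets will follow by the standard argument of summing and rescaling the witnessing martingales so that the total initial value is bounded.
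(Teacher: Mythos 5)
Your proposal is correct and takes essentially the same route as the paper's proof: put the two martingales over a common power-of-two denominator and observe that the numerator stays in $\Delta$; the paper simply invokes ``closure properties of $\Delta$'' (and notes the slightly leaner form $d_1+d_2 = \bigl(f_1 + f_2\cdot\frac{g_1}{g_2}\bigr)/g_1$ when $g_2 \leq g_1$) where you spell those closures out explicitly, including the only delicate case, $\SpanP$, via tag bits and fixed-length prefixes to keep output sets disjoint and counts multiplicative.
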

\begin{appendixproof}[Proof of Lemma \ref{le:finite_union}]
	Let $d_1 = \frac{f_1}{g_1}$ and $d_2 = \frac{f_2}{g_2}$ be exact
	$\Delta$-martingales. We have
	\[ d_1(w)+d_2(w) = \frac{f_1(w)g_2(w)+f_2(w)g_1(w)}{g_1(w)g_2(w)}. \]
	The numerator is a $\Delta$-function by closure properties of $\Delta$ and the denominator is an $\FP$
	function that is always a power of 2.

	We can be more efficient because the denominators are powers of $2$. Suppose $g_2(w) \leq g_1(w)$. Then $\frac{g_1(w)}{g_2(w)}$ is a power
	of 2.
	Therefore \[d_1(w)+d_2(w) = \frac{f_1(w)+f_2(w)\frac{g_1(w)}{g_2(w)}}{g_1(w)}.\]
	The case $g_1(w) < g_2(w)$ is analogous.
\end{appendixproof}

\begin{corollary}\label{co:finite-unions}
	Let $\Delta \in \{\SharpP,\SpanP,\GapP\}$ be a counting resource bound and let
	$X, Y \subseteq \C$.
	\begin{enumerateC}
		\item If $\mu_{\Delta}(X) = 0$ and $\mu_{\Delta}(Y) = 0$, then $\mu_{\Delta}(X \cup Y) = 0.$
		\item $\dim_\Delta(X \cup Y) = \max\{ \dim_\Delta(X), \dim_\Delta(Y) \}.$
		\item $\Dim_\Delta(X \cup Y) = \max\{ \Dim_\Delta(X), \Dim_\Delta(Y) \}.$
	\end{enumerateC}

\end{corollary}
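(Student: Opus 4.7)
The plan is to reduce all three claims to a single closure property: $\Delta$-martingales are closed under finite sums. Lemma \ref{le:finite_union} already establishes this for exact $\Delta$-martingales, so first I would extend it to the approximately computable $\Delta$-martingales used in the definition of $\mu_\Delta$, $\dim_\Delta$, and $\Dim_\Delta$. Given $\Delta$-martingales $d_1, d_2$ with approximations $f_i(w,r)/g_i(w,r)$ within error $2^{-r}$, I would evaluate each approximant at precision $r{+}1$ and combine over a common denominator, namely $\tilde g(w,r) = g_1(w,r{+}1)\cdot g_2(w,r{+}1)$ with numerator $\tilde f(w,r) = f_1(w,r{+}1) g_2(w,r{+}1) + f_2(w,r{+}1) g_1(w,r{+}1)$. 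The denominator is an $\FP$ power of two, the numerator lies in $\Delta$ by the closure of $\SharpP$, $\SpanP$, and $\GapP$ under addition and multiplication by $\FP$ functions, and the triangle inequality gives $|(d_1+d_2)(w) - \tilde f/\tilde g| \leq 2^{-r}$, so $d_1 + d_2$ is a $\Delta$-martingale.

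For part (1), let $d_1$ and $d_2$ be $\Delta$-martingales succeeding on $X$ and $Y$, respectively, and put $d = d_1 + d_2$. Since $d_i \geq 0$, we have $d(w) \geq d_1(w)$ and $d(w) \geq d_2(w)$ pointwise, so any sequence on which either $d_i$ succeeds is also one on which $d$ succeeds. Hence $d$ is a $\Delta$-martingale witnessing $\mu_\Delta(X \cup Y) = 0$.

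For parts (2) and (3), the $\leq$ direction is the same dominance argument applied to $s$-success: for any $s$ strictly greater than both $\dim_\Delta(X)$ and $\dim_\Delta(Y)$ (respectively, the strong versions), choose $\Delta$-martingales $d_1, d_2$ that $s$-succeed (respectively, $s$-strongly succeed) on $X$ and $Y$; then $d_1 + d_2$ inherits the same form of success on $X \cup Y$ because $d_1 + d_2 \geq d_i$, and taking infima over such $s$ gives the upper bound. The $\geq$ direction is simply monotonicity: any $\Delta$-martingale witnessing $s$-(strong) success on $X \cup Y$ automatically does so on each of $X$ and $Y$, so $\dim_\Delta(X) \leq \dim_\Delta(X \cup Y)$ and $\dim_\Delta(Y) \leq \dim_\Delta(X \cup Y)$, and likewise for $\Dim_\Delta$.

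The only place where care is required, and thus the main ``obstacle,'' is the extension of Lemma \ref{le:finite_union} from exact to approximate $\Delta$-martingales, and specifically verifying that $\SpanP$ is closed under the multiplications appearing in $\tilde f$; this is standard (distinct outputs of one machine paired with distinct outputs of another remain distinct), but it should be noted explicitly. Everything else is pointwise domination plus set-inclusion monotonicity, and the general finite-union statements for (1)--(3) follow from the two-set case by induction.
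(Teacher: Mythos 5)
Your proposal is correct and follows the paper's intended route: the corollary is derived by summing the two witnessing martingales (Lemma \ref{le:finite_union}) and using pointwise domination for success, $s$-success, and $s$-strong success, with monotonicity giving the reverse inequalities in parts (2) and (3). Your extra step extending closure under sums from exact to approximate $\Delta$-martingales (via precision $r+1$ and a common power-of-two denominator) is a careful and correct filling-in of a detail the paper leaves implicit, since Lemma \ref{le:finite_union} is stated only for exact martingales while the measure and dimension definitions quantify over approximate ones.
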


Lutz showed that uniform countable unions of $\Delta$-measure 0 sets have $\Delta$-measure 0 for time and space resource bounds $\Delta$. This was proved by summing martingales. We establish an analogue for a uniform family of exact counting martingales.

\begin{definition_named}{Uniform Family of Exact Counting Martingales}
	Let $\Delta \in \{\SharpP,\SpanP,\allowbreak\GapP\}$ be a counting resource bound.
	We say that a family $\left(d_n=\frac{f_n}{g_n}\mid n \in \N\right)$ of exact $\Delta$-martingales is uniform if $(f_n \mid n \in \N)$ is uniformly $\Delta$-computable and $(g_n\mid n \in \N)$ is uniformly $\FP$.
\end{definition_named}

\begin{definition_named}{$\P$-convergence}
	A series $\sum_{n=0}^{\infty} a_n$ of nonnegative real numbers $a_n$ is $\P$-convergent if there is a polynomial-time function $m:\mathbb{N} \to \mathbb{N}$  with
	$\sum_{n=m(i)}^{\infty} a_n \leq 2^{-i} \quad \text{for all } i \in \mathbb{N}.$
	Such a function $m$ is  called a modulus of the convergence.
	A sequence
	$\sum_{k=0}^{\infty} a_{j,k} \quad (j = 0, 1, 2, \ldots)$
	of series of nonnegative real numbers is uniformly \( \Delta \)-convergent if there is a function \( m:\mathbb{N}^2 \to \mathbb{N} \) such that \( m \in \Delta \) and, for all \( j \in \mathbb{N} \), \( m_j \) is a modulus of the convergence of the series \( \sum_{k=0}^{\infty} a_{j,k} \).

\end{definition_named}

\begin{lemma_named}{Counting Martingale Summation Lemma}
	\label{lemma:counting-martingale-summation}
	Let $\Delta \in \{\SharpP,\SpanP,\GapP\}$ be a counting resource bound.
	Suppose $\left(d_n=\frac{f_n}{g_n}\mid n \in \N\right)$
	is a uniform family of
	exact $\Delta$-martingales with $\sum\limits_{n=0}^\infty d_n(w)$ uniformly
	$\p$-convergent for all $w \in \{0,1\}^*$.
	Then $d(w) = \sum\limits_{n=0}^\infty d_n(w)$ is a $\Delta$-martingale.
\end{lemma_named}
\begin{appendixproof}[Proof of Lemma \ref{lemma:counting-martingale-summation}]
	Let $m(w,r)$ be the modulus of $\p$-convergence for $\sum\limits_{n=0}^\infty d_n(w)$.
	Let $w \in \{0,1\}^*$ and $r \in \N$.
	Define
	\[ t(w,r) = \CCfont{max}(g_1(w),\ldots,g_{m(w,r)}(w)) = \lcm(g_1(w),\ldots,g_{m(w,r}(w)). \]
	Define
	\[ \hat{d}(w,r) = \sum_{n=0}^{m(w,r)} d_n(w). \]
	Then \[ |d(w) - \hat{d}(w,r)| = \sum_{n=m(w,r)+1}^\infty d_n(w) \leq 2^{-r}. \]
	We have
	\[ \hat{d}(w,r) = \frac{\sum\limits_{n=0}^{m(w,r)} f_n(w) \cdot \frac{t(w,r)}{g_n(w)}}{t(w,r)}. \]
	This is a ratio of a $\Delta$ function and an $\FP$ function that is a
	power of 2.
\end{appendixproof}

We now have our countable union lemmas.

\begin{lemma_named}{Counting Measure Union Lemma}
	\label{lemma:counting-measure-union}
	Let $\Delta \in \{\SharpP,\SpanP,\GapP\}$ be a counting resource bound.
	Suppose $\left(d_n=\frac{f_n}{g_n} \mid n \in \N\right)$ is a uniform family of
	exact $\Delta$-martingales with $\sum\limits_{n=0}^\infty d_n(w)$  uniformly
	$\p$-convergent for all $w \in \{0,1\}^*$.
	Then $\bigcup\limits_{n=0}^\infty S^\infty[d_n]$ has $\Delta$-measure 0.
\end{lemma_named}
\begin{proof}
	This is immediate from Lemma \ref{lemma:counting-martingale-summation}.
\end{proof}

\begin{lemma_named}{Counting Dimension Union Lemma}
	\label{lemma:counting-dimension-union}
	Let $\Delta \in \{\SharpP,\SpanP,\GapP\}$ be a counting resource bound and let $s > 0$.
	Suppose $\left(d_n=\frac{f_n}{g_n} \mid n \in \N\right)$ is a uniform family of
	exact $\Delta$-martingales with $\sum\limits_{n=0}^\infty d_n(w)$ uniformly
	$\p$-convergent for all $w \in \{0,1\}^*$.
	\begin{enumerate}
		\item Suppose $X_0, X_1, \ldots$ are classes where each $d_n$ $s$-succeeds on $X_n$.
		      Then $\bigcup\limits_{n=0}^\infty X_n$ has $\Delta$-dimension at most $s$.
		\item Suppose $X_0, X_1, \ldots$ are classes where each $d_n$ $s$-strongly succeeds on $X_n$.
		      Then $\bigcup\limits_{n=0}^\infty X_n$ has $\Delta$-strong dimension at most $s$.
	\end{enumerate}
\end{lemma_named}
\begin{proof}
	This is immediate from Lemma \ref{lemma:counting-martingale-summation}.
\end{proof}

\subsection{Borel-Cantelli Lemmas}

Lutz \cite{Lutz:AEHNC} proved a resource-bounded Borel-Cantelli Lemma. We now
present the counting measure version.

\begin{lemma_named}{Counting Measure Borel-Cantelli Lemma}\label{lemma:counting-measure-borel-cantelli}
	Let $\Delta \in \{\SharpP,\SpanP,\GapP\}$ be a counting resource bound.
	Suppose $(d_n=\frac{f_n}{g_n} \mid n \in \N)$ is a uniform family of exact
	$\Delta$-martingales with $\sum_{n=0}^\infty d_n(w)$ being uniformly
	$\Delta$-convergent for all $w \in \{0,1\}^*$.
	Then
	\[ \limsupn S^1[d_n] =
		\bigcap_{i=0}^\infty \bigcup_{j \geq i}^\infty S^1[d_j] = \{ S \in \C \mid (\exists^\infty n)\ S \in S^1[d_n] \}
	\] has $\Delta$-measure 0.
\end{lemma_named}
\begin{appendixproof}[Proof of Counting Measure Borel-Cantelli Lemma (Lemma \ref{lemma:counting-measure-borel-cantelli})]
	By Lemma \ref{lemma:counting-martingale-summation}, the exact $\Delta$-martingale family sums to $\Delta$-martingale $d$. Then $\bigcap\limits_{i=0}^\infty \bigcup\limits_{j \geq i}^\infty S^1[d_j] \subseteq S^\infty[d]$.
\end{appendixproof}

Analogously, we extend  Lutz's Borel-Cantelli lemma for time- and space-bounded dimension \cite{Lutz:DCC} to counting dimensions.

\begin{lemma_named}{Counting Dimension Borel-Cantelli Lemma}
	\label{lemma:counting-dimension-borel-cantelli}
	Let $\Delta \in \{\SharpP,\SpanP,\GapP\}$ be a counting resource bound and let $s > 0$.
	Suppose $\left(d_n=\frac{f_n}{g_n} \mid n \in \N\right)$ is a uniform family of
	exact $\Delta$-martingales with $d_n(\lambda) \leq 2^{(s-1)n}$.
	Then
	\[ \limsupn S^1[d_n] =
		\bigcap_{i=0}^\infty \bigcup_{j \geq i}^\infty S^1[d_j] = \{ S \in \C \mid (\exists^\infty n)\ S \in S^1[d_n] \}
	\] has $\Delta$-dimension at most $s$
	and
	\[ \liminfn S^1[d_n] =
		\bigcup_{i=0}^\infty \bigcap_{j \geq i}^\infty S^1[d_j] = \{ S \in \C \mid (\forall^\infty n)\ S \in S^1[d_n] \}
	\] has $\Delta$-strong dimension at most $s$.
\end{lemma_named}
\begin{proof}
	Let $t > s$ and $0 < \epsilon < t-s$ be rational numbers. Define $$d_n'(w) = 2^{\ceil{(1-t)n}}d_n(w)$$ for all $n \in \N$ and $w \in \{0,1\}^*$. Then $d_n'$ is a uniform family of exact $\Delta$-martingales with $$d_n'(\lambda) \leq
		2^{(s-1)n} 2^{(1-t)n+1} =
		2^{-(t-s)n+1} < 2^{-\epsilon n},$$
	with the last inequality holding for sufficiently large $n$.
	Therefore $d' = \sum\limits_{n=0}^\infty d_n'$ is a $\Delta$-martingale by Lemma \ref{lemma:counting-martingale-summation}. When $d_n(w) \geq 1$, we have $d'(w) \geq d_n'(w) \geq 2^{(1-t)n}$. Therefore $d'$ $t$-succeeds on $\limsupn S^1[d_n]$ and $d'$ $t$-strongly succeeds on $\liminfn S^1[d_n]$.
\end{proof}

\subsection{Measure Conservation}\label{sec:measure_conservation}

Lutz \cite{Lutz:AEHNC} defined a {\em constructor} to be a function $\delta : \{0,1\}^* \to \{0,1\}^*$ that properly extends its input string. The {\em result} $R(\delta)$ of constructor $\delta$ is the infinite sequence obtained by repeatedly applying $\delta$ to the empty string. For a resource bound $\Delta$, the {\em result class} $R(\Delta)$ is $R(\Delta) = \{ R(\delta) \mid \delta \in \Delta \}$.
Lutz's Measure Conservation Theorem \cite{Lutz:AEHNC} showed that $R(\Delta)$ does not have $\Delta$-measure 0 for the time and space resource bounds. Our counting measures do not fit into this framework. It is not clear what a $\SharpP$ constructor would be. The best measure conservation theorem we can prove using a constructor approach for counting measures is the following.
\begin{theorem}\label{th:counting_measure_conservation}
	\begin{enumerateC}
		\item $\E^{\SharpP}$ does not have $\SharpP$-measure 0.
		\item $\E^{\SpanP}$ does not have $\SpanP$-measure 0.
		\item $\E^{\SharpP}= \E^\GapP$ does not have $\GapP$-measure 0.
	\end{enumerateC}
\end{theorem}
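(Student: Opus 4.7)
My plan is to imitate Lutz's measure conservation argument, but with the constructor living in $\E^\Delta$ rather than in $\Delta$ itself, so that access to a $\Delta$ oracle compensates for the fact that there is no natural ``$\SharpP$ constructor.'' The key observation is that for any martingale $d$, at every node $w$ at least one of $d(w0), d(w1)$ satisfies $d(wb) \leq d(w)$, so greedy descent along the smaller child yields a sequence on which $d$ stays bounded. To compute such a descent for a $\Delta$-martingale, we need to evaluate $d$ approximately at $w0$ and $w1$, which is precisely what a $\Delta$ oracle enables.

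Fix $\Delta \in \{\SharpP, \SpanP, \GapP\}$ and let $d$ be a $\Delta$-martingale, approximated by $f/g$ with $f \in \Delta$, $g \in \FP$, and $|d(w) - f(w,r)/g(w,r)| \leq 2^{-r}$. I will define a constructor $\delta : \{0,1\}^* \to \{0,1\}^*$ that extends its input by one bit as follows: on input $w$ of length $k$, set $r = k+2$, query the oracle to get $f(w0,r), f(w1,r)$, evaluate the rationals $\hat d(wb) = f(wb,r)/g(wb,r)$, and append the bit $\hat b$ minimizing $\hat d(w\hat b)$. The result $R = R(\delta)$ is computable bit-by-bit: to decide $R[n]$, iterate $\delta$ a total of $n+1$ times starting from $\lambda$, which is a sequence of $n+1$ polynomial-time computations each using one pair of $\Delta$-oracle queries on strings of length at most $n$. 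The total time is $\poly(n) = 2^{O(|s_n|)}$, placing $R \in \E^\Delta$.

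Next I verify $R \notin S^\infty[d]$. Writing $w = R \restr k$ and $b^* = \arg\min_b d(wb)$, the triangle inequality gives
\[
d(R \restr (k+1)) = d(w\hat b) \leq \hat d(w \hat b) + 2^{-r} \leq \hat d(w b^*) + 2^{-r} \leq d(wb^*) + 2 \cdot 2^{-r} \leq d(w) + 2 \cdot 2^{-(k+2)}.
\]
Telescoping, $d(R \restr n) \leq d(\lambda) + \sum_{k \geq 0} 2^{-(k+1)} \leq d(\lambda) + 1$ for all $n$, so $d$ does not succeed on $R$. Since this holds for every $\Delta$-martingale, no $\Delta$-martingale succeeds on all of $\E^\Delta$, and thus $\mu_\Delta(\E^\Delta) \neq 0$. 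For part (3), the equality $\E^{\SharpP} = \E^{\GapP}$ follows from $\P^{\SharpP} = \P^{\GapP}$ (since $\GapP$ is the closure of $\SharpP$ under subtraction and $\SharpP \subseteq \GapP$), so the same constructor, with a $\GapP$ oracle, witnesses both equalities simultaneously.

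The only subtlety I anticipate is bookkeeping for $\GapP$: the underlying function $f$ can be negative, but the martingale $f/g$ is guaranteed nonnegative, so the comparison of the two approximations $\hat d(w0)$ and $\hat d(w1)$ works verbatim; the argument above never uses positivity of $f$, only of $d$. Beyond that, the proof is routine once the precision parameter $r$ is chosen to make the error series summable, which is why I set $r = k+2$ at depth $k$.
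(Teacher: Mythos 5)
Your proof is correct and follows essentially the same diagonalization as the paper's: greedily extend along the child on which the martingale does not increase, computing each bit of the resulting sequence in $\E^\Delta$ via oracle calls to the counting function. You are in fact somewhat more careful than the paper's own proof, which glosses over the fact that a $\Delta$-martingale is only approximately computable; your precision schedule $r = k+2$ and the telescoping bound $d(R \restr n) \leq d(\lambda) + 1$ handle that point cleanly.
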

\begin{appendixproof}[Proof of Theorem \ref{th:counting_measure_conservation}]
	Let $d$ be a $\SharpP$ martingale. We recursively construct a language
	$L\in\E^{\SharpP}$ that $d$ does not succeed on. Let $x$ be any length $n$
	string and $w$ be the characteristic string for all the strings that
	lexicographically come before $x$. Now we specify the characteristic bit of
	$x$. The string $x$ belongs to $L$ if and only if $d(w1) < d(w0)$. Since
	$d(wb)$ can be computed by a call to a $\SharpP$  oracle and computing a
	polynomial-time function on a length $\Theta(2^n)$ string, we can decide any $x$ in $\E^\SharpP$. Since $d$ cannot grow on
	$L\in\E^\SharpP$, it follows that $\E^\SharpP$ does not have $\SharpP$-measure 0. If $d$ is a $\SpanP$ martingale, we can construct a language $L\in\E^{\SpanP}$. If $d$ is a $\GapP$ martingale, we can construct a language $L\in\E^{\GapP} = \E^{\SharpP}$.
\end{appendixproof}

In the proof of Theorem \ref{th:counting_measure_conservation} the constructor we obtain from a $\SharpP$-martingale is computable in $\P^\SharpP$, resulting in the $\E^\SharpP$ upper bound. We will use approximate counting to improve this to the class $\DeltaEthree = \E^\SigmaPtwo$. By padding Toda's theorem \cite{Toda91,Book74b}, $\DeltaEthree \subseteq \E^\SharpP$. Under suitable derandomization assumptions, we get an improvement to $\DeltaEtwo = \E^\NP$.
The results in the remainder of this section hold not only for $\SharpP$ but for the larger class $\SpanP$. It is open whether $\GapP$ can be approximately counted in the same way, so Theorem \ref{th:counting_measure_conservation} is the best we have for $\GapP$.

\begin{lemma}\label{le:spanp_to_deltapthree_martingale}
	\begin{enumerateC}
		\item For every $\SpanP$-martingale $d$, there is a $\Deltapthree$-supermartingale $d'$ and a $\gamma > 0$ such that $d'(w) \geq \gamma d(w)$ for all $w \in \{0,1\}^*$.
		\item If Derandomization Hypothesis \ref{hypothesis:ENPtt_NPSVcircuits} is true, then for every $\SpanP$-martingale $d$, there is a $\Deltaptwo$-supermartingale $d'$ and a $\gamma > 0$ such that $d'(w) \geq \gamma d(w)$ for all $w \in \{0,1\}^*$.
	\end{enumerateC}
\end{lemma}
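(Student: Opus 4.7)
The plan is to convert the $\SpanP$-martingale $d$ into a $\Deltapthree$-supermartingale $d'$ that dominates it multiplicatively, using approximate counting to handle the $\SpanP$ numerator of $d$. Starting from the representation $|d(w) - f(w,r)/g(w,r)| \leq 2^{-r}$ with $f \in \SpanP$ and $g \in \FP$, I would fix a polynomial $r(n)$ (e.g., $r(n) = n^2$) so that $2^{-r(n)}$ dwarfs $2^{-n}$, and apply Theorem \ref{th:spanP} with sufficiently fine polynomial precision $\epsilon_n = 1/(Cn^2)$ to obtain $\tilde{f} \in \Deltapthree$ with $(1-\epsilon_n) \tilde{f}(w,r) \leq f(w,r) \leq (1+\epsilon_n) \tilde{f}(w,r)$. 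Setting $\tilde{d}(w) = \tilde{f}(w, r(|w|))/g(w, r(|w|))$ yields a $\Deltapthree$-computable function that approximates $d(w)$ within a multiplicative factor $(1 \pm O(\epsilon_n))$ up to an additive $O(2^{-r(n)})$ error.

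Since $\tilde{d}$ is not itself a supermartingale (the approximation inflates conditional averages both multiplicatively and additively), I would correct it by defining $d'(w) = \beta_{|w|}\tilde{d}(w) + \zeta \cdot 2^{-|w|}$, where $\beta_n$ is a decreasing sequence with $\beta_{n+1} = \beta_n/(1+C'\epsilon_n)$ and $\zeta > 0$ is a small constant. The martingale equation $d(w) = (d(w0)+d(w1))/2$ together with the multiplicative bounds gives $(\tilde{d}(w0)+\tilde{d}(w1))/2 \leq (1+O(\epsilon_n)) \tilde{d}(w) + O(2^{-r(n)})$; multiplying by $\beta_{n+1}$ and matching to $\beta_n \tilde{d}(w)$ yields the supermartingale inequality, provided the $\zeta \cdot 2^{-|w|}$ geometric term absorbs the leftover $O(2^{-r(n)})$ additive slack, which holds automatically because $r(n) \gg n$. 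Summability $\sum_n \epsilon_n < \infty$ ensures $\beta_n \to \beta^* > 0$, so $d'(w) \geq \beta_{|w|}(1-\epsilon_{|w|}) d(w) \geq \gamma d(w)$ with $\gamma = \beta^*(1 - \sup_n \epsilon_n) > 0$; computability of $\beta_n$ is routine since each truncated product has polynomial bit length.

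For part (2), the only change is to replace Theorem \ref{th:spanP} by Theorem \ref{th:spanP_derand}, under which the approximation $\tilde{f}$ is computable in polynomial time with nonadaptive $\NP$ queries and hence lies in $\Deltaptwo$; the rest of the construction is identical and produces a $\Deltaptwo$-supermartingale. The main technical obstacle will be coordinating the two distinct approximation errors --- multiplicative from approximate counting and additive from the $\SpanP$-martingale definition --- so that both the supermartingale inequality and the pointwise domination $d' \geq \gamma d$ hold at every node, including nodes where $d(w)$ is very small or even zero. Splitting $d'$ into a scaled-approximation piece $\beta_{|w|}\tilde{d}(w)$ and a geometric perturbation $\zeta \cdot 2^{-|w|}$ cleanly separates these concerns: the decreasing $\beta_n$ copes with the multiplicative distortion, while the geometric perturbation handles the additive tail.
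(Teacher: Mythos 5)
Your proposal is correct, and it rests on the same key lemma as the paper's proof: K{\"o}bler--Sch{\"o}ning--Toran approximate counting (Theorem \ref{th:spanP}, resp.\ Theorem \ref{th:spanP_derand} for part 2) applied to the $\SpanP$ numerator, followed by a multiplicative damping factor that restores the supermartingale inequality while staying bounded below by a constant $\gamma$. Where you differ is in the error bookkeeping, and your version is in two respects more complete. First, the paper works with $d = f/g$ as if the $\SpanP$-martingale were exact, whereas you also account for the additive $2^{-r}$ slack in Definition \ref{def:counting_martingales}; your geometric perturbation $\zeta \cdot 2^{-|w|}$ absorbs both that slack and the additive residue left after matching the multiplicative factors, and since $r(n) = n^2 \gg n$ the required inequality $\zeta 2^{-(n+1)} \geq O(2^{-r(n)})$ holds uniformly with a fixed $\zeta$. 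Second, the paper fixes a single precision $\epsilon_n = 1/n$ per target length $n$ and obtains, for each $n$, a supermartingale $d_n$ whose damping factor $\bigl(\frac{1-\epsilon_n}{1+\epsilon_n}\bigr)^{|v|}$ is controlled only up to level $n$ (the factor $(1-\frac{2}{n+1})^n \to e^{-2}$), yielding a per-length family rather than, literally, one $d'$ dominating $\gamma d$ at every node; your choice of summable precisions $\epsilon_m = 1/(Cm^2)$ makes the telescoping product $\beta_n$ converge to a positive limit, so a single global supermartingale comes out directly, which is exactly what the lemma statement asks for. The cost of your route is only the (routine) verification that the exact rationals $\beta_n$ have polynomially many bits and that the recursion $\beta_{n+1} = \beta_n/(1+C'\epsilon_n)$ is calibrated so that $\beta_{n+1}\frac{1+\epsilon_n}{1-\epsilon_{n+1}} \leq \beta_n$; both are as easy as you claim. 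One small correction: your final display $d'(w) \geq \beta_{|w|}(1-\epsilon_{|w|})d(w)$ is not quite right as stated, since the lower bound on $\tilde{d}(w)$ carries a $-2^{-r(|w|)}$ term; the conclusion still holds because $\zeta 2^{-|w|}$ dominates $\gamma 2^{-r(|w|)}$, but that term should be invoked explicitly in the domination step, not only in the supermartingale step.
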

\begin{appendixproof}[Proof of Lemma \ref{le:spanp_to_deltapthree_martingale}]
	Let $d = \frac{f}{g}$ be a $\SpanP$-martingale.
	\newcommand{\epsn}{\epsilon}
	Let $h \in \Deltapthree$ be the approximation of $f$ from Theorem
	\ref{th:spanP}.

	\renewcommand{\epsn}{\epsilon_n}
	For each $n$, let $\epsn = \frac{1}{n}$ and define a function $d_n$ by
	\[ d_n(v) =
		\frac{h(v)}{g(v)} \left(\frac{1-\epsn}{1+\epsn}\right)^{n} \]
	for all $v \in \{0,1\}^{\geq n}$ and
	$d_n(v) = d_n(v \restr n)$
	for all $v$ with $|v| > n$. Then $d_n$ is $\Deltapthree$ exactly computable.
	For $v \in \{0,1\}^{<n}$, we have
	\begin{eqnarray*}
		\left(\frac{h(v0)}{g(v0)}+\frac{h(v1)}{g(v1)}\right)
		&\leq& \left(\frac{f(v0)}{g(v0)}+\frac{f(v1)}{g(v1)}\right)\frac{1}{1-\epsn} \\
		&=& \left(d(v0)+d(v1)\right)\frac{1}{1-\epsn} \\
		&=&2d(v)\frac{1}{1-\epsn}\\
		&=&\frac{2f(v)}{g(v)}\frac{1}{1-\epsn}\\
		&\leq& \frac{2h(v)}{g(v)}\frac{1+\epsn}{1-\epsn}.
	\end{eqnarray*}
	Therefore
	\begin{eqnarray*}
		d_n(v0) + d_n(v1)
		&=&
		\left(\frac{h(v0)}{g(v0)}+\frac{h(v1)}{g(v1)}\right)\left(\frac{1-\epsn}{1+\epsn}\right)^{n+1} \\
		\\&\leq&\frac{2h(v)}{g(v)}\frac{1+\epsn}{1-\epsn}
		\left(\frac{1-\epsn}{1+\epsn}\right)^{n+1} \\
		&=&
		\frac{2h(v)}{g(v)}
		\left(\frac{1-\epsn}{1+\epsn}\right)^{n}\\
		&=& 2 d_n(v),
	\end{eqnarray*}
	for all $v \in \{0,1\}^{< n}$, so $d_n$ is a supermartingale.

	Let $v \in \{0,1\}^n$. Since
	\[  \left(\frac{1-\epsn}{1+\epsn}\right)^{n} = \left(1-\frac{2}{n+1}\right)^n \to \frac{1}{e^2}
	\]as $n \to \infty$, let $\gamma \in (0,\tfrac{1}{e^2})$.
	We have
	\[ d_n(v) = d(v) \left(\frac{1-\epsn}{1+\epsn}\right)^{n} \geq \gamma d(v).
	\]
	when $n$ is sufficiently large.

	For part 2, under Derandomization Hypothesis \ref{hypothesis:ENPtt_NPSVcircuits}, we obtain the approximation  $h \in \Deltaptwo$ from Theorem \ref{th:spanP_derand} and follow the same proof.
\end{appendixproof}

The following two theorems and their corollaries are immediate from the previous lemma.

\begin{theorem}\label{th:counting_measure_conservation_deltaEthree}
	Let $X \subseteq \C$.
	\begin{enumerateC}
		\item If $\muSpanP(X) = 0$, then $\muDeltaPthree(X) = 0$.
		\item $\dim_\DeltaPthree(X) \leq \dim_\SpanP(X).$
		\item $\Dim_\DeltaPthree(X) \leq \Dim_\SpanP(X).$
	\end{enumerateC}
\end{theorem}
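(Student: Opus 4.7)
The plan is to invoke Lemma~\ref{le:spanp_to_deltapthree_martingale} directly in all three parts; the lemma does essentially all the work, and the remaining task is to track how the multiplicative constant $\gamma > 0$ interacts with each notion of martingale success.

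For Part 1, I would take any $\SpanP$-martingale $d$ witnessing $\muSpanP(X) = 0$ and invoke the lemma to obtain a $\Deltapthree$-supermartingale $d'$ and a constant $\gamma > 0$ with $d'(w) \geq \gamma d(w)$ for every $w$. Unbounded success is preserved trivially: for any $A \in X$, $\limsupn d(A\restr n) = \infty$ implies $\limsupn d'(A\restr n) \geq \gamma \limsupn d(A\restr n) = \infty$. Since supermartingales and martingales are equivalent for classical time and space resource bounds \cite{AmNeTe96}, $d'$ can be replaced by a $\Deltapthree$-martingale with the same success set, yielding $\muDeltaPthree(X) = 0$.

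For Parts 2 and 3, I would fix any rational $s > \dim_\SpanP(X)$ (respectively $s > \Dim_\SpanP(X)$) together with a $\SpanP$-martingale $d$ that $s$-succeeds (respectively $s$-strongly succeeds) on $X$, and apply the lemma again to produce $d' \geq \gamma d$. For each $A \in X$, on the infinite (respectively cofinite) set of lengths $n$ where $d(A\restr n) \geq 2^{(1-s)n}$, we get $d'(A\restr n) \geq \gamma \cdot 2^{(1-s)n}$. For any rational $s' > s$, the ratio $\gamma \cdot 2^{(1-s)n}/2^{(1-s')n} = \gamma \cdot 2^{(s'-s)n}$ exceeds $1$ once $n$ is large enough, so $d'(A\restr n) \geq 2^{(1-s')n}$ holds on the same ``infinitely often'' (respectively ``almost everywhere'') set, modulo a finite initial segment that does not affect either notion. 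Hence $d'$ $s'$-succeeds (respectively $s'$-strongly succeeds) on $X$, and taking the infimum over $s' > s$ and then over $s$ yields the stated inequalities.

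I do not anticipate any substantive obstacle: all the quantitative work is embedded in Lemma~\ref{le:spanp_to_deltapthree_martingale}, and the only subtlety is the routine observation that a fixed positive multiplicative constant is absorbed both by the $\limsup = \infty$ definition of measure-zero success and by the exponential threshold $2^{(1-s)n}$ appearing in the definitions of dimension and strong dimension. The supermartingale-to-martingale passage for the classical bound $\Deltapthree$ is standard.
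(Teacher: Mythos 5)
Your proposal is correct and matches the paper exactly: the paper states that this theorem is ``immediate from the previous lemma'' (Lemma~\ref{le:spanp_to_deltapthree_martingale}), and your write-up supplies precisely the routine details that phrase elides --- absorbing the constant $\gamma$ into the $\limsup=\infty$ and $2^{(1-s)n}$ success criteria, and invoking the standard supermartingale-to-martingale equivalence for the classical bound $\Deltapthree$.
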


\begin{corollary}\label{co:DeltaEthree_does_not_have_SpanP_measure_0}
	$\DeltaEthree$ does not have $\SpanP$-measure 0.
\end{corollary}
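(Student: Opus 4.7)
The plan is to derive this as an immediate contrapositive of Theorem \ref{th:counting_measure_conservation_deltaEthree}(1) combined with the known measure conservation for the resource bound $\Deltapthree$. Specifically, I would argue as follows: suppose for contradiction that $\muSpanP(\DeltaEthree) = 0$. Then by Theorem \ref{th:counting_measure_conservation_deltaEthree}(1), we also have $\muDeltaPthree(\DeltaEthree) = 0$. But the extension of Lutz's Measure Conservation Theorem to the $\Deltapthree$ resource bound (as in \cite{Mayo94b,Hitchcock:DERC}) guarantees $\muDeltaPthree(\DeltaEthree) \neq 0$, a contradiction.

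In more detail, the only ingredient being invoked beyond what is already established in the excerpt is that $\DeltaEthree$ does not have $\Deltapthree$-measure 0. This follows from the standard constructor argument: if $d$ is a $\Deltapthree$-martingale, one builds a language $L$ bit-by-bit by, at each stage with characteristic prefix $w$, adding the next string to $L$ iff $d(w1) < d(w0)$. Because a $\Deltapthree$-martingale is computable in polynomial time with a $\SigmaPtwo$ oracle, this recursive procedure runs in $\E^{\SigmaPtwo} = \DeltaEthree$, and $d$ fails to succeed on $L$. Hence $\mu_\Deltapthree(\DeltaEthree) \neq 0$.

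Combining these two facts yields the corollary. The argument is really just ``lift $\SpanP$ to $\Deltapthree$ via Theorem \ref{th:counting_measure_conservation_deltaEthree}, then apply classical measure conservation'', and there is no serious obstacle once Lemma \ref{le:spanp_to_deltapthree_martingale} and the $\Deltapthree$-level conservation theorem are in hand; the latter is an essentially routine adaptation of Lutz's original proof using that $\Deltapthree$-martingales are computable within $\E^{\SigmaPtwo}$.
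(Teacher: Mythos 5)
Your proposal is correct and follows essentially the same route as the paper: the corollary is stated there as an immediate consequence of Lemma \ref{le:spanp_to_deltapthree_martingale} via Theorem \ref{th:counting_measure_conservation_deltaEthree}, combined with the standard measure conservation theorem for the $\Deltapthree$ resource bound (whose constructor-based proof, which you correctly sketch, is the same diagonalization the paper itself uses in Theorem \ref{th:counting_measure_conservation}). The only point worth noting is that Lemma \ref{le:spanp_to_deltapthree_martingale} yields a $\Deltapthree$-\emph{super}martingale, but as the paper remarks, martingales and supermartingales are equivalent for the classical resource bounds, so your argument goes through unchanged.
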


\begin{theorem}\label{th:counting_measure_conservation_deltaEthree_derand}
	Assume Derandomization Hypothesis \ref{hypothesis:ENPtt_NPSVcircuits}.
	Let $X \subseteq \C$.
	\begin{enumerateC}
		\item If $\muSpanP(X) = 0$, then $\mu_\DeltaPtwo(X) = 0$.
		\item   $\dim_\DeltaPtwo(X) \leq \dim_\SpanP(X).$
		\item   $\Dim_\DeltaPtwo(X) \leq \Dim_\SpanP(X).$
	\end{enumerateC}
\end{theorem}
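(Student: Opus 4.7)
The plan is to derive this theorem as an immediate consequence of Part 2 of Lemma \ref{le:spanp_to_deltapthree_martingale}, in direct parallel with how Theorem \ref{th:counting_measure_conservation_deltaEthree} follows from Part 1 of that lemma. Under Derandomization Hypothesis \ref{hypothesis:ENPtt_NPSVcircuits}, that lemma converts any $\SpanP$-martingale $d$ into a $\Deltaptwo$-supermartingale $d'$ together with a constant $\gamma > 0$ satisfying $d'(w) \geq \gamma\, d(w)$ for every $w \in \{0,1\}^*$. Since supermartingales and martingales are equivalent for the classical time-bounded resource bounds (as noted after the definition of resource-bounded measure, citing \cite{AmNeTe96}), we may in fact take $d'$ to be a $\Deltaptwo$-martingale.

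For part 1, suppose $\muSpanP(X) = 0$ and pick a $\SpanP$-martingale $d$ with $X \subseteq S^\infty[d]$. For every $A \in X$, the domination $d'(A\restr n) \geq \gamma\, d(A \restr n)$ together with $\limsup_n d(A \restr n) = \infty$ gives $\limsup_n d'(A \restr n) = \infty$, so $X \subseteq S^\infty[d']$ and therefore $\mu_\Deltaptwo(X) = 0$.

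For parts 2 and 3, fix any rational $s > \dim_\SpanP(X)$ (respectively $s > \Dim_\SpanP(X)$) and take a $\SpanP$-martingale $d$ that $s$-succeeds (respectively $s$-strongly succeeds) on $X$. The dominating $\Deltaptwo$-martingale $d'$ then satisfies
\[
d'(A \restr n) \geq \gamma \cdot 2^{(1-s)n}
\]
for infinitely many $n$ (respectively almost every $n$) on every $A \in X$. For any rational $s' \in (s,1]$ we have $\gamma \cdot 2^{(1-s)n} \geq 2^{(1-s')n}$ once $n$ is large enough to absorb the constant $\gamma$, so $d'$ $s'$-succeeds (respectively $s'$-strongly succeeds) on $X$. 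Hence $\dim_\Deltaptwo(X) \leq s'$, and taking infima over $s' > s$ and then over $s > \dim_\SpanP(X)$ yields the desired inequality; the argument for $\Dim_\Deltaptwo$ is identical.

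There is no substantive obstacle: the entire content of the theorem is compressed into Part 2 of Lemma \ref{le:spanp_to_deltapthree_martingale}. That lemma is proved by swapping the Stockmeyer-style $\Deltapthree$ approximation of $\SpanP$ functions (Theorem \ref{th:spanP}) for the conditional $\Deltaptwo$ approximation of Theorem \ref{th:spanP_derand}; once the approximation is in hand, the scaling factor $\bigl(\tfrac{1-\epsilon_n}{1+\epsilon_n}\bigr)^n$ used to turn the approximate ratio $h/g$ into a supermartingale and the verification that $\gamma \in (0, 1/e^2)$ witnesses the pointwise domination go through verbatim. So the only work in this theorem beyond invoking the lemma is the elementary bookkeeping above that turns pointwise domination of martingales into preservation of success, $s$-success, and $s$-strong success.
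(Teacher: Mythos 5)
Your proposal is correct and matches the paper's approach: the paper also derives this theorem as an immediate consequence of Part 2 of Lemma \ref{le:spanp_to_deltapthree_martingale}, stating only that the result is "immediate from the previous lemma." The domination bookkeeping you spell out (absorbing the constant $\gamma$ into an arbitrarily small increase of the success exponent, and using the martingale/supermartingale equivalence for $\Deltaptwo$) is exactly the routine verification the paper leaves implicit.
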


\begin{corollary}
	If Derandomization Hypothesis \ref{hypothesis:ENPtt_NPSVcircuits} is true, then
	$\DeltaEtwo = \E^\NP$ does not have $\SpanP$-measure 0.
\end{corollary}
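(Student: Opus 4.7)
The plan is to follow the same pattern used to derive Corollary \ref{co:DeltaEthree_does_not_have_SpanP_measure_0} from Theorem \ref{th:counting_measure_conservation_deltaEthree}, only now using the derandomized transfer theorem (Theorem \ref{th:counting_measure_conservation_deltaEthree_derand}) instead of the unconditional one. The argument combines two ingredients: (i) the transfer of $\SpanP$-measure to $\DeltaPtwo$-measure granted by the hypothesis, and (ii) the standard resource-bounded measure conservation theorem of Lutz, relativized to an $\NP$ oracle, which states that $\DeltaEtwo$ does not have $\DeltaPtwo$-measure 0.

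I would proceed by contraposition. Suppose, toward a contradiction, that $\muSpanP(\DeltaEtwo) = 0$. Then part (1) of Theorem \ref{th:counting_measure_conservation_deltaEthree_derand} immediately yields $\mu_\DeltaPtwo(\DeltaEtwo) = 0$, since the hypothesis of that theorem is precisely Derandomization Hypothesis \ref{hypothesis:ENPtt_NPSVcircuits}. All that remains is to note that this conclusion contradicts $\DeltaPtwo$-measure conservation for $\DeltaEtwo$.

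For completeness, I would verify the conservation statement by the standard constructor argument, which is just the relativization of the proof of Theorem \ref{th:counting_measure_conservation}. Given an arbitrary $\DeltaPtwo$-martingale $d$, define a language $L$ recursively: for each string $x$ of length $n$, letting $w \in \{0,1\}^*$ encode the characteristic sequence of all strings lexicographically preceding $x$, place $x \in L$ iff $d(w1) < d(w0)$. Because $|w| = \Theta(2^n)$ and $d$ is computable in polynomial time relative to an $\NP$ oracle, the two values $d(w0)$ and $d(w1)$ can be evaluated in time $2^{O(n)}$ with an $\NP$ oracle, so $L \in \E^\NP = \DeltaEtwo$. By construction $d(L \restr m) \leq d(\lambda)$ for every $m$, so $d$ does not succeed on $L$; thus no $\DeltaPtwo$-martingale covers $\DeltaEtwo$, establishing $\mu_\DeltaPtwo(\DeltaEtwo) \neq 0$.

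There is no real obstacle once Theorem \ref{th:counting_measure_conservation_deltaEthree_derand} is in hand; the only subtle point is making sure the relativized constructor argument genuinely lands in $\E^\NP$ rather than some larger class, which it does because computing a single value of a $\DeltaPtwo$-martingale on an input of length $\Theta(2^n)$ takes $\DTIME^\NP(2^{O(n)})$. Writing the contradiction in two lines then completes the proof.
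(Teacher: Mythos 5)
Your proposal is correct and follows essentially the same route as the paper, which treats the corollary as immediate from Theorem \ref{th:counting_measure_conservation_deltaEthree_derand} together with the relativized form of Lutz's Measure Conservation Theorem (the same diagonalizing constructor used in the paper's proof of Theorem \ref{th:counting_measure_conservation}). You merely spell out the $\DeltaPtwo$-measure-conservation step that the paper leaves implicit, and your verification of it is the standard one.
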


We conclude this section by noting that $\P^{\SharpP}$-measure and the $\P^{\SharpP}$-dimensions dominate the counting measures and dimensions. This is immediate from Toda's theorem \cite{Toda91} that $\PH \subseteq \P^{\SharpP} = \P^{\GapP}$
and Theorem \ref{th:counting_measure_conservation_deltaEthree}. It appears, however, that the $\P^\SharpP$ notions are much stronger than our counting dimensions and measures.

\begin{corollary}\label{co:PtoSharpP_dominates_counting_measures_and_dimensions}
	Let $X \subseteq \C$.
	\begin{enumerateC}
		\item If $\muSpanP(X) = 0$, then $\muPSharpP(X) = 0$.
		\item If $\muGapP(X) = 0$, then $\muPSharpP(X) = 0$.
		\item $\dim_\PSharpP(X) \leq \dim_\SpanP(X)$.
		\item $\Dim_\PSharpP(X) \leq \Dim_\SpanP(X)$.
	\end{enumerateC}
\end{corollary}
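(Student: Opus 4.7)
The plan is to follow the roadmap stated in the paragraph preceding the corollary: combine Theorem \ref{th:counting_measure_conservation_deltaEthree} with Toda's theorem $\PH \subseteq \P^\SharpP$. For items 1, 3, and 4, I would start from the hypothesized $\SpanP$ bound on $X$ and first apply Theorem \ref{th:counting_measure_conservation_deltaEthree} to obtain the corresponding $\Deltapthree$ bound ($\muDeltaPthree(X)=0$, $\dim_\DeltaPthree(X) \leq \dim_\SpanP(X)$, or $\Dim_\DeltaPthree(X) \leq \Dim_\SpanP(X)$, respectively). Toda's theorem then gives $\SigmaPtwo \subseteq \PH \subseteq \P^\SharpP$, so $\Deltapthree = \P^\SigmaPtwo \subseteq \P^\SharpP$; any $\Deltapthree$-computable (super)martingale is therefore $\P^\SharpP$-computable, and the bound transfers to $\P^\SharpP$ verbatim since the success notions depend only on the martingale's values, not on how it is computed.

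For item 2, the simplest route is to work directly from the definition. A $\GapP$-martingale is approximated to precision $2^{-r}$ by $f(w,r)/g(w,r)$ with $f \in \GapP$ and $g \in \FP$. Because $\GapP$ is the closure of $\SharpP$ under subtraction, $\GapP \subseteq \FP^\SharpP$, and hence this approximation is polynomial-time computable with a $\SharpP$ oracle. Thus every $\GapP$-martingale is a $\P^\SharpP$-martingale, and $\muGapP(X) = 0$ immediately implies $\muPSharpP(X) = 0$. The same direct argument also re-proves item 1 without going through $\Deltapthree$, using $\SpanP \subseteq \FP^\SharpP$ via the approximation plus binary search that underlies the proof of $\spanP \subseteq \FP^\SharpP$.

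The main technical point, rather than a genuine obstacle, is that Lemma \ref{le:spanp_to_deltapthree_martingale} (which underlies Theorem \ref{th:counting_measure_conservation_deltaEthree}) produces a supermartingale rather than a martingale and introduces a positive multiplicative constant $\gamma$. Both are harmless: supermartingales are equivalent to martingales for the standard resource bounds including $\P^\SharpP$ (the equivalence cited after the resource-bounded measure definitions via \cite{AmNeTe96}), and scaling a supermartingale by a positive constant preserves unbounded success, $s$-success, and $s$-strong success. No deeper argument appears necessary; the corollary is indeed immediate from the two ingredients highlighted in the preceding paragraph.
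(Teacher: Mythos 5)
Your proposal is correct and follows essentially the same route as the paper, which derives the corollary immediately from Theorem \ref{th:counting_measure_conservation_deltaEthree} together with Toda's theorem $\PH \subseteq \P^\SharpP = \P^\GapP$ (so that $\Deltapthree \subseteq \P^\SharpP$ for items 1, 3, 4, and $\GapP$-martingales are $\P^\SharpP$-computable for item 2). Your explicit handling of the supermartingale and the constant $\gamma$ from Lemma \ref{le:spanp_to_deltapthree_martingale} is a detail the paper leaves implicit, and it is resolved correctly.
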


\section{Counting Martingale Constructions}\label{sec:constructions}

In this section we present five techniques for constructing counting martingales: Cover Martingale, Conditional Expectation Martingale, Subset Martingale, Acceptance Probability Martingale, and Bi-immunity Martingale.

\subsection{Cover Martingale Construction}

The first construction uses a cover set $A$ and defines the martingale using the conditional probability that an extension of the current string is in the cover. The goal is to obtain a value of 1 on nodes in $A$ while having a small initial value at the root $\lambda$.

\newcommand{\ext}{\mathrm{ext}}

\begin{construction_named}{Cover Martingale}\label{construction:cover_martingale}
	Let
	$A \subseteq \{0,1\}^*$ and $n \geq 0$.
	Choose $x$ uniformly at random from $\{0,1\}^n$ and let
	\[ d_n(w) =  \Pr[ x \in A_{=n} \mid w \prefix x] \]
	for all $w \in \{0,1\}^{\leq n}$. For all $w \in \{0,1\}^{> n}$, we let $d_n(w) = d_n(w\restr n)$ take the value of its length-$n$ prefix.
	Then we have \[ d_n(\lambda) = \Pr[x \in A_{=n}], \] and \[ d_n(x) = 1 \] for all $x \in A_{=n}$. For all $x \in \{0,1\}^n - A_{=n}$, note that $d_n(x) = 0$.
	Therefore $$S^1[d_n] = \bigcup_{w \in A_{=n}} \C_w.$$ In other words, $d_n$ covers all sequences that have a prefix in $A_{=n}$ with a value of 1.
	See Figure \ref{figure:martingale} for an example with $n=4$, where the green nodes are $A_{=n}$.
\end{construction_named}

Depending on the complexity of the cover, we have a uniform family of exact counting martingales.

\begin{lemma}\label{le:cover_martingale_construction}
	\begin{enumerateC}
		\item If $A \in \UP$, then Construction \ref{construction:cover_martingale} produces a uniform family of exact $\SharpP$-martingales.
		\item If $A \in \NP$, then Construction \ref{construction:cover_martingale} produces a uniform family of exact $\SpanP$-martingales.
		\item If $A \in \SPP$, then Construction \ref{construction:cover_martingale} produces a uniform family of exact $\GapP$-martingales.
	\end{enumerateC}
\end{lemma}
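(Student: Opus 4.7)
The plan is to verify the three cases by constructing a probabilistic Turing machine for each that realizes the numerator as a function in the appropriate counting class, with the denominator $2^{n-|w|}$ (or $1$ when $|w| > n$) handled trivially in $\FP$. In all three cases, the numerator we must compute is
\[ N_n(w) = |\{x \in A_{=n} \mid w \prefix x\}|, \]
which, divided by $2^{n-|w|}$, gives exactly $d_n(w)$ for $|w| \leq n$; for $|w| > n$ we simply redirect the computation to $w \restr n$. The input to every PTM we construct will be $(w, 1^n)$, so the number of length-$n$ extensions of $w$ is only $2^{n-|w|}$, which is still at most $2^n$ and hence polynomial in the input size — this keeps all our guessing branches polynomial-time reachable.

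For $A \in \UP$, I would let $M_A$ be the UP-machine witnessing $A$ and build a PTM $N$ that on input $(w, 1^n)$ first guesses $v \in \{0,1\}^{n-|w|}$ and then simulates $M_A$ on $wv$, accepting iff $M_A$ accepts. Because $M_A$ has a unique accepting path on any $x \in A$ and none otherwise, the number of accepting paths of $N$ is exactly $N_n(w)$, giving the desired $\SharpP$ function. For $A \in \NP$, I would replace $M_A$ by an $\NP$-verifier and use a $\SpanP$ machine: after guessing $v$ and an $\NP$-witness $y$ for $wv \in A$, output the string $wv$ if $y$ verifies, otherwise output nothing; the number of distinct outputs is $N_n(w)$. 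For $A \in \SPP$, I would use a $\GapP$ function $f_A$ with $f_A(x) = [x \in A]$, guess $v$ as before, and simulate the $\GapP$ machine for $f_A$ on $wv$; the gap of the composite machine sums $f_A$ over all length-$n$ extensions of $w$, which equals $N_n(w)$. In each case the uniformity in $n$ is automatic because $n$ is just an input parameter (in unary), so a single PTM handles all $n$.

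After these constructions, the remaining checks are routine: the martingale identity $d_n(w) = (d_n(w0)+d_n(w1))/2$ holds because for $|w| < n$ the sets of length-$n$ extensions through $w0$ and $w1$ partition those through $w$, and for $|w| \geq n$ all three values equal $d_n(w \restr n)$; the denominator function $g_n(w) = 2^{\max(n-|w|,0)}$ is clearly in $\FP$ and always a power of $2$; and the family is exact in the sense of Definition \ref{def:counting_martingales}.

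The main subtlety I anticipate is the $\SPP$ case. Unlike $\SharpP$ and $\SpanP$, where each extension $v$ of $w$ contributes only through nonnegative counts, the $\GapP$ composition must correctly aggregate signed contributions across an exponentially-wide tree of guesses, and I need to cite the standard closure of $\GapP$ under polynomially-branching guessing with a $\GapP$ inner function (equivalently, the exponential-sum / $\GapP$-composition closure of Fenner--Fortnow--Kurtz) to justify that the overall machine computes a $\GapP$ function equal to $\sum_{v} f_A(wv) = N_n(w)$. Once that justification is in place, the three parts follow uniformly.
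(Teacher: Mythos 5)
Your proof is correct and follows essentially the same route as the paper's: on input $(w,1^n)$ guess the length-$n$ extension $x$ of $w$, then run the $\UP$-machine, the $\NP$-verifier, or the $\GapP$-machine on $x$, so that accepting paths, distinct outputs, or gaps respectively count $|\{x \in A_{=n} \mid w \prefix x\}|$, with the denominator $2^{n-|w|}$ in $\FP$ and the martingale identity checked by the same partition argument. One small point in your favor: in the $\SpanP$ case you output the string $x$ itself rather than a witness-tagged pair, which is the right choice since distinct witnesses for the same $x$ must not be counted twice.
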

\begin{appendixproof}[Proof of Lemma \ref{le:cover_martingale_construction}]
	For $w \in \{0,1\}^*$ and $n \geq 0$, define $\ext(w,n) = \{ x \in \{0,1\}^n \mid w \prefix x \}$ and
	$\ext_A(w,n) = A \cap \ext(w,n)$. Then
	\begin{equation}\label{eq:cover_martingale}
		d_n(w) = \frac{|\ext_A(w,n)|}{2^{n - |w|}}
	\end{equation}
	for all
	$w \in \binary^{\leq n}.$
	We have
	\begin{eqnarray*}
		d_n(w0)+d_n(w1) &=&
		\frac{|\ext_A(w0,n)|}{2^{n - |w0|}}
		+
		\frac{|\ext_A(w1,n)|}{2^{n - |w1|}}\\
		&=&
		\frac{|\ext_A(w,n)|}{2^{n - (|w|+1)}}\\
		&=& 2d(w),
	\end{eqnarray*}
	for all $w \in \{0,1\}^{<n}$ and $d_n(w0)+d_n(w1)=2d_n(w \restr n)=2 d_n(w)$ for all $w \in \{0,1\}^{\geq n}$, so $d_n$ is a martingale.

	If $A \in \UP$, then the numerator $|\ext_A(w,n)|$ in \eqref{eq:cover_martingale} is computed by the $\SharpP$ function $M(0^n,w)$ that guesses an extension $x \in \ext(w,n)$, guess a witness $v$ for $x$,  and accepts if $v$ is a valid witness for $x \in A$. Because $A$ has unique witnesses, there is exactly one accepting computation path for each $x \in \ext(w,n)$.

	If $A \in \NP$, then we compute the numerator $|\ext_A(w,n)|$ in \eqref{eq:cover_martingale} by the $\SpanP$ function $M(0^n,w)$ that guesses an extension $x \in \ext(w,n)$, guesses a witness $v$ for $x$, and prints $\pair{x,v}$ if $v$ is a valid witness for $x \in A$.

	If $A \in \SPP$, let $M$ be a $\PTM$ such that $M(x)$ has gap 1 when $x \in A$ and $M(x)$ has gap 0 when $x \not\in A$.
	Consider the $\GapP$ function $N(0^n,w)$ that guesses an extension $x \in \ext(w,n)$ and runs $M(x)$. If $M(x)$ accepts, then $N$ accepts. If $M(x)$ rejects, then $N$ rejects. The gap of $N(0^n,w)$ is $|\ext_A(w,n)|$.
\end{appendixproof}

We will use Construction \ref{construction:cover_martingale} in Section \ref{subsec:entropy_rates} to develop tools relating counting dimensions and entropy rates, with applications in Section \ref{sec:applications} to circuit complexity.

\subsection{Conditional Expectation Martingale Construction}

Here is a more general martingale construction using a counting function $f(x)$ as a random variable and taking the conditional expectation of $f$ given the current prefix $w$. This generalizes Construction \ref{construction:cover_martingale} when $f(x)$ is the indicator random variable for the membership of $x$ in the cover $A$.

\begin{construction_named}{Conditional Expectation Martingale}\label{construction:random_variable_martingale}
	Let $f : \{0,1\}^* \to \N$
	and view $f(x)$ as a random variable where $x$ is chosen uniformly from $\{0,1\}^n$. Define
	\[ d_n(w) = E[f(x) \mid w \prefix x] \]
	for all $w \in \{0,1\}^{\leq n}$. For all $w \in \{0,1\}^{> n}$, we let $d_n(w) = d_n(w\restr n)$ take the value of its length-$n$ prefix.
	Then
	\[ d_n(\lambda) = E[f(x)], \] and \[ d_n(x) = f(x) \] for all $x \in \{0,1\}^n$.
	See Figure \ref{figure:martingale_random_variable}  for an example with $n=4$. The green nodes are the $x \in \{0,1\}^n$ that have $f(x) > 0$.
\end{construction_named}

We think of Construction \ref{construction:random_variable_martingale} as covering the sequences that have a prefix $x$ with $f(x) > 0$. Depending on the counting complexity of $f$, we have a uniform family of exact counting martingales.

\begin{lemma} \label{le:random_variable_martingale_construction}
	\begin{enumerateC}
		\item If $f \in \SharpP$, then Construction \ref{construction:random_variable_martingale}
		produces a uniform family of exact $\SharpP$-martingales.
		\item If $f \in \SpanP$, then Construction \ref{construction:random_variable_martingale}
		produces a uniform family of exact $\SpanP$-martingales.
		\item If $f \in \GapP$, then Construction \ref{construction:random_variable_martingale}
		produces a uniform family of exact $\GapP$-martingales.
	\end{enumerateC}
\end{lemma}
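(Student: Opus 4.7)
The plan is to first verify the martingale identity of $d_n$ and then, in each case, exhibit a polynomial-time $\PTM$ whose associated counting function equals the numerator of $d_n$. Writing
\[ d_n(w) = \frac{\sum_{x \in \{0,1\}^n,\, w \prefix x} f(x)}{2^{n - |w|}} \]
for $|w| \leq n$ (and $d_n(w) = d_n(w \restr n)$ otherwise) makes the denominator visibly an $\FP$ power of $2$ uniformly in $n$. The martingale identity for $|w| < n$ follows by splitting the summation range over the $(|w|+1)$-st bit of $x$, exactly as in the proof of Lemma~\ref{le:cover_martingale_construction}; for $|w| \geq n$ the value is constant on the subtree below $w \restr n$ and the identity is trivial.

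For the numerator, I would in each case build a polynomial-time $\PTM$ $N(0^n,w)$ that first guesses an extension $x \in \{0,1\}^n$ of $w$ using $n - |w|$ coin flips, and then simulates a counting machine $M$ witnessing $f$ on input $x$. In the $\SharpP$ case, $N$ simply inherits the accept/reject outcome of $M(x)$, so the accepting paths of $N(0^n,w)$ partition by guess $x$ into groups of size $f(x)$, giving an accepting-path count of $\sum_x f(x)$. In the $\GapP$ case, $N$ inherits both accept and reject labels from $M(x)$, and the gap contribution from each $x$-branch is precisely $f(x)$, so the total gap is again $\sum_x f(x)$.

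The $\SpanP$ case needs one extra detail and is the only place there is any subtlety: $N$ must output the tagged string $\pair{x, y}$ whenever the simulated $M(x)$ outputs $y$. The $x$-tag ensures that outputs produced under distinct guesses of $x$ are automatically distinct, so the distinct-output count of $N$ decomposes as $\sum_x f(x)$. Without the tag this would fail, since two different $x$'s could produce the same raw output $y$ and be undercounted; this is the only step that requires any care. Uniformity of the family $(d_n \mid n \in \N)$ in $n$ is immediate, since $n$ enters only through the padding $0^n$ given to $N$ and through the $\FP$ denominator $2^{n-|w|}$.
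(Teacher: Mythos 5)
Your proposal is correct and follows essentially the same route as the paper: write $d_n(w)$ as $\sum_{x \supseteq w} f(x)$ over $2^{n-|w|}$, verify the martingale identity by splitting the sum on the next bit, and realize the numerator by a $\PTM$ that guesses the extension $x$ and simulates the counting machine for $f$, with the $\SpanP$ case tagging each output by $x$ (the paper likewise prints $\pair{x,v}$ precisely to keep outputs from distinct extensions distinct).
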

\begin{appendixproof}[Proof of Lemma \ref{le:random_variable_martingale_construction}]
	Let $\ext(w,n) = \{ x \in \{0,1\}^n \mid w \prefix x \}$.
	Then
	\begin{equation}
		d_n(w) = \frac{\sum\limits_{x \in \ext(w,n)} f(x)}{2^{n - |w|}}  \label{eq:random_variable_martingale}
	\end{equation}
	for all
	$w \in \binary^{\leq n}.$
	We have
	\begin{eqnarray*}
		d_n(w0)+d_n(w1) &=&
		\frac{\sum\limits_{x \in \ext(w0,n)} f(x)}{2^{n - |w0|}} +
		\frac{\sum\limits_{x \in \ext(w1,n)} f(x)}{2^{n - |w1|}}\\
		&=& \frac{\sum\limits_{x \in \ext(w,n)} f(x)}{2^{n - (|w|+1)}} \\
		&=& 2d(w),
	\end{eqnarray*}
	for all $w \in \{0,1\}^{<n}$ and $d_n(w0)+d_n(w1)=2d_n(w \restr n)=2 d_n(w)$ for all $w \in \{0,1\}^{\geq n}$, so $d_n$ is a martingale.

	Suppose $f \in \SharpP$. We compute the numerator of $d_n(w)$ in \eqref{eq:random_variable_martingale} by the PTM $M(0^n,w)$ that guesses an extension $x \in \ext(w,n)$ and then runs the $\SharpP$ algorithm for $f$ on $x$. If $f(x)$ accepts, then $M$ accepts. If $f(x)$ rejects, then $M$ rejects.

	Suppose $f \in \SpanP$. We compute the numerator of $d_n(w)$ in \eqref{eq:random_variable_martingale} by the PTM $N(0^n,w)$ that guesses an extension $x \in \ext(w,n)$ and then runs the $\SpanP$ algorithm for $f$ on $x$. If $f(x)$ has an output $v$, then $N$ prints $\pair{x,v}$.

	Suppose $f \in \GapP$. We compute the numerator of $d_n(w)$ in \eqref{eq:random_variable_martingale} by the PTM $G(0^n,w)$ that guesses an extension $x \in \ext(w,n)$ and then runs the $\GapP$ algorithm for $f$ on $x$. If $f(x)$ accepts, then $G$ accepts. If $f(x)$ rejects, then $G$ rejects.
\end{appendixproof}

We will use Construction \ref{construction:random_variable_martingale} in Section \ref{subsec:kolmogorov_complexity} to develop tools relating counting measures and dimensions to Kolmogorov Complexity, with applications to circuit complexity in Section \ref{sec:applications}.

\subsection{Subset Martingale Construction}

Next, we present a construction that builds upon Construction \ref{construction:cover_martingale}.
For a language $B \subseteq \{0,1\}^*$, the {\em census function of $B$} is defined by $c_B(n) = | B \cap [s_0,s_n) |$ for all $n \geq 0$. For a string $w \in \{0,1\}^*$, let $$L(w) = \{ s_i \mid w[i] = 1 \}$$ be the language with characteristic string $w$.

\begin{construction_named}{Subset Martingale}\label{construction:subset_martingale}
	Let $B \subseteq \{0,1\}^*$ and define the cover
	\begin{eqnarray*}
		A &=& \{ w \in \{0,1\}^* \mid L(w) \subseteq B \} \\
		&=& \{ w \in \{0,1\}^* \mid (\forall i < |w|)\ w[i]=1 \Rightarrow s_i \in B \}.
	\end{eqnarray*}
	Then apply Construction \ref{construction:cover_martingale}. We have $|A_{=n}| = 2^{c_B(n)}$, so \[ d_n(\lambda) = \Pr[ x \in A_{=n} ] = 2^{c_B(n)-n}. \]
	For all $w \in \{0,1\}^n$,
	\[ d_n(w) = \begin{cases} 1 & \textrm{if }L(w) \subseteq B \\
              0 & \textrm{otherwise.}
		\end{cases} \]
	\noindent

	\begin{figure}
		\begin{center}
			\begin{tikzpicture}[level distance=1.5cm,
					level 1/.style={sibling distance=12cm},
					level 2/.style={sibling distance=6cm},
					level 3/.style={sibling distance=3cm},
					level 4/.style={sibling distance=1.5cm},scale=.47,
					every node/.style={draw, rectangle, align=center}]
				\node[fill=green]{$\frac{1}{4}$}
				child {node [fill=green] {$\frac{1}{2}$}
						child {node [fill=green] {$\tfrac{1}{2}$}
								child {node [fill=green] {1}
										child {node [fill=green] {1}} child {node [fill=green] {1}} }
								child {node {0}
										child {node {0}} child {node {0}} }
							}
						child {node [fill=green] {$\tfrac{1}{2}$}
								child {node [fill=green] {1}
										child {node [fill=green] {1}} child {node [fill=green] {1}} }
								child {node {0}
										child {node {0}}
										child {node {0}}
									}
							}
					}
				child {node {0}
						child {node {0}
								child {node {0}
										child {node {0}}
										child {node {0}}
									}
								child {node {0}
										child {node {0}}
										child {node {0}}
									}
							}
						child {node {0}
								child {node {0}
										child {node {0}}
										child {node {0}}
									}
								child {node {0}
										child {node {0}}
										child {node {0}}
									}
							}
					};
			\end{tikzpicture}
		\end{center}\caption{Subset Martingale Construction (Construction \ref{construction:subset_martingale})}\label{figure:subset_martingale}
	\end{figure}
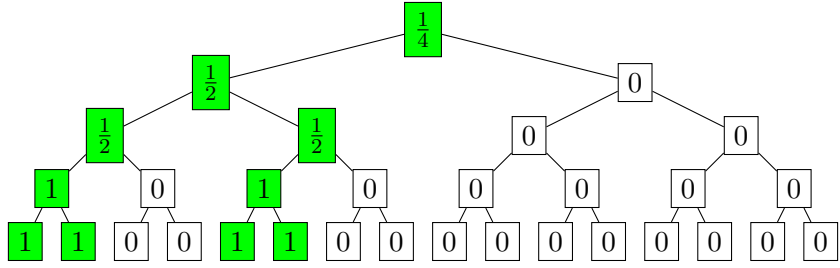 See Figure \ref{figure:subset_martingale} for an example with $n = 4$ and $B \cap [s_0,s_3] = \{s_1,s_3\}$. The nodes $w \in \{0,1\}^{\leq 4}$ that are colored green in the tree are those with $L(w) \subseteq B$.
\end{construction_named}

In the following lemma, $\UE$ is the exponential ($2^{O(n)}$ time) version of $\UP$, $\NE$ is the exponential version of $\NP$, and $\SPE$ is the exponential version of $\SPP$.

\begin{lemma}\label{le:subset_martingale_uniform_family}
	\begin{enumerateC}
		\item If $B \in \UE$, then Construction \ref{construction:subset_martingale} produces a uniform family of $\SharpP$-martingales.
		\item If $B \in \NE$, then Construction \ref{construction:subset_martingale} produces a uniform family of $\SpanP$-martingales.
		\item If $B \in \SPE$, then Construction \ref{construction:subset_martingale} produces a uniform family of $\GapP$-martingales.
	\end{enumerateC}
\end{lemma}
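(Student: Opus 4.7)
The plan is to reduce to the Cover Martingale Construction (Construction \ref{construction:cover_martingale}) by showing that the cover
$A = \{w \in \{0,1\}^* \mid L(w) \subseteq B\}$
inherits membership in $\UP$, $\NP$, or $\SPP$ from the corresponding hypothesis on $B$. Once this is established, Lemma \ref{le:cover_martingale_construction} immediately produces a uniform family of exact counting martingales of the desired type.

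The key observation driving all three cases is that for any $w \in \{0,1\}^*$, each string $s_i$ with $i < |w|$ satisfies $|s_i| = O(\log |w|)$. Hence any $2^{O(n)}$-time algorithm for $B$ runs in $2^{O(\log |w|)} = \poly(|w|)$ time when applied to such $s_i$. Checking $w \in A$ therefore amounts to verifying $s_i \in B$ for the at most $|w|$ indices $i$ with $w[i]=1$, each query being answerable in polynomial time in $|w|$.

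For $B \in \UE$, a nondeterministic polynomial-time machine for $A$ guesses, for each $i < |w|$ with $w[i]=1$, a candidate accepting computation of the underlying $\UE$ machine $M$ on $s_i$, and accepts iff every guessed computation is a valid accepting path of $M$. Uniqueness of $M$'s accepting paths yields uniqueness of the combined witness, so $A \in \UP$. Dropping the uniqueness requirement handles $B \in \NE \Rightarrow A \in \NP$ in exactly the same way.

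The $\SPP$ case requires more care and is the main obstacle. Let $M$ be an $\SPE$ machine whose gap on input $x$ equals $\chi_B(x) \in \{0,1\}$. On input $w$ we construct a $\PTM$ $N$ that successively simulates $M$ on each $s_i$ with $w[i]=1$, combining the simulations via the standard gap-multiplication trick: on each path of the current simulation, the next simulation is appended directly on accepting paths and with accept/reject roles swapped on rejecting paths. Iterating this gives $\operatorname{gap}(N(w)) = \prod_{i : w[i]=1} \chi_B(s_i)$, which equals $1$ exactly when $L(w) \subseteq B$; the empty product case $L(w) = \emptyset$ is handled by an initial deterministic accepting path. Thus $\chi_A$ is a $\GapP$ function, so $A \in \SPP$, and Lemma \ref{le:cover_martingale_construction} completes the proof.
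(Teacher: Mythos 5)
Your proof is correct and follows essentially the same route as the paper's: observe that $|s_i| = O(\log|w|)$ so exponential-time verification of $s_i \in B$ becomes polynomial in $|w|$, conclude that the cover $A$ lands in $\UP$, $\NP$, or $\SPP$ respectively, and invoke Lemma \ref{le:cover_martingale_construction}. The paper only spells out the $\UE$ case and says the others ``follow similarly,'' whereas you correctly fill in the $\SPE$ case via the standard $\GapP$ gap-multiplication closure property, which is a welcome level of detail rather than a deviation.
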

\begin{appendixproof}[Proof of Lemma \ref{le:subset_martingale_uniform_family}]
	If $B \in \UE$, then we claim $A \in \UP$. Given $w \in \{0,1\}^n$, for each $i < n$, if $w[i] = 1$, guess a witness for $s_i \in B$. If all witnesses are found, accept $w$. Because $A$ has unique witnesses, $B$ also has unique witnesses. Because the $s_i$'s have length logarithmic in the length of $w$, the total length of the witness for $w \in B$ is at most $|w|2^{O(\log |w|)} = |w|^{O(1)}$.
	Therefore Construction \ref{construction:cover_martingale} produces a uniform family of $\SharpP$-martingales.

	The other two cases follow similarly.
\end{appendixproof}

\begin{lemma}\label{le:subset_martingale_infinite_subsets}
	Let $B \subseteq \{0,1\}^*$ and assume the series  $\sum\limits_{n=0}^\infty  2^{c_B(n)-n}$ is $\p$-convergent.
	\begin{enumerateC}
		\item If $B \in \UE$, then
		the class of all infinite subsets of $B$ has $\SharpP$-measure 0.
		\item If $B \in \NE$, then
		the class of all infinite subsets of $B$ has $\SpanP$-measure 0.
		\item If $B \in \SPE$, then
		the class of all infinite subsets of $B$ has $\GapP$-measure 0.
	\end{enumerateC}
\end{lemma}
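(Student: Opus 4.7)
The plan is to apply the Counting Measure Borel-Cantelli Lemma (Lemma \ref{lemma:counting-measure-borel-cantelli}) to the uniform family of exact counting martingales $(d_n)$ produced by the Subset Martingale Construction (Construction \ref{construction:subset_martingale}). First, by Lemma \ref{le:subset_martingale_uniform_family}, depending on which of $\UE$, $\NE$, $\SPE$ contains $B$, the family $(d_n)$ is a uniform family of exact $\SharpP$-, $\SpanP$-, or $\GapP$-martingales. Each satisfies $d_n(\lambda) = 2^{c_B(n) - n}$, and the martingale identity $d_n(\lambda) = 2^{-|w|} \sum_{u \in \{0,1\}^{|w|}} d_n(u)$ (valid for $|w| \leq n$, with the extension $d_n(w) = d_n(w \restr n)$ handling longer $w$) yields the pointwise bound $d_n(w) \leq 2^{|w|} d_n(\lambda) = 2^{|w|} \cdot 2^{c_B(n) - n}$.

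Next, I would verify that $\sum_{n=0}^{\infty} d_n(w)$ is uniformly $\p$-convergent, which is what the Counting Measure Borel-Cantelli Lemma requires. By hypothesis $\sum_n 2^{c_B(n) - n}$ has a polynomial-time modulus $m_0$, so setting $m(w, r) = m_0(r + |w|)$ gives $\sum_{n \geq m(w, r)} d_n(w) \leq 2^{|w|} \cdot 2^{-(r + |w|)} = 2^{-r}$, exhibiting a polynomial-time modulus for the weighted series. The Counting Measure Borel-Cantelli Lemma then yields $\mu_\Delta(\limsup_n S^1[d_n]) = 0$, where $\Delta$ is the appropriate counting class.

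Finally, I would show that every infinite subset $S$ of $B$ lies in $\limsup_n S^1[d_n]$. For every $n$, the $1$-bits of $S \restr n$ mark positions $s_i$ belonging to $S \subseteq B$, so $L(S \restr n) \subseteq B$; by the definition of the cover $A$ in Construction \ref{construction:subset_martingale}, this puts $S \restr n \in A_{=n}$ and therefore $d_n(S \restr n) = 1$, so $S \in S^1[d_n]$. Hence $S$ lies in $S^1[d_n]$ for every $n$, and in particular $S \in \limsup_n S^1[d_n]$, which has $\Delta$-measure $0$. There is no essential obstacle; the one delicate point is verifying uniform $\p$-convergence, which is handled by leveraging the $2^{|w|}$ slack in the martingale bound against the $\p$-convergence assumption.
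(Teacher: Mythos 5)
Your proposal is correct and follows essentially the same route as the paper, which proves this lemma in one line by combining Lemma \ref{le:subset_martingale_uniform_family} with the Counting Measure Borel-Cantelli Lemma (Lemma \ref{lemma:counting-measure-borel-cantelli}). The extra details you supply — the pointwise bound $d_n(w) \leq 2^{|w|} d_n(\lambda)$ used to convert the $\p$-convergence of $\sum_n 2^{c_B(n)-n}$ into a uniform modulus for $\sum_n d_n(w)$, and the observation that an infinite subset $S \subseteq B$ has $d_n(S \restr n) = 1$ for every $n$ — are exactly the steps the paper leaves implicit, and they are verified correctly.
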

\begin{proof}
	Combine Lemma \ref{le:subset_martingale_uniform_family} and the Counting Measure Borel-Cantelli Lemma (Lemma \ref{lemma:counting-measure-borel-cantelli}).
\end{proof}

We now apply the subset construction along with the law of large numbers for $\P$-random languages to conclude that counting random languages do not belong to particular complexity classes.

\begin{corollary}\label{co:SharpP_SpanP_random_UP_NP}
	\begin{enumerateC}
		\item Every $\SharpP$-random language is not in $\UE$.
		\item Every $\SpanP$-random language is not in $\NE$.
		\item Every $\GapP$-random language is not in $\SPE$.
	\end{enumerateC}
\end{corollary}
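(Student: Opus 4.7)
The plan is to derive a contradiction by taking a putative $\Delta$-random language $A$ in the claimed class and applying Lemma~\ref{le:subset_martingale_infinite_subsets} to the set $B = A$. The subset martingale from Construction~\ref{construction:subset_martingale} is designed to succeed on every infinite subset of $B$, and in particular on $B$ itself, so long as the $\p$-convergence hypothesis on the census $c_B$ holds.

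First I would invoke Proposition~\ref{prop:basic_counting_measure_dimension_relationships}, which gives $\mu_\p(X) = 0 \Rightarrow \mu_\Delta(X) = 0$ for each $\Delta \in \{\SharpP, \SpanP, \GapP\}$; contrapositively, every $\SharpP$-, $\SpanP$-, or $\GapP$-random language is $\p$-random. Then the standard law of large numbers for $\p$-random languages \cite{Lutz:AEHNC} gives $\lim_n c_A(n)/n = 1/2$, so, say, $c_A(n) \leq 2n/3$ for all sufficiently large $n$. This bound forces $2^{c_A(n)-n} \leq 2^{-n/3}$ eventually, so the series $\sum_n 2^{c_A(n)-n}$ is dominated by a geometric tail and is therefore $\p$-convergent with an easily computable modulus.

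With $\p$-convergence in hand, Lemma~\ref{le:subset_martingale_infinite_subsets} applied to $B = A$ yields that the class of all infinite subsets of $A$ has $\Delta$-measure zero, for the matching $\Delta$. Because $A$ is infinite (finiteness is ruled out by the law of large numbers) and is trivially a subset of itself, $\{A\}$ has $\Delta$-measure zero, contradicting the $\Delta$-randomness of $A$. The three parts differ only in pairing the complexity of $A$ with the correct subset martingale: $\UE$ with $\SharpP$, $\NE$ with $\SpanP$, and $\SPE$ with $\GapP$.

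The main potential obstacle is verifying that counting randomness is strong enough to inherit the law of large numbers, since the classical statement is phrased for $\p$-randomness; however, the implications in Proposition~\ref{prop:basic_counting_measure_dimension_relationships} make this transfer immediate, so the argument reduces to the bookkeeping above.
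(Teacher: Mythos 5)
Your proposal is correct and follows essentially the same route as the paper: both arguments observe that a $\Delta$-random language is $\p$-random, invoke the law of large numbers to bound the census function $c_A(n)$ by roughly $n/2$ so that $\sum_n 2^{c_A(n)-n}$ is $\p$-convergent, and then apply Lemma~\ref{le:subset_martingale_infinite_subsets} with $B = A$ to contradict randomness. Your additional remarks (that $A$ is infinite and that the convergence has a polynomial-time modulus) just make explicit details the paper leaves implicit.
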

\begin{proof}
	Let $R$ be $\SharpP$-random. Since $R$ is also $\p$-random, it satisfies the law of large numbers \cite{Lutz:QSET} and $c_R(n) \leq (\frac{1}{2}+\epsilon)n$ for any $\epsilon > 0$ and all sufficiently large $n$. The previous lemma applies to contradict the $\SharpP$-randomness of $R$. The proofs for $\SpanP$-random and $\GapP$-random languages are analogous.
\end{proof}

\subsection{Acceptance Probability Construction}\label{sec:acceptance_probability_construction}
Determining the $\p$-measure of $\BPP$ is an open problem. Van Melkebeek \cite{vanM00} used the weak derandomization of $\BPP$ from the uniform hardness assumption $\BPP \neq \EXP$ \cite{ImpWig01}
to prove a zero-one law for the $\p$-measure of $\BPP$:
either $\mup(\BPP) = 0$ or $\BPP = \EXP$.
Since $\BPP = \EXP$ is equivalent to $\mu(\BPP\mid\EXP) = 1$ \cite{ReSiCa95}, it follows that determining the $\p$-measure of $\BPP$ is equivalent to resolving the $\BPP$ versus $\EXP$ problem. However, in this section we will show that $\BPP$ has $\SharpP$-measure 0. We will also show that $\BQP$ has $\GapP$-measure 0 by building on the work of Fortnow and Rogers \cite{ForRog99,Fortnow03:One} that $\BQP \subseteq \AWPP$. Both of these results will be proved using the following martingale construction that tracks acceptance probabilities of classical or quantum machines.

\begin{construction_named}{Acceptance Probability Martingale}\label{construction:probability_martingale_construction}
	Let $f : \{0,1\}^* \times \{0,1\} \to \N$ be a function such that for some function $q(n)$ and all $x \in \{0,1\}^*$, \[ f(x,0) + f(x,1) = 2^{q(|x|)}. \]
	For each $w \in \{0,1\}^*$, let $n=|w|$ and define
	\[ M(w) = \prod_{i=0}^{n-1} 2 f(s_i,w[i])
		= 2^n \prod_{i=0}^{n-1} f(s_i,w[i]), \]
	\[ N(w) =  \prod_{i=0}^{n-1} 2^{q(|s_i|)} = 2^{\sum\limits_{i=0}^{n-1} q(|s_i|)}, \]
	and \[ d(w) = \frac{M(w)}{N(w)} = 2^n \prod_{i=0}^{n-1} \frac{f(s_i,w[i])}{2^{q(|s_i|)}}. \]
	Then $d$ is a martingale with $d(\lambda) = 1$.
	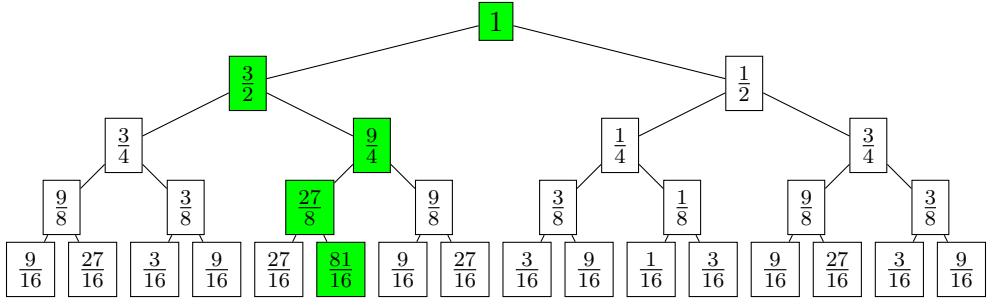
\begin{figure}
		\begin{center}
			\begin{tikzpicture}[level distance=1.5cm,
					level 1/.style={sibling distance=12cm},
					level 2/.style={sibling distance=6cm},
					level 3/.style={sibling distance=3cm},
					level 4/.style={sibling distance=1.5cm},scale=.55,
					every node/.style={draw, rectangle, align=center}]
				\node [fill=green] {$1$}
				child {node [fill=green] {$\tfrac{3}{2}$}
						child {node {$\tfrac{3}{4}$}
								child {node {$\tfrac{9}{8}$}
										child {node {$\tfrac{9}{16}$}}
										child {node {$\tfrac{27}{16}$}}
									}
								child {node {$\tfrac{3}{8}$}
										child {node {$\tfrac{3}{16}$}}
										child {node {$\tfrac{9}{16}$}}
									}
							}
						child {node [fill=green] {$\tfrac{9}{4}$}
								child {node [fill=green] {$\tfrac{27}{8}$}
										child {node {$\tfrac{27}{16}$}}
										child {node [fill=green] {$\tfrac{81}{16}$}}
									}
								child {node {$\tfrac{9}{8}$}
										child {node {$\tfrac{9}{16}$}}
										child {node {$\tfrac{27}{16}$}}
									}
							}
					}
				child {node {$\frac{1}{2}$}
						child {node {$\tfrac{1}{4}$}
								child {node {$\tfrac{3}{8}$}
										child {node {$\tfrac{3}{16}$}}
										child {node {$\tfrac{9}{16}$}}
									}
								child {node {$\tfrac{1}{8}$}
										child {node {$\tfrac{1}{16}$}}
										child {node {$\tfrac{3}{16}$}}
									}
							}
						child {node {$\tfrac{3}{4}$}
								child {node {$\tfrac{9}{8}$}
										child {node {$\tfrac{9}{16}$}}
										child {node {$\tfrac{27}{16}$}}
									}
								child {node {$\tfrac{3}{8}$}
										child {node {$\tfrac{3}{16}$}}
										child {node {$\tfrac{9}{16}$}}
									}
							}
					};
			\end{tikzpicture}
		\end{center}\caption{Acceptance Probability Martingale Construction (Construction \ref{construction:probability_martingale_construction})}\label{figure:acceptance_probability_martingale}
	\end{figure} See Figure \ref{figure:acceptance_probability_martingale} for an example with $n = 4$, $A \cap [s_0,s_3] = \{s_1,s_3\}$, and correctness probability $\tfrac{3}{4}$. The green path highlights the prefix 0101 of the characteristic sequence of $A$.
\end{construction_named}

For example, the function $f$ in Construction \ref{construction:probability_martingale_construction} could be the $\SharpP$ function where $f(x,0)$ is the number of rejecting paths and $f(x,1)$ is the number of accepting paths in a $\PTM$ $Q$ on input $x$. Let $q(n)$ be the random seed length of $Q$. Then
\[ \frac{f(x,0)}{2^{q(|x|)}} = \Pr[Q\textrm{ rejects }x] = \Pr[Q(x) = 0], \]
\[ \frac{f(x,1)}{2^{q(|x|)}} = \Pr[Q\textrm{ accepts }x] = \Pr[Q(x) = 1], \]
and
\[ d(w) =2^n \prod_{i=0}^{n-1} \Pr[Q(s_i) = w[i]]. \]
In particular, if $A \in \BPE = \BPTIME(2^{O(n)})$, then for some $c \geq 1$, there exists a $2^{cn}$-time $\PTM$ $Q$ that decides $A$ with error probability at most $2^{-2n}$,  where the random seed length is $q(n) = 2^{cn}$.

\begin{lemma} \label{le:bptime_martingales}
	If $A \in \BPE$, then the martingale $d$ produced by Construction \ref{construction:probability_martingale_construction} is a $\SharpP$-martingale that $0$-strongly succeeds on $A$.
\end{lemma}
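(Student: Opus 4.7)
The plan is to verify two claims: $d$ is an exact $\SharpP$-martingale, and $d$ $0$-strongly succeeds on $A$.

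First, the martingale identity $d(w0)+d(w1)=2d(w)$ follows by direct telescoping from the product formula: the first $n$ factors of $d(w0)$ and $d(w1)$ already match those of $d(w)$, and the new terminal factor sums as $f(s_n,0)+f(s_n,1)=2^{q(|s_n|)}$, which exactly cancels the denominator $2^{q(|s_n|)}$ contributed at position $n$, leaving an extra factor of $2$. For the $\SharpP$-computability of the numerator $M(w)=2^n\prod_{i=0}^{n-1} f(s_i,w[i])$, I would construct a PTM that on input $w$ of length $n$ guesses $n$ unconstrained bits (yielding the factor $2^n$) and then, sequentially for each $i<n$, simulates the underlying PTM witnessing $f(s_i,w[i])$, accepting iff every sub-simulation accepts; since $|s_i|=O(\log n)$ and $Q$ runs in time $2^{c|s_i|}=\poly(n)$, the total running time is polynomial. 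The denominator $N(w)=2^{\sum_{i<n} q(|s_i|)}$ is always a power of $2$ and computable in $\FP$, because the exponent $\sum_{i<n} 2^{c|s_i|}=O(n^{c+1})$ is polynomial-time computable.

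For the success claim, the key subtlety is that the error bound $2^{-2|x|}$ is vacuous at $|x|=0$, so a priori $\Pr[Q(\lambda)=A[0]]$ could be $0$ and annihilate the martingale at level $1$. I would first Chernoff-amplify $Q$ by taking majority of $\Theta(|x|+1)$ independent runs, obtaining a $2^{O(|x|)}$-time PTM $Q'$ whose error on every input $x \in \{0,1\}^*$ is at most $2^{-2|x|-1}$ (strictly less than $1/2$ even for $x=\lambda$). Substituting $Q'$ into the construction,
\[ d(A \restr n) \;=\; 2^n \prod_{i=0}^{n-1}\Pr[Q'(s_i)=A[i]] \;\geq\; 2^n \prod_{i=0}^{n-1}(1-2^{-2|s_i|-1}). \]
Using $\ln(1-x)\geq -2x$ for $x\leq 1/2$ and the fact that there are exactly $2^k$ strings of length $k$,
\[ \ln \prod_{i=0}^{n-1}(1-2^{-2|s_i|-1}) \;\geq\; -\sum_{i=0}^{\infty} 2^{-2|s_i|} \;=\; -\sum_{k=0}^{\infty} 2^k\cdot 2^{-2k} \;=\; -2, \]
so the product is bounded below by $e^{-2}$. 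Hence $d(A\restr n)\geq e^{-2}\cdot 2^n$, which exceeds $2^{(1-s)n}$ for every $s>0$ once $n$ is large enough, yielding $0$-strong success on $A$.

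The main obstacle is just the short-input issue; once amplification is done, everything else is driven by the geometric convergence of $\sum_i 2^{-2|s_i|}$, which collapses to $\sum_k 2^{-k}=2$ because the number of strings of length $k$ is exactly $2^k$. No delicate estimates are needed beyond this elementary bound.
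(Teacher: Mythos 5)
Your proposal is correct and follows essentially the same route as the paper: the same sequential-simulation argument for $\SharpP$-computability of the numerator (exploiting $|s_i|=O(\log n)$), and the same lower bound on $d(A\restr n)$ via convergence of $\sum_k 2^k\cdot 2^{-2k}$. Your extra care is worthwhile --- the paper's bound $\prod_{i}(1-2^{-2|s_i|})$ literally has a zero factor at $i=0$ since $|s_0|=0$, and your amplification to error $2^{-2|x|-1}$ (together with the rigorous $\ln(1-x)\geq -2x$ estimate in place of the paper's ``$1-x\approx e^{-x}$'') cleanly repairs that.
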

\begin{appendixproof}[Proof of Lemma \ref{le:bptime_martingales}]
	Let $t(n) = 2^{cn}$ and $p(n)=2n$. Then
	$M(w)$ is $\SharpP$-computable with run time on the order of
	\[ \sum_{i=0}^{n-1} t(|s_i|) \leq n \cdot t(|s_n|) \leq n 2^{c\log n} = n^{c+1}. \]
	Similarly,
	$N(w)$ is computable in $O(n^{c+1})$ time. Therefore $d$ is a $\SharpP$-martingale.
	We have
	\[ d(A \restr n) = \frac{M(A\restr n)}{N(A \restr n)} = 2^n \prod_{i=0}^{n-1} \Pr[M(s_i) = A[i]] \geq 2^n \prod_{i=0}^{n-1} (1-2^{-p(|s_i|)}) \]
	Using $1 - x \approx e^{-x}$, we have $d(A\restr n) = \Omega(2^n)$ because
	\[ 2^n \prod_{i=0}^{n-1} e^{-2^{-p(|s_i|)}} = 2^n e^{-\sum\limits_{i=0}^{n-1} 2^{-p(|s_i|)}} = \Omega(2^n). \]
	The last line holds because
	\[ \sum_{i=0}^\infty 2^{-p(|s_i|)} = \sum_{n=0}^\infty 2^n \cdot 2^{-2n} \] converges. Therefore $d$ $0$-strongly succeeds on $A$.
\end{appendixproof}

Analogously, one can handle $\BQE = \BQTIME(2^{O(n)})$ \cite{ArunachalamGriloGurOliveiraSundaram02,ChiaChouZhangZhang:ITCS22}
with $\GapP$ functions. For this, we need the class $\AWPP$ that was introduced by Fenner, Fortnow, Kurtz, and Li \cite{FennerFortnowKurtzLi03}.
The following definition is due to Fenner \cite{Fenner03}. See \cite{FeFoKu94,ForRog99,Fenner03} for more details.
\begin{definition}
	The class $\AWPP$ (almost-wide probabilistic polynomial-time) consists of
	the languages $L$ such that for all polynomials $r$, there is a polynomial $t$ and a $\GapP$
	function $g$ such that, for all $n$, for all $x \in \binary^n$,
	\begin{itemize}
		\item if $x \in L$, then $1 - 2^{-r(n)}\leq \frac{g(x)}{2^{t(n)}} \leq 1$, and
		\item if $x \not\in L$, then $0 \leq \frac{g(x)}{2^{t(n)}} \leq 2^{-r(n)}$.
	\end{itemize}
\end{definition}

We define an exponential version $\AWPE$ by allowing $g$ to be computable in time $2^{O(n)}$ in the definition above. The class $\GapE$ is defined just like $\GapP$ but using $\PTM$s that run in $2^{O(n)}$ time.

\begin{definition}
	The class $\AWPE$ (almost-wide probabilistic exponential-time) consists of
	the languages $L$ such that for all $r(n) = 2^{O(n)}$, there is $t(n)=2^{O(n)}$ and a $\GapE$
	function $g$ such that, for all $n$, for all $x \in \binary^n$,
	\begin{itemize}
		\item if $x \in L$, then $1 - 2^{-r(n)}\leq \frac{g(x)}{2^{t(n)}} \leq 1$, and
		\item if $x \not\in L$, then $0 \leq \frac{g(x)}{2^{t(n)}} \leq 2^{-r(n)}$.
	\end{itemize}
\end{definition}

\begin{theorem}\label{th:AWPTIME_GapP_1_succeeds}
	If $A \in \AWPE$, then Construction \ref{construction:probability_martingale_construction} is a $\GapP$-martingale that $0$-strongly succeeds on $A$.
\end{theorem}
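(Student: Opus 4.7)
The plan is to follow the strategy of Lemma \ref{le:bptime_martingales}, replacing the $\PTM$ from $\BPE$ with the $\GapE$ function provided by the $\AWPE$ definition. First, I would choose the error parameter $r(n) = 2n$, which satisfies $r(n) = 2^{O(n)}$, and apply the $\AWPE$ definition to $A$ to obtain an exponent $t(n) = 2^{O(n)}$ and a $\GapE$ function $g$ with $\bigl| g(x)/2^{t(|x|)} - \chi_A(x) \bigr| \leq 2^{-r(|x|)}$ for all $x$. Setting $q(|x|) = t(|x|)$, I would then define $f(x,1) = g(x)$ and $f(x,0) = 2^{t(|x|)} - g(x)$; both are nonnegative integers (since $0 \leq g(x) \leq 2^{t(|x|)}$ by the $\AWPE$ approximation) and sum to $2^{q(|x|)}$, so the hypothesis of Construction \ref{construction:probability_martingale_construction} is satisfied.

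Next, I would verify that $M(w) = 2^n \prod_{i=0}^{n-1} f(s_i, w[i])$ lies in $\GapP$, while $N(w) = 2^{\sum_{i<n} t(|s_i|)}$ is $\FP$-computable and a power of $2$. Because $|s_i| = O(\log n)$ for $i < n = |w|$, each $g(s_i)$ is computable in time $2^{O(|s_i|)} = \poly(n)$, so both $g(s_i)$ and $2^{t(|s_i|)} - g(s_i)$ are $\GapP$ functions of $w$ (using closure of $\GapP$ under subtraction from an $\FP$ function, and noting that $t(|s_i|) = \poly(n)$ bits). Invoking the standard closure of $\GapP$ under polynomially-bounded products \cite{FeFoKu94} then yields $M(w) \in \GapP$, so $d = M/N$ is an exact $\GapP$-martingale matching the output of Construction \ref{construction:probability_martingale_construction}.

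For $0$-strong success on $A$: whichever value $A[i]$ takes, the $\AWPE$ approximation guarantees $f(s_i, A[i])/2^{t(|s_i|)} \geq 1 - 2^{-r(|s_i|)}$, so
\[
  d(A \restr n) = 2^n \prod_{i=0}^{n-1} \frac{f(s_i, A[i])}{2^{t(|s_i|)}} \geq 2^n \prod_{i=0}^{n-1} \bigl(1 - 2^{-r(|s_i|)}\bigr).
\]
Since $\sum_{i \geq 0} 2^{-r(|s_i|)} = \sum_{k \geq 0} 2^k \cdot 2^{-2k}$ converges, the infinite product is bounded below by a positive constant and $d(A \restr n) = \Omega(2^n)$, which dominates $2^{(1-s)n}$ for every $s > 0$ at all sufficiently large $n$. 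The main obstacle is justifying the $\GapP$-computability of the product $M(w)$, which requires the nontrivial closure of $\GapP$ under polynomially-bounded products; once that is in hand, the remaining steps are direct analogues of the $\BPE$ argument, with $g/2^t$ playing the role of the acceptance probability.
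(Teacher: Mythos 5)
Your proposal is correct and follows essentially the same route as the paper's proof: instantiate the $\AWPE$ definition to get a $\GapE$ function $g$, set $f(x,1)=g(x)$ and $f(x,0)=2^{t(|x|)}-g(x)$ in Construction \ref{construction:probability_martingale_construction}, use closure properties of $\GapP$ (and the fact that $|s_i|=O(\log|w|)$ makes each factor polynomial-time) to get a $\GapP$-martingale, and bound $d(A\restr n)\geq 2^n\prod_i(1-2^{-r(|s_i|)})=\Omega(2^n)$ via convergence of $\sum_i 2^{-r(|s_i|)}$. The only cosmetic difference is your choice $r(n)=2n$ versus the paper's $r(n)=2^n$, which is immaterial, and you are slightly more explicit than the paper about nonnegativity of $f$ and the product-closure step.
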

\begin{appendixproof}[Proof of Theorem \ref{th:AWPTIME_GapP_1_succeeds}]
	Assume $A \in \AWPE$, pick $r(n) = 2^n$, and let $t(n) = 2^{cn}$ and $g \in \GapE$ be the corresponding functions from the definition of $\AWPE$ such that for all $n$ and for all $x \in \binary^n$,
	\begin{itemize}
		\item if $x \in A$, then $1 - 2^{-r(n)}\leq \frac{g(x)}{2^{t(n)}} \leq 1$,
		\item if $x \notin A$, then $0 \leq \frac{g(x)}{2^{t(n)}} \leq 2^{-r(n)}$.
	\end{itemize}
	To adapt the acceptance probability construction, define $f(x, 0) = 2^{t(n)} - g(x)$ and $f(x, 1) = g(x)$, and define
	\[M(w) = \prod_{i=0}^{n-1} 2 f(s_i,w[i])
		= 2^n \prod_{i=0}^{n-1} f(s_i,w[i]).\]
	An argument similar to the proof of Lemma \ref{le:bptime_martingales} together with the closure properties of $\GapP$ \cite{FeFoKu94} shows that $M$ computes a $\GapP$ function.
	Also define
	\[N(w) =  \prod_{i=0}^{n-1} 2^{t(|w_i|)}\]
	Now we have

	\[ d(A \restr n) = \frac{M(A\restr n)}{N(A \restr n)} = 2^n \prod_{i=0}^{n-1} \frac{f(s_i, w[i])}{2^{t(|w_i|)}} \geq 2^n \prod_{i=0}^{n-1} (1-2^{-r(|s_i|)}) \]
	Using $1 - x \approx e^{-x}$, this yields $d(A\restr n) = \Omega(2^n)$ because
	\[2^n \prod_{i=0}^{n-1} e^{-2^{-r(|s_i|)}} = 2^n e^{-\sum\limits_{i=0}^{n-1} 2^{-r(|s_i|)}} = \Omega(2^n).\]
	Therefore $d$ $0$-strongly succeeds on $A$.
\end{appendixproof}
A straightforward extension of Fortnow and Rogers' proof that $\BQP \subseteq \AWPP$ to exponential-time classes shows that
$\BQE \subseteq \AWPE$.
Combining this with Theorem \ref{th:AWPTIME_GapP_1_succeeds} gives us the following result.
\begin{theorem}\label{th:BQE_BQEXP_1_succeeds}
	If $A \in \BQE$, then there is a $\GapP$-martingale that $0$-strongly succeeds on $A$.
\end{theorem}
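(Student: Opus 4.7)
The plan is to prove the theorem in two steps. First, I would establish that $\BQE \subseteq \AWPE$, the exponential-time analogue of Fortnow and Rogers' result $\BQP \subseteq \AWPP$. Then I would invoke Theorem \ref{th:AWPTIME_GapP_1_succeeds} to obtain from any $A \in \BQE \subseteq \AWPE$ a $\GapP$-martingale that $0$-strongly succeeds on $A$. Since the second step is immediate, essentially all the work lies in establishing the containment $\BQE \subseteq \AWPE$.

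For the containment, I would follow the Fortnow-Rogers proof and verify that each of its ingredients scales with the time bound. Their construction writes the amplitude of a quantum Turing machine on each accepting computation as a difference of two quantities computable by $\PTM$s of the same running time, using the standard encoding of complex amplitudes as pairs of dyadic rationals whose denominators are fixed powers of two. For a quantum Turing machine $Q$ that runs in time $2^{cn}$ on inputs of length $n$, the amplitude on the accepting configurations is a signed sum over $2^{O(2^{cn})}$ computation paths, each contributing a quantity computable in time $2^{O(n)}$. This turns the amplitude-approximating function into a $\GapE$ function rather than a $\GapP$ function, and squaring to recover acceptance probabilities preserves membership in $\GapE$. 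Choosing the approximation parameter $r(n) = 2^{O(n)}$ with corresponding normalization $t(n) = 2^{O(n)}$ then meets the $\AWPE$ definition.

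Once $\BQE \subseteq \AWPE$ is in hand, let $A \in \BQE$. Then $A \in \AWPE$, and Theorem \ref{th:AWPTIME_GapP_1_succeeds} immediately provides a $\GapP$-martingale, given by Construction \ref{construction:probability_martingale_construction}, that $0$-strongly succeeds on $A$.

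The main obstacle is making sure the scaling to exponential time is genuinely clean. Concretely, I would need to verify that the closure properties of $\GapP$ used throughout the Fortnow-Rogers construction (sums, differences, products with binary trees of polynomial depth) all transfer to $\GapE$ when the underlying $\PTM$s run in $2^{O(n)}$ time instead of polynomial time, and that the precision required to approximate quantum amplitudes to within $2^{-r(n)}$ with $r(n)=2^{O(n)}$ does not outstrip what $\GapE$ can deliver. These are essentially bookkeeping checks, which is why the paper is comfortable asserting this as a ``straightforward extension,'' but a careful write-up would re-examine each lemma in the Fortnow-Rogers chain under the new time bound and confirm that the error bounds and $\GapE$-closure arguments go through unchanged.
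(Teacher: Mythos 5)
Your proposal is correct and follows exactly the paper's route: the paper also obtains the result by asserting that the Fortnow--Rogers argument for $\BQP \subseteq \AWPP$ scales to show $\BQE \subseteq \AWPE$ and then applying Theorem \ref{th:AWPTIME_GapP_1_succeeds}. Your write-up is in fact more careful than the paper's, which simply calls the containment a ``straightforward extension'' without spelling out the $\GapE$ closure and precision checks you identify.
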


Athreya et al. \cite{Athreya:ESDAICC} showed that $\DTIME(2^{cn})$ has $\P$-strong dimension 0 for all $c \geq 1$.
Using the above results and the Counting Measure Union lemma \ref{lemma:counting-measure-union}, we can extend this result to $\BPTIME$ and $\BQTIME$ under $\SharpP$ and $\GapP$ strong dimensions.

\begin{corollary}\label{co:BPTIME_BQTIME_dimension_0}
	For all $c \geq 1$,
	\begin{enumerateC}
		\item $\BPTIME(2^{cn})$ has $\SharpP$-strong dimension 0.
		\item $\BQTIME(2^{cn})$ has $\GapP$-strong dimension 0.
	\end{enumerateC}
\end{corollary}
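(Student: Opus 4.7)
The plan is to enumerate all $2^{cn}$-time PTMs (resp.\ QTMs), apply Construction \ref{construction:probability_martingale_construction} to each via Lemma \ref{le:bptime_martingales} (resp.\ Theorem \ref{th:AWPTIME_GapP_1_succeeds}), and then take a weighted sum of the resulting martingales to produce a single $\SharpP$- (resp.\ $\GapP$-) martingale that $s$-strongly succeeds on the entire class for every $s > 0$.

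For part 1, fix $c \geq 1$ and let $(M_i)_{i \in \N}$ be a standard uniform enumeration of clocked PTMs running in time $2^{cn}$. Amplify each $M_i$ by majority vote over $\Theta(n)$ independent trials to obtain $M_i'$ with error probability at most $2^{-2n}$ (whenever $M_i$ has bounded error to begin with) and running time $2^{O(n)}$. By Lemma \ref{le:bptime_martingales}, applied uniformly in $i$, each $M_i'$ yields an exact $\SharpP$-martingale $\tilde{d}_i$ with $\tilde{d}_i(\lambda) = 1$ that $0$-strongly succeeds on the language of $M_i'$ whenever that language is genuinely in $\BPTIME(2^{cn})$. Set $d_i = 2^{-i-1}\tilde{d}_i$, so that $(d_i)_{i \in \N}$ is a uniform family of exact $\SharpP$-martingales.

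Using the trivial bound $\tilde{d}_i(w) \leq 2^{|w|}$ valid for any martingale with initial value $1$, we have $\sum_{i \geq N} d_i(w) \leq 2^{|w| - N}$, so $m(w, r) = |w| + r$ is a polynomial-time modulus witnessing uniform $\p$-convergence. The Counting Martingale Summation Lemma (Lemma \ref{lemma:counting-martingale-summation}) then produces the $\SharpP$-martingale $d := \sum_{i \in \N} d_i$. For any $A \in \BPTIME(2^{cn})$, fix $i$ such that $M_i$ decides $A$ with bounded error; for every $s > 0$ and all sufficiently large $n$ (depending on $i$ and $s$),
\[ d(A\restr n) \;\geq\; d_i(A\restr n) \;=\; 2^{-i-1}\tilde{d}_i(A\restr n) \;\geq\; 2^{-i-1}\cdot 2^{(1 - s/2)n} \;\geq\; 2^{(1-s)n}, \]
where we have applied $0$-strong success of $\tilde{d}_i$ at parameter $s/2$. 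Hence $d$ $s$-strongly succeeds on all of $\BPTIME(2^{cn})$ for every $s > 0$, giving $\Dim_\SharpP(\BPTIME(2^{cn})) = 0$.

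Part 2 follows by the same approach, enumerating $2^{cn}$-time QTMs, invoking the inclusion $\BQTIME(2^{cn}) \subseteq \BQE \subseteq \AWPE$, and using Theorem \ref{th:AWPTIME_GapP_1_succeeds} in place of Lemma \ref{le:bptime_martingales} to produce a uniform family of exact $\GapP$-martingales. The only delicate point I anticipate is verifying that the passage from $M_i$ to the $\AWPE$-approximating $\GapP$-function, and then to the acceptance-probability martingale, is uniform in $i$; this holds because the Fortnow and Rogers proof of $\BQP \subseteq \AWPP$ is explicitly uniform in the QTM description, and Construction \ref{construction:probability_martingale_construction} is uniform in the $\GapP$-function it takes as input.
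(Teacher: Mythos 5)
Your proposal is correct and matches the paper's (unwritten) argument: the paper derives this corollary in one sentence by combining Lemma \ref{le:bptime_martingales} and Theorem \ref{th:AWPTIME_GapP_1_succeeds} with the union/summation machinery of Section \ref{sec:counting_martingales}, which is exactly the weighted-sum-over-an-enumeration-of-machines argument you spell out. Your explicit modulus $m(w,r)=|w|+r$ and the $s/2$ slack in the success computation are the right details to make that one-sentence derivation rigorous.
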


\begin{corollary}\label{co:BPP_BQP_dimension_0}
	\begin{enumerateC}
		\item    $\BPP$ has $\SharpP$-strong dimension 0.
		\item    $\BQP$ has $\GapP$-strong dimension 0.
	\end{enumerateC}
\end{corollary}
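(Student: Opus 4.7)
The plan is to derive each part as an immediate consequence of the previous Corollary \ref{co:BPTIME_BQTIME_dimension_0}, using monotonicity of strong dimension under set inclusion.

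For part 1, I would first observe that $\BPP \subseteq \BPTIME(2^n)$: every $\BPP$ language runs in time $n^c$ for some fixed $c$, and since $n^c = O(2^n)$, we have $\BPTIME(n^c) \subseteq \BPTIME(2^n)$, hence $\BPP = \bigcup_c \BPTIME(n^c) \subseteq \BPTIME(2^n)$. Applying Corollary \ref{co:BPTIME_BQTIME_dimension_0}(1) with $c = 1$ gives $\Dim_\SharpP(\BPTIME(2^n)) = 0$. Since $\SharpP$-strong dimension is monotone under set inclusion (any martingale that $s$-strongly succeeds on a class also $s$-strongly succeeds on every subclass), we conclude $\Dim_\SharpP(\BPP) = 0$.

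For part 2, the argument is identical: $\BQP \subseteq \BQTIME(2^n)$, and Corollary \ref{co:BPTIME_BQTIME_dimension_0}(2) with $c = 1$ yields $\Dim_\GapP(\BQTIME(2^n)) = 0$, so monotonicity gives $\Dim_\GapP(\BQP) = 0$.

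There is essentially no obstacle; all the real work has been done in establishing Corollary \ref{co:BPTIME_BQTIME_dimension_0}, which in turn used the Counting Measure Union Lemma (Lemma \ref{lemma:counting-measure-union}) applied to the Acceptance Probability Construction (Construction \ref{construction:probability_martingale_construction}) via Lemma \ref{le:bptime_martingales} for the classical case and Theorem \ref{th:BQE_BQEXP_1_succeeds} (through the containment $\BQE \subseteq \AWPE$) for the quantum case. If one prefers to argue without appealing to the polynomial-versus-exponential time containment, one can apply the Counting Dimension Union Lemma (Lemma \ref{lemma:counting-dimension-union}) directly to the uniform family of $\SharpP$-martingales from Lemma \ref{le:bptime_martingales}, indexed by pairs $(c,e)$ of exponents and $\PTM$ codes, reindexed via a pairing $k = \langle c,e \rangle$ and rescaled to $\tilde{d}_k = 2^{-k} d_k$. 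The rescaling preserves uniformity and 0-strong success, and $\tilde{d}_k(\lambda) = 2^{-k} \leq 2^{(s-1)k}$ for every $s > 0$, so Lemma \ref{lemma:counting-dimension-union}(2) gives $\SharpP$-strong dimension at most $s$; letting $s \to 0^+$ finishes the argument. The quantum case is completely parallel.
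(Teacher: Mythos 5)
Your proposal is correct and matches the paper's (implicit) derivation: the corollary follows immediately from Corollary \ref{co:BPTIME_BQTIME_dimension_0} via the containments $\BPP \subseteq \BPTIME(2^n)$ and $\BQP \subseteq \BQTIME(2^n)$ together with monotonicity of strong dimension under inclusion. The alternative route you sketch through the Counting Dimension Union Lemma is a fine sanity check but is not needed.
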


We have the following Corollary to this section as a companion to Corollary \ref{co:SharpP_SpanP_random_UP_NP}. \begin{corollary}\label{co:random_oracles_BPE_BPEXP_BQE_BQEXP}
	\begin{enumerateC}
		\item Every $\SharpP$-random oracle is not in $\BPE$.
		\item Every $\GapP$-random oracle is not in $\BQE$.
	\end{enumerateC}
\end{corollary}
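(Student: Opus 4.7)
The plan is to obtain both parts as immediate contrapositives of the two main results already established in this subsection. By definition, a sequence $S$ is $\Delta$-random precisely when no $\Delta$-martingale succeeds on $S$, so it suffices to exhibit, for each oracle $A$ in the target class, a counting martingale of the appropriate flavor that succeeds on $A$.

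For part 1, the approach is to observe that if $A \in \BPE$, then Lemma \ref{le:bptime_martingales} already produces a $\SharpP$-martingale $d$ that $0$-strongly succeeds on $A$. Since $0$-strong success implies ordinary success (indeed, the bound $d(A\restr n) = \Omega(2^n)$ obtained in that lemma forces $\limsup_n d(A\restr n) = \infty$), the oracle $A$ cannot be $\SharpP$-random. The contrapositive is exactly the claim.

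For part 2, the argument is parallel, replacing Lemma \ref{le:bptime_martingales} by Theorem \ref{th:BQE_BQEXP_1_succeeds}, which furnishes a $\GapP$-martingale that $0$-strongly succeeds on any $A \in \BQE$. The same implication from $0$-strong success to success then contradicts $\GapP$-randomness of $A$.

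There is essentially no technical obstacle to confront here: the substantive work has already been done by Construction \ref{construction:probability_martingale_construction} together with Lemma \ref{le:bptime_martingales} on the classical side, and by the containment $\BQE \subseteq \AWPE$ combined with Theorem \ref{th:AWPTIME_GapP_1_succeeds} on the quantum side. The only point requiring a moment's care is the observation that $0$-strong success entails success in the $\limsup$ sense, which is immediate from the definitions.
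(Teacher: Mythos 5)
Your proposal is correct and matches the paper's intent exactly: the corollary is stated without an explicit proof precisely because it follows immediately from Lemma \ref{le:bptime_martingales} and Theorem \ref{th:BQE_BQEXP_1_succeeds}, via the observation that $0$-strong success implies success in the $\limsup$ sense and hence contradicts $\Delta$-randomness. Nothing further is needed.
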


\subsection{Bi-immunity Martingale Construction}

Mayordomo \cite{Mayo94} showed that $\P$-random languages are $\E$-bi-immune. We extend her construction to show that $\SharpP$-random languages are $\UE\cap\coUE$-immune.
The same construction also shows $\SpanP$-random languages are $\NE\cap\coNE$-bi-immune and $\GapP$-random languages are $\SPE$-bi-immune.

\newcommand{\Lw}{L(w)}

\begin{construction_named}{Bi-immunity Martingale}\label{construction:bi-immunity_martingale}
	Let $A \subseteq \{0,1\}^*$. Define a martingale $d$ by $d(\lambda) = 1$ and for all $w \in \{0,1\}^*$,\
	\[ d(w1) =
		\begin{cases}
			2d(w) & \textrm{if } s_{|w|} \in A     \\
			d(w)  & \textrm{if } s_{|w|} \not\in A
		\end{cases} \]
	and
	\[ d(w0) =
		\begin{cases}
			0    & \textrm{if } s_{|w|} \in A      \\
			d(w) & \textrm{if } s_{|w|} \not\in A.
		\end{cases} \]
	For all $w \in \{0,1\}^*$, again writing $\Lw = \{ s_i \mid w[i] = 1 \},$
	we have
	\[ d(w) =
		\begin{cases}
			2^{\#_1(A\thinspace\restr\thinspace n)} & \textrm{if }A \subseteq L(w) \\
			0                                       & \textrm{otherwise.}
		\end{cases} \]

	\begin{figure}
		\begin{center}
			\begin{tikzpicture}[level distance=1.5cm,
					level 1/.style={sibling distance=12cm},
					level 2/.style={sibling distance=6cm},
					level 3/.style={sibling distance=3cm},
					level 4/.style={sibling distance=1.5cm},scale=.47,
					every node/.style={draw, rectangle, align=center}]
				\node[fill=green]{1}
				child {node [fill=green] {1}
						child {node  {0}
								child {node  {0}
										child {node  {0}} child {node  {0}} }
								child {node {0}
										child {node {0}} child {node {0}} }
							}
						child {node [fill=green] {2}
								child {node [fill=green] {2}
										child {node {0} } child {node [fill=green] {4}} }
								child {node [fill=green] {2}
										child {node {0}}
										child {node [fill=green] {4}}
									}
							}
					}
				child {node [fill=green] {1}
						child {node {0}
								child {node {0}
										child {node {0}}
										child {node {0}}
									}
								child {node {0}
										child {node {0}}
										child {node {0}}
									}
							}
						child {node [fill=green] {2}
								child {node [fill=green] {2}
										child {node {0}}
										child {node [fill=green] {4}}
									}
								child {node [fill=green] {2}
										child {node {0}}
										child {node [fill=green] {4}}
									}
							}
					};
			\end{tikzpicture}
		\end{center}\caption{Bi-immunity Martingale Construction (Construction \ref{construction:bi-immunity_martingale})}\label{figure:biimunity_martingale}
	\end{figure}
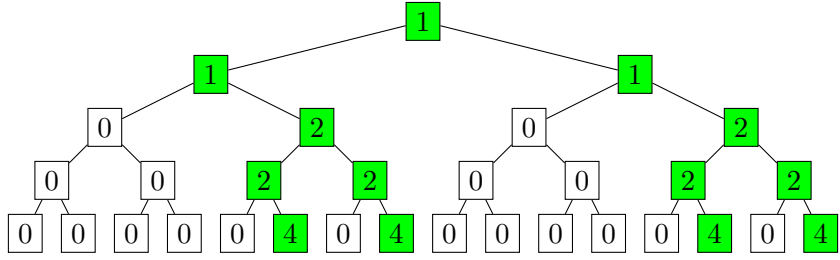 See Figure \ref{figure:biimunity_martingale} for an example with $n = 4$ and $A \cap [s_0,s_3] = \{s_1,s_3\}$. The nodes $w \in \{0,1\}^{\leq 4}$ that are colored green in the tree are those with $A \subseteq L(w)$.

\end{construction_named}
Mayordomo \cite{Mayo94}  showed that if $A \in \E$, then Construction \ref{construction:bi-immunity_martingale} is a $\P$-martingale and if $A \in \ESPACE$, then Construction \ref{construction:bi-immunity_martingale} is a $\PSPACE$-martingale.
For any infinite language $A$, Mayordomo showed Construction \ref{construction:bi-immunity_martingale} succeeds on all languages that contain $A$.

\begin{lemma_cite}{Mayordomo \cite{Mayo94}}
	If $A$ is infinite, then Construction \ref{construction:bi-immunity_martingale} has $S^\infty[d] = \{ B \mid A \subseteq B \}$.
\end{lemma_cite}

We generalize Mayordomo's construction to counting martingales.

\begin{lemma}\label{le:bi-immunity_martingale_spp_gapp}
	\begin{enumerateC}
		\item If $A \in \UE \cap \coUE$, then Construction \ref{construction:bi-immunity_martingale} is an exact $\SharpP$-martingale.\item If $A \in \NE \cap \coNE$, then Construction \ref{construction:bi-immunity_martingale} is an exact $\SpanP$-martingale.
		\item If $A \in \SPE$, then Construction \ref{construction:bi-immunity_martingale} is an exact $\GapP$-martingale.
	\end{enumerateC}
\end{lemma}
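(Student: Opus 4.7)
The plan is to establish the product formula
\[ d(w) \;=\; \prod_{i=0}^{n-1} c_i(w), \qquad n = |w|, \]
where
\[ c_i(w) \;=\; \begin{cases} 2 & \text{if } s_i \in A \text{ and } w[i] = 1, \\ 0 & \text{if } s_i \in A \text{ and } w[i] = 0, \\ 1 & \text{if } s_i \notin A, \end{cases} \]
by a one-line induction on $n$ from the recurrence defining the bi-immunity martingale, matching the closed form $d(w) = 2^{\#_1(A\restr n)} \cdot \mathbf{1}_{A \subseteq L(w)}$ recorded in Construction \ref{construction:bi-immunity_martingale}. Setting $g(w) \equiv 1$ (trivially in $\FP$ and a power of $2$), it suffices to show that $f(w) := d(w)$ is computable in the appropriate counting class in each case. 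Throughout, since $|s_i| = O(\log n)$ for $i < n$, every exponential-time verifier for $A$ or $\bar A$ runs on $s_i$ in time polynomial in $n$.

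For the $\SharpP$ case ($A \in \UE \cap \coUE$), let $M_A$ and $M_{\bar A}$ be unique-witness verifiers. I build a PTM on input $w$ that processes $i = 0,\dots,n-1$ sequentially: if $w[i] = 0$, it simulates $M_{\bar A}(s_i)$ and continues only along its (unique, if any) accepting branch; if $w[i] = 1$, it first guesses $b_i \in \{0,1\}$ and then either simulates $M_A(s_i)$ (when $b_i = 0$) or further guesses $c_i \in \{A,\bar A\}$ and simulates the corresponding verifier (when $b_i = 1$). A per-index tally shows that index $i$ contributes exactly $c_i(w)$ surviving sub-paths, so the total number of accepting paths is $\prod_i c_i(w) = d(w)$.

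The $\SpanP$ case ($A \in \NE \cap \coNE$) is the main obstacle, because $\NE$-witnesses need not be unique and naively counting accepting paths overcounts by the number of valid witnesses per claim. The fix is to have each accepting branch emit a short canonical label that depends only on the combinatorial choices at index $i$ (namely $w[i]$ and, when $w[i] = 1$, the guessed bits $b_i$ and $c_i$), and \emph{not} on the guessed witness; the branch outputs iff the witness verifies, otherwise it emits nothing. All valid witnesses sharing a combinatorial choice collapse to the same label, so the distinct per-index labels are exactly the $c_i(w)$ surviving tags. Concatenating into an $n$-tuple label then yields $\prod_i c_i(w) = d(w)$ distinct outputs, placing $d$ in $\SpanP$.

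For the $\GapP$ case ($A \in \SPE$), the definition of $\SPE$ supplies a $\GapE$ function $\chi_A$ equal to the characteristic function of $A$; evaluated at $s_i$ of length $O(\log n)$, this becomes a $\GapP$ function of $(w,i)$. Writing $c_i(w) = 1 + \chi_A(s_i)(2w[i] - 1)$ and invoking closure of $\GapP$ under addition, $\FP$-scaling, and polynomial-length products \cite{FeFoKu94}, we obtain $d(w) = \prod_{i<n} c_i(w) \in \GapP$, completing the exact $\GapP$-martingale representation.
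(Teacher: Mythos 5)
Your proposal is correct, and the $\GapP$ case is essentially identical to the paper's (the paper writes $f(w,1)=1+g(s_{|w|})$, $f(w,0)=1-g(s_{|w|})$, which is your $1+\chi_A(s_i)(2w[i]-1)$, followed by the same appeal to $\GapP$ closure under polynomial-length products). For the $\SharpP$ and $\SpanP$ cases, however, you take a genuinely different route. The paper guesses, for every $i<|w|$, the membership status of $s_i$ together with a witness, checks $A\restr|w|\subseteq L(w)$ on the (semantically unique) consistent guess, and outputs the value $2^{c_A(|w|)}$ there; this makes $d$ a $\UPSV$ (resp.\ $\NPSV$) function, which is then realized as a $\SharpP$ (resp.\ $\SpanP$) function by the standard fan-out of a single-valued output into that many accepting paths or distinct outputs. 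You instead engineer the raw count directly: your per-index gadget (one branch running $M_A$, plus one branch that guesses which of $M_A,M_{\bar A}$ to run) contributes exactly $[s_i\in A]+1$ surviving sub-branches when $w[i]=1$ and kills the path via $M_{\bar A}$ when $w[i]=0$ and $s_i\in A$, so the accepting-path count factors as $\prod_i c_i(w)=d(w)$ with no post hoc fan-out. Your canonical-labeling fix for the $\SpanP$ case is the right move and is necessary precisely because $\NE$-witnesses are not unique; it plays the role that single-valuedness of the $\NPSV$ function plays in the paper. Both arguments are sound and both rely on the same crucial observation that the witnesses for $s_i$ have length $2^{O(|s_i|)}=|w|^{O(1)}$; yours is more self-contained, while the paper's is shorter by delegating to the known inclusions $\UPSV$-into-$\SharpP$ and $\NPSV$-into-$\SpanP$.
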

\begin{appendixproof}[Proof of Lemma \ref{le:bi-immunity_martingale_spp_gapp}]
	For parts 1 and 2, on input $w$, for each $i < |w|$, guess whether $s_i \in A$ or $s_i \not\in A$ and a witness that proves this.
	Note that the witnesses have total length $2^{O(n)}$ which is polynomial in $|w|$. If witnesses are found for all $s_i$ and prove that $A \restr |w| \subseteq \Lw$, output $d(w) = 2^{|A
				\;\restr \; |w||}$; otherwise output $0$.
	\begin{enumerate}
		\item If $A \in \UE \cap \coUE$, then $d$ is a $\UPSV$ function which may be implemented in $\SharpP$.
		\item If $A \in \NE \cap \coNE$, then $d$ is an $\NPSV$ function which may be implemented in $\SpanP$.\end{enumerate}
	For part 3, let $g \in \GapE$ such that $g(x) = 1$ if $x \in A$ and $g(x) = 0$ if $x \not \in A$. Define
	\begin{eqnarray*}
		f(w,1) &=& 1 + g(s_{|w|})\\
		f(w,0) &=& 1 - g(s_{|w|})
	\end{eqnarray*}
	for all $w \in \{0,1\}^*$. Then $f \in \GapP$ and \[ d(wb) = f(w,b) d(w) \] for all $w \in \{0,1\}^*$ and $b \in \{0,1\}$. Therefore
	\[ d(w) = \prod_{i=0}^{|w|-1} f(w\restr i, w[i]) \]
	is a $\GapP$ function.
\end{appendixproof}

We can now conclude bi-immunity results for counting random languages. Part 3 of the following corollary improves part 3 of Corollary \ref{co:SharpP_SpanP_random_UP_NP}.

\begin{corollary}\label{co:gapp_random_spp_bi_immune}
	\begin{enumerateC}
		\item Every $\SharpP$-random language is $\UE \cap \coUE$-bi-immune.
		\item Every $\SpanP$-random language is $\NE \cap \coNE$-bi-immune.
		\item Every $\GapP$-random language is $\SPE$-bi-immune.
	\end{enumerateC}
\end{corollary}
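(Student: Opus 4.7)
The plan is to prove all three parts uniformly by contrapositive, leveraging Lemma \ref{le:bi-immunity_martingale_spp_gapp} and the closure of each class $\UE \cap \coUE$, $\NE \cap \coNE$, and $\SPE$ under complement. Fix a counting resource bound $\Delta \in \{\SharpP,\SpanP,\GapP\}$ paired with the corresponding complexity class $\calC \in \{\UE \cap \coUE,\NE \cap \coNE,\SPE\}$. Recall that $R$ is $\calC$-bi-immune iff neither $R$ nor its complement $\bar R$ contains any infinite member of $\calC$.

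Suppose, for contradiction, that $R$ is $\Delta$-random but not $\calC$-bi-immune. Then there exists an infinite $A \in \calC$ with either $A \subseteq R$ or $A \subseteq \bar R$. In the first case, I apply the Bi-immunity Martingale (Construction \ref{construction:bi-immunity_martingale}) directly to $A$. By Lemma \ref{le:bi-immunity_martingale_spp_gapp}, this is an exact $\Delta$-martingale $d$, and by Mayordomo's lemma (the infinite-$A$ characterization of $S^\infty[d]$ stated just above Lemma \ref{le:bi-immunity_martingale_spp_gapp}), $S^\infty[d] = \{B \mid A \subseteq B\}$, so $R \in S^\infty[d]$, contradicting the $\Delta$-randomness of $R$.

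The second case, $A \cap R = \emptyset$, is handled by the \emph{dual} version of Construction \ref{construction:bi-immunity_martingale}, in which the roles of the bits $0$ and $1$ are exchanged: set $d(w0) = 2d(w)$, $d(w1) = 0$ when $s_{|w|} \in A$, and $d(w0) = d(w1) = d(w)$ otherwise. A straightforward induction gives
\[ d(w) = \begin{cases} 2^{\#_1(A \restr n)} & \text{if } A \cap L(w) = \emptyset \\ 0 & \text{otherwise,} \end{cases} \]
so $S^\infty[d] = \{B \mid A \cap B = \emptyset\}$, which contains $R$. Since $\calC$ is closed under complement, the complexity analysis of Lemma \ref{le:bi-immunity_martingale_spp_gapp} transfers verbatim to this dual construction (one may equivalently run the original construction on $\bar A \in \calC$ and observe the induced target class changes by the same swap): we still build $d(w)$ from guesses, for each $i < |w|$, of membership of $s_i$ in $A$ together with certificates witnessing that outcome, and the final factor is $2$ or $0$ according to whether $w[i]$ agrees with the new desired bit. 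This yields a $\Delta$-martingale succeeding on $R$, again contradicting $\Delta$-randomness.

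The only non-routine step is noticing that the ``$A \subseteq \bar R$'' case is not covered by Mayordomo's original lemma and must be recovered by the symmetric construction; once that observation is in place, the three parts follow in parallel since Lemma \ref{le:bi-immunity_martingale_spp_gapp} already supplies the complexity bookkeeping for $\SharpP$, $\SpanP$, and $\GapP$ respectively. No new estimates are required beyond those in the preceding lemma.
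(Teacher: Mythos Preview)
Your proof is correct and follows the same approach as the paper's: apply Construction \ref{construction:bi-immunity_martingale} via Lemma \ref{le:bi-immunity_martingale_spp_gapp} and Mayordomo's lemma for the immunity of $R$, then use closure of $\calC$ under complement to handle co-immunity. The paper compresses your explicit dual construction into a single sentence, but the substance is identical. One minor caveat: your parenthetical suggestion to ``run the original construction on $\bar A$'' is not literally equivalent to the bit-swapped martingale you define (the original construction on $\bar A$ has success set $\{B : \bar A \subseteq B\}$, which need not contain $R$ when merely $A \subseteq R^c$); fortunately your argument does not actually rely on that aside, and your direct verification that the dual construction remains a $\Delta$-martingale is sound.
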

\begin{appendixproof}[Proof of Corollary \ref{co:gapp_random_spp_bi_immune}]
	Let $R$ be $\GapP$-random and let $A \in \SPE$. By the previous two lemmas, $R$ is $A$-immune. Since $\SPE$ is closed under complement, $R^c$ is also $A$-immune. The other two parts are analogous.
\end{appendixproof}
Since $\SPP$ contains the counting classes $\UP$, $\FewP$, and $\Few$, $\GapP$-random languages are also immune to these classes.

\sectionnewpage
\section{Entropy Rates and Kolmogorov Complexity}\label{sec:entropy_rates_kolmogorov_complexity}
We will use the Cover Martingale and Conditional Expectation Martingale Constructions (Constructions \ref{construction:cover_martingale} and \ref{construction:random_variable_martingale}) to develop a few tools for working with counting measures and dimensions. First, we extend the entropy rates used by Hitchcock and Vinodchandran \cite{Hitchcock:DERC} to our setting. Then we extend Lutz's results on Kolmogorov complexity and $\PSPACE$-measure to counting measure.
\subsection{Entropy Rates}\label{subsec:entropy_rates}
We will show in this section that the Cover Martingale Construction (Construction \ref{construction:cover_martingale}) may be combined with the concept of entropy rates to build counting martingales.

\begin{definition}
	The {\em entropy
	rate} of a language $A \subseteq \{0,1\}^*$
	is \[ H_A = \limsupn \frac{\log |A_{=n}|}{n}. \]
\end{definition}
Intuitively, $H_A$ gives an asymptotic measurement of the amount by
which every string in $A_{=n}$ is compressed in an optimal code \cite{Kuich70}.
\begin{definition}  Let $A \subseteq \{0,1\}^*$.
	The {\em i.o.-class
			of $A$} is \[ \ioA = \{ S \in \C \mid (\exists^\infty n)\ S \restr
		n \in A\}. \]
	The {\em a.e.-class of $A$} is \[ \aeA = \{ S \in \C \mid (\forall^\infty n)\ S
		\restr n \in A\}. \]
\end{definition}
\noindent
That is, $\ioA$ is the class of sequences that have infinitely many
prefixes in $A$ and $\aeA$ is the class of sequences that have all but finitely many prefixes in $A$.

\newcommand{\HCstr}{\calH_\calC^\str}
\begin{definition_cite}{Hitchcock \cite{Hitchcock:CPED,Hitchcock:phdthesis}}  Let $\calC$ be a class of languages and $X
		\subseteq \C$.
	The {\em $\calC$-entropy rate} of $X$ is
	\[ \HC(X) = \inf \{ H_A \mid A \in \calC\textrm{ and }X \subseteq
		\ioA\}. \]
	The {\em $\calC$-strong entropy rate} of $X$ is
	\[ \HCstr(X) = \inf \{ H_A \mid A \in \calC\textrm{ and }X \subseteq
		\aeA\}. \]
\end{definition_cite}
\noindent
Informally, $\HC(X)$ is the lowest entropy rate with which every
element of $X$ can be covered infinitely often by a language in
$\calC$.
We may also interpret $\HC(X)$ as a notion of dimension.
For all $X \subseteq \C$, it is known that
$\dimh(X) = \calH_\ALL(X),$ where $\ALL$ is the class of all
languages and $\dimh$ is Hausdorff dimension (see \cite{Roge98,Staiger93}). Hitchcock \cite{Hitchcock:CPED,Hitchcock:phdthesis} showed that $\dimpack(X) = \calH_\ALL^\str(X)$ and using other
classes gives equivalent definitions of the constructive dimension ($\cdim(X) = \calH_\CE(X)$ and $\cDim(X) = \calH_\CE^\str$), computable dimension ($\dimcomp(X) = \calH_\DEC(X)$ and $\Dimcomp(X) = \calH_\DEC^\str(X)$), and polynomial-space dimension ($\dimpspace(X) = \calH_\PSPACE(X)$ and $\Dimpspace(X) = \calH_\PSPACE^\str(X)$).
For time-bounded dimension, no analogous result is known.
However, the following upper bounds are true \cite{Hitchcock:CPED,Hitchcock:phdthesis}:
for all $X \subseteq \C$, $\calH_{\P}(X) \leq \dimp(X)$ and $\calH_{\P}^\str(x) \leq \Dimp(X)$, where $\dimp$ is the polynomial-time dimension and $\Dimp$ is the polynomial-time strong dimension.

The $\NP$-entropy rate is an upper bound for
$\Deltapthree$-dimension.

\begin{theorem_cite}{Hitchcock and Vinodchandran \cite{Hitchcock:DERC}}\label{th:DERC_deltathree_HNP}
	For all $X \subseteq \C$,  \[ \dimDeltapthree(X)
		\leq \calH_\NP(X). \]
\end{theorem_cite}

The strong dimension analogue of Theorem \ref{th:DERC_deltathree_HNP} also holds:  \[ \Dim_\Deltapthree(X)  \leq \calH_\NP^\str(X) \] for all $X \subseteq \C$.

We now show that the $\NP$-entropy rate upper bounds $\SpanP$-dimension. Analogously, the $\UP$-entropy rate upper bounds $\SharpP$-dimension and the $\SPP$-entropy rate upper bounds $\GapP$-dimension. The proof uses Construction \ref{construction:cover_martingale} and Lemma \ref{le:cover_martingale_construction}.

\begin{theorem}\label{th:CountingDimLeqEntropyRates}
	For all $X \subseteq \C$,
	\begin{enumerateC}
		\item $\dim_\SharpP(X) \leq \calH_\UP(X) \leq \calH_\P(X)$
		\item $\dim_\SpanP(X) \leq \calH_\NP(X).$
		\item $\dim_\GapP(X) \leq \calH_\SPP(X).$
		\item $\Dim_\SharpP(X) \leq \calH_\UP^\str(X) \leq \calH_\P^\str(X)$
		\item $\Dim_\SpanP(X) \leq \calH_\NP^\str(X).$
		\item $\Dim_\GapP(X) \leq \calH_\SPP^\str(X).$
	\end{enumerateC}
\end{theorem}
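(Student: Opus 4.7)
The plan is to prove all six inequalities uniformly by applying the Cover Martingale Construction at every length, packaging the resulting martingales as a uniform family of exact counting martingales via Lemma \ref{le:cover_martingale_construction}, and then invoking the Counting Dimension Borel-Cantelli Lemma (Lemma \ref{lemma:counting-dimension-borel-cantelli}). The trivial inclusions $\calH_\UP(X) \leq \calH_\P(X)$ and $\calH_\UP^\str(X) \leq \calH_\P^\str(X)$ in parts 1 and 4 follow immediately from $\P \subseteq \UP$, so it suffices to bound each counting dimension by the entropy rate over the matching nondeterministic class: $\SharpP$ by $\calH_\UP$, $\SpanP$ by $\calH_\NP$, and $\GapP$ by $\calH_\SPP$.

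Fix one of the three cases, say $\dim_\SpanP(X) \leq \calH_\NP(X)$. Let $s > \calH_\NP(X)$, and choose $A \in \NP$ with $X \subseteq \ioA$ and $H_A < s$. For each $n \in \N$ I would apply Construction \ref{construction:cover_martingale} with this $A$ at length $n$ to obtain a martingale $d_n$. By Lemma \ref{le:cover_martingale_construction}, the family $(d_n)_{n \in \N}$ is a uniform family of exact $\SpanP$-martingales. From the construction, $d_n(\lambda) = |A_{=n}|/2^n$, so the condition $H_A < s$ yields $d_n(\lambda) \leq 2^{(s-1)n}$ for all sufficiently large $n$; for the finitely many exceptional indices I replace $d_n$ by the zero martingale, which does not change either $\limsupn S^1[d_n]$ or $\liminfn S^1[d_n]$. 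Because $S^1[d_n] = \bigcup_{w \in A_{=n}} \C_w = \{S \in \C \mid S \restr n \in A\}$, the identities $\limsupn S^1[d_n] = \ioA$ and $\liminfn S^1[d_n] = \aeA$ hold. Thus $X \subseteq \ioA = \limsupn S^1[d_n]$, and Lemma \ref{lemma:counting-dimension-borel-cantelli} gives $\dim_\SpanP(X) \leq s$. Taking the infimum over $s > \calH_\NP(X)$ gives part 2.

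The same argument handles parts 1 and 3: when $A \in \UP$, Lemma \ref{le:cover_martingale_construction} gives a uniform family of exact $\SharpP$-martingales; when $A \in \SPP$, it gives a uniform family of exact $\GapP$-martingales. The strong-dimension parts 4, 5, and 6 are identical except that one fixes $A$ with $X \subseteq \aeA$ and uses the $\liminfn$ half of Lemma \ref{lemma:counting-dimension-borel-cantelli} to conclude the counting strong dimension is at most $s$.

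The only delicate point is verifying that Construction \ref{construction:cover_martingale} really does produce a uniform family in the sense required by Lemma \ref{lemma:counting-dimension-borel-cantelli} (so that the parameters $f_n, g_n$ are uniformly $\Delta$ and $\FP$ in $n$), and that the $d_n(\lambda) \leq 2^{(s-1)n}$ hypothesis can be enforced for all $n$ rather than just cofinitely many. The first point is exactly the content of Lemma \ref{le:cover_martingale_construction}, applied with $n$ given as an additional input, and the second is handled by the trivial zeroing-out described above. Once these are in place the proof is a direct assembly of the constructions developed in Section \ref{sec:constructions} and the foundational lemmas of Section \ref{sec:counting_martingales}.
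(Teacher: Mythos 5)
Your proposal is correct and follows essentially the same route as the paper: apply the Cover Martingale Construction to the covering set $A$ at each length, invoke Lemma \ref{le:cover_martingale_construction} to get a uniform family of exact counting martingales with $d_n(\lambda) = |A_{=n}|/2^n$, and finish with the Counting Dimension Borel--Cantelli Lemma. The only cosmetic difference is how the finitely many lengths where $|A_{=n}| > 2^{sn}$ are handled (you zero out those martingales; the paper assumes the bound holds for all $n$ and absorbs the slack into an extra $\epsilon$), which is immaterial.
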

\begin{appendixproof}[Proof of Theorem \ref{th:CountingDimLeqEntropyRates}]
	We prove the second inequality, the proofs of the other inequalities are analogous.

	Let $\alpha > \calH_\NP(X)$ and $\epsilon > 0$ such that $2^\alpha$ and
	$2^\epsilon$ are rational.  Let $A \in \NP$ such that $X \subseteq
		\ioA$ and $H_A < \alpha$.  We can assume that $|A_{=n}| \leq 2^{\alpha
				n}$ for all $n$.  It suffices to show that $\dimSpanP(X) \leq \alpha+\epsilon$.

	We use Construction \ref{construction:cover_martingale} and Lemma \ref{le:cover_martingale_construction} to obtain a uniform family $(d_n \mid n\geq 0)$ of $\SpanP$ martingales with
	\[ d_n(\lambda) = \frac{|A_{=n}|}{2^n} \leq 2^{(\alpha-1)n} \] and $d_n(v) = 1$ for all $v \in A_{=n}$. We now apply the Counting Dimension Borel-Cantelli Lemma (Lemma \ref{lemma:counting-dimension-borel-cantelli}) to complete the proof.
\end{appendixproof}

We note that combining Theorems \ref{th:CountingDimLeqEntropyRates} and \ref{th:counting_measure_conservation_deltaEthree} gives a new proof of Theorem \ref{th:DERC_deltathree_HNP}.

We will next extend Theorem \ref{th:CountingDimLeqEntropyRates} to the measure setting. First, we define an entropy rate version of measure 0. The idea in this definition is to extend the entropy rate $\calH_\calC$ to define a measure $\MC$.

\begin{definition_named}{Entropy Rate Measure}
	Let $X \subseteq \C$ and let $\calC$ be a complexity class.
	If there exist $A \in \calC$ and $f \in \FP$ such that
	\begin{enumerateC}
		\item $X \subseteq \ioA$,
		\item $\log |A_{=n}| < n - f(n)$ for sufficiently large $n$, and
		\item $\sum\limits_{n=0}^\infty 2^{-f(n)}$ is $\p$-convergent,
	\end{enumerateC}
	then $X$ has $\MC$-measure 0 and we write $\MC(X) = 0$.
\end{definition_named}

We observe that $\MC$-measure has many of the standard measure properties.  When using $\calC= \ALL$, the class of all languages, it refines Lebesgue measure: if $\ClassMeasure{\ALL}(X) = 0$, then $X$ has Lebesgue measure 0. Using $\calC=\PSPACE$, we have if $\ClassMeasure{\PSPACE}(X) = 0$, then $\mupspace(X) = 0$.

\begin{proposition}\label{prop:entropy_basic}
	Let $\calC, \calD$ be classes of languages and
	$X, Y \subseteq \C$.
	\begin{enumerateC}
		\item If $\calC \subseteq \calD$, $\MC(X) = 0$ implies $\ClassMeasure{\calD}(X) = 0$.
		\item If $X \subseteq Y$, then $\MC(Y) =0 $ implies $\MC(X) = 0$.
		\item If $\calC$ is closed under union, then $\MC(X) = 0$ and $\MC(Y) = 0$ implies $\MC(X \cup Y) = 0$.
		\item If $\HC(X) < 1$, then $\MC(X) = 0$.
	\end{enumerateC}
\end{proposition}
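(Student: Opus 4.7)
The plan is to verify each of the four parts directly from the definition of $\MC$-measure $0$, since all four are structural closure properties that follow from manipulating the witness $(A,f)$.

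For parts 1 and 2, the plan is to reuse the same witness. If $(A,f)$ witnesses $\MC(X)=0$ and $\calC\subseteq\calD$, then $A\in\calD$ too, so $(A,f)$ witnesses $\ClassMeasure{\calD}(X)=0$. If instead $X\subseteq Y$ and $(A,f)$ witnesses $\MC(Y)=0$, then $X\subseteq Y\subseteq \ioA$, so $(A,f)$ also witnesses $\MC(X)=0$. Neither step requires modifying $f$.

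For part 3, let $(A_X,f_X)$ and $(A_Y,f_Y)$ be witnesses for $\MC(X)=0$ and $\MC(Y)=0$. I would set $A=A_X\cup A_Y$, which lies in $\calC$ by the closure assumption, and observe that $X\cup Y\subseteq \ioA$. For the size bound, use $|A_{=n}|\le |A_{X,=n}|+|A_{Y,=n}|\le 2^{n-f_X(n)}+2^{n-f_Y(n)}\le 2^{n-f(n)+1}$ where $f(n)=\min\{f_X(n),f_Y(n)\}$. Then $f-1\in\FP$ and $\sum 2^{-(f(n)-1)}\le 2\sum 2^{-f_X(n)}+2\sum 2^{-f_Y(n)}$ is $\p$-convergent, with modulus obtained by adding $1$ to the larger of the two given moduli.

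For part 4, pick a rational $\alpha$ with $\HC(X)<\alpha<1$ and choose $A\in\calC$ with $X\subseteq\ioA$ and $H_A<\alpha$; then $|A_{=n}|\le 2^{\alpha n}$ for sufficiently large $n$. Define $f(n)=\lfloor(1-\alpha)n\rfloor$, which is in $\FP$ since $\alpha$ is rational, and note $\log|A_{=n}|\le \alpha n\le n-f(n)$. The series $\sum 2^{-f(n)}$ is dominated by a geometric series with ratio $2^{-(1-\alpha)}<1$, and its tail from index $m$ is bounded by $C\cdot 2^{-(1-\alpha)m}$ for a constant $C$, so a modulus $m(i)=\lceil (i+\log C)/(1-\alpha)\rceil$ witnesses $\p$-convergence. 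The only mildly delicate point is confirming that choosing a rational $\alpha$ strictly between $\HC(X)$ and $1$ is legitimate and that the floor in the definition of $f$ does not spoil the inequality $\log|A_{=n}|<n-f(n)$, both of which follow by a small margin.
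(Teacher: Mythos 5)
The paper states this proposition as an observation and gives no proof, so there is nothing to compare against; your direct verification from the definition of $\MC$-measure $0$ is the intended argument and is correct. Parts 1, 2, and 4 are fine as written (in part 4 the choice of a rational $\alpha$ with $\HC(X)<\alpha<1$ is legitimate since the infimum is strictly below $1$, and $\log|A_{=n}|<\alpha n\le n-\lfloor(1-\alpha)n\rfloor$ preserves the strict inequality). The only slip is quantitative, in part 3: taking the modulus to be $\max\{m_X(i),m_Y(i)\}+1$ bounds the tail of $\sum 2^{-(f(n)-1)}$ only by $2\cdot 2^{-i}+2\cdot 2^{-i}=2^{-i+2}$, not $2^{-i}$; you should instead take $m(i)=\max\{m_X(i+2),m_Y(i+2)\}$, which is still polynomial-time computable. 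This is a trivial repair and does not affect the correctness of the overall argument.
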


We now establish our measure-theoretic extension of Theorem \ref{th:CountingDimLeqEntropyRates}. This proof also uses Construction \ref{construction:cover_martingale} and Lemma \ref{le:cover_martingale_construction}.

\begin{theorem}\label{th:NP-entropy-SpanP-Measure}
	Let $X \subseteq \C$.
	\begin{enumerateC}
		\item      If $\ClassMeasure{\UP}(X) = 0$, then $\muSharpP(X) = 0$.
		\item      If $\MNP(X) = 0$, then $\muSpanP(X) = 0$.
		\item  If $\ClassMeasure{\SPP}(X) = 0$, then $\muGapP(X) = 0$.
	\end{enumerateC}
\end{theorem}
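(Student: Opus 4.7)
The plan is to apply the Cover Martingale Construction (Construction \ref{construction:cover_martingale}) simultaneously at every length and then invoke the Counting Measure Borel-Cantelli Lemma (Lemma \ref{lemma:counting-measure-borel-cantelli}). I will describe the argument for part 2; parts 1 and 3 are structurally identical, differing only in which clause of Lemma \ref{le:cover_martingale_construction} is used.

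Suppose $\MNP(X)=0$, witnessed by $A \in \NP$ and $f \in \FP$ satisfying the three conditions in the definition of $\MNP$. For each $n \in \N$, apply Construction \ref{construction:cover_martingale} with the cover $A$ at length $n$ to produce a martingale $d_n$. By Lemma \ref{le:cover_martingale_construction}, the family $(d_n \mid n \in \N)$ is a uniform family of exact $\SpanP$-martingales. Two properties I will exploit: first, $d_n(\lambda) = |A_{=n}|/2^n \leq 2^{-f(n)}$ for all sufficiently large $n$, by the entropy bound; second, $d_n(x) = 1$ for every $x \in A_{=n}$, so $\C_x \subseteq S^1[d_n]$.

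The next step is to verify that $\sum_{n=0}^\infty d_n(w)$ is uniformly $\p$-convergent in $w$, which is the hypothesis needed to invoke the Counting Measure Borel-Cantelli Lemma. For $|w| \leq n$, the construction gives $d_n(w) = |\ext_A(w,n)|/2^{n-|w|} \leq |A_{=n}|/2^{n-|w|} \leq 2^{|w|-f(n)}$; for $|w| > n$, $d_n(w) \in \{0,1\}$, so the initial terms contribute a total bounded by $|w|$. Using the polynomial-time modulus of convergence for $\sum 2^{-f(n)}$, I can construct a modulus in $\FP$ for $\sum_n d_n(w)$ that is polynomial in $(|w|,r)$, giving uniform $\p$-convergence (hence uniform $\SpanP$-convergence).

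Finally, since $X \subseteq \ioA$, each $S \in X$ has $S \restr n \in A_{=n}$ for infinitely many $n$, which forces $S \in S^1[d_n]$ for infinitely many $n$. Thus $X \subseteq \limsup_n S^1[d_n]$, and Lemma \ref{lemma:counting-measure-borel-cantelli} yields $\muSpanP(X)=0$. Parts 1 and 3 are obtained by the same argument, replacing $\NP$ with $\UP$ (yielding $\SharpP$-martingales) and $\SPP$ (yielding $\GapP$-martingales). The only step requiring care is the bookkeeping for uniform $\p$-convergence; everything else is a direct assembly of results already proved.
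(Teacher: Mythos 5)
Your proposal is correct and follows essentially the same route as the paper: apply the Cover Martingale Construction to the $\MC$-cover $A$ at each length, note $d_n(\lambda)\leq 2^{-f(n)}$ and $d_n(x)=1$ on $A_{=n}$, and conclude via the Counting Measure Borel--Cantelli Lemma. The only difference is that you spell out the uniform $\p$-convergence bookkeeping (via the bound $d_n(w)\leq 2^{|w|-f(n)}$ and the modulus $m'(i+|w|)$), which the paper leaves implicit.
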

\begin{appendixproof}[Proof of Theorem \ref{th:NP-entropy-SpanP-Measure}]
	Assume $\MNP(X)=0$ and obtain the cover $A \in \NP$ and the function $f \in \FP$ satisfying the definition of $\MNP(X) = 0$.
	We use Construction \ref{construction:cover_martingale} and Lemma \ref{le:cover_martingale_construction} to obtain a uniform family of exact $\SpanP$ martingales $(d_n \mid n \geq 0)$ with
	\[d_n(\lambda) = \frac{|A_{=n}|}{2^n} \leq \frac{2^{n-f(n)}}{2^n} = 2^{-f(n)}\] and
	$d_n(w) = 1$ for all $w \in A_{=n}$.
	The Counting Measure Borel-Cantelli lemma (Lemma \ref{lemma:counting-measure-borel-cantelli}) completes the proof. The proofs of the other items are analogous.
\end{appendixproof}

\subsection{Kolmogorov Complexity}\label{subsec:kolmogorov_complexity}

Lutz \cite{Lutz:AEHNC} showed that the space-bounded Kolmogorov complexity class
\[ \{ S \mid (\exists^\infty n)\, KS^p(S \restr n) < n- f(n) \} \]
has $\pspace$-measure 0 where $p$ is a polynomial, and $\sum\limits_{n=0}^\infty 2^{-f(n)}$ is $\p$-convergent. In other words, if $S$ is $\pspace$-random, then $KS^p(S \restr n) \geq n-f(n)$ a.e. For $\p$-random sequences, Lutz proved that the time-bounded Kolmogorov complexity $K^p(S\restr n) \geq c \log n$ a.e. for any polynomial $p$.

We use the Conditional Expectation Martingale Construction (Construction \ref{construction:random_variable_martingale}) to obtain an intermediate result for time-bounded Kolmogorov complexity and $\SharpP$-measure.

\begin{theorem}\label{th:Kolmogorov_SharpP_measure}
	Suppose $f \in \FP$ and the series $\sum\limits_{n=0}^\infty 2^{-f(n)}$ is $\p$-convergent.
	Let $p$ be a polynomial and
	\[  X = \{S \mid (\exists^\infty n)\, K^p(S \restr n) < n- f(n) \}. \]
	Then $X$ has $\SharpP$-measure 0.
\end{theorem}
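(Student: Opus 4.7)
The plan is to apply the Conditional Expectation Martingale Construction (Construction \ref{construction:random_variable_martingale}) with a \emph{program-counting} function as the random variable. The central observation is that although the cover $A_n = \{x \in \{0,1\}^n \mid K^p(x) < n - f(n)\}$ is in general only in $\NP$ (so the Cover Martingale Construction would yield only a $\SpanP$-martingale), counting witnessing programs rather than distinct outputs is a genuine $\SharpP$ operation.

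For each $n$, define
$$g_n(x) = \left|\left\{\pi \in \{0,1\}^{< n - f(n)} \;\middle|\; U(\pi) = x \text{ in at most } p(n) \text{ steps} \right\}\right|,$$
computed by a PTM that nondeterministically guesses $\pi$ and simulates the universal machine. Thus $g_n \in \SharpP$, uniformly in $n$. Applying Construction \ref{construction:random_variable_martingale} with random variable $g_n$ gives, by Lemma \ref{le:random_variable_martingale_construction}, a uniform family $(d_n \mid n \in \N)$ of exact $\SharpP$-martingales with
$$d_n(w) = \frac{\sum_{x \in \ext(w,n)} g_n(x)}{2^{n-|w|}} \quad \text{for } |w| \leq n.$$
Since each short program outputs at most one string, $\sum_{x \in \{0,1\}^n} g_n(x) \leq 2^{n - f(n)}$, so
$$d_n(\lambda) \leq 2^{-f(n)} \quad\text{and more generally}\quad d_n(w) \leq 2^{|w|} \cdot 2^{-f(n)}$$
for all $w$. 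On the other hand, if $|x| = n$ and $K^p(x) < n - f(n)$ then $g_n(x) \geq 1$, hence $d_n(x) \geq 1$ and so $x \in S^1[d_n]$. Consequently $X \subseteq \limsupn S^1[d_n]$.

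To apply the Counting Measure Borel-Cantelli Lemma (Lemma \ref{lemma:counting-measure-borel-cantelli}), I verify uniform $\p$-convergence of $\sum_n d_n(w)$. Let $m_0$ be a polynomial-time modulus for the $\p$-convergent series $\sum_n 2^{-f(n)}$, which exists by hypothesis. Then $m(w, r) = m_0(r + |w| + 1)$ is a $\p$-computable uniform modulus, since
$$\sum_{n \geq m(w,r)} d_n(w) \;\leq\; 2^{|w|} \sum_{n \geq m_0(r + |w| + 1)} 2^{-f(n)} \;\leq\; 2^{|w|} \cdot 2^{-r - |w| - 1} \;\leq\; 2^{-r}.$$
The Borel-Cantelli conclusion then gives $\muSharpP(\limsupn S^1[d_n]) = 0$, and hence $\muSharpP(X) = 0$.

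The main conceptual obstacle is resisting the natural first attempt: using the Cover Martingale Construction on $A_n \in \NP$, which would only yield $\muSpanP(X) = 0$. The $\SharpP$ strengthening requires the insight that we can count programs (of which there are at most $2^{n - f(n)}$) via a $\SharpP$ function whose total mass still bounds $d_n(\lambda)$ by $2^{-f(n)}$, even though the set of \emph{outputs} those programs can hit is only $\NP$-recognizable. Beyond this, the remaining work is routine bookkeeping to confirm uniform $\p$-convergence, which follows from a straightforward rescaling of the modulus for $\sum_n 2^{-f(n)}$.
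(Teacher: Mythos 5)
Your proposal is correct and follows essentially the same route as the paper: the paper likewise observes that the naive cover martingale on $A_n$ is only a $\SpanP$-martingale and replaces it with the $\SharpP$-martingale obtained by counting pairs $\langle x,\pi\rangle$ of extensions and short programs (equivalently, applying Construction \ref{construction:random_variable_martingale} to your program-counting function $g_n$), with the same bounds $d_n(\lambda)\leq 2^{-f(n)}$ and $d_n(x)\geq 1$ on $X_n$ before invoking the Counting Measure Borel--Cantelli Lemma. Your explicit verification of the uniform $\p$-convergence modulus is a detail the paper leaves implicit, but the argument is the same.
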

\begin{appendixproof}[Proof of Theorem \ref{th:Kolmogorov_SharpP_measure}]
	For each $n$, let
	\[ X_n = \{ x \in \{0,1\}^n \mid K^p(x) < n - f(n) \}. \]
	For $w \in \{0,1\}^{\leq n}$, let
	\[ d_n(w) = \Pr(X_n \mid w) = \frac{|\{ x \in X_n \mid w \prefix x \}|}{2^{n-|w|}}. \]
	This martingale satisfies \[ d_n(\lambda) \leq \frac{|X_n|}{2^n} \leq \frac{2^{n-f(n)}}{2^n} =  2^{-f(n)} \] and $d_n(x) = 1$ for all $x \in X_n$. However, $d_n$ does not appear to be a $\SharpP$ martingale because its numerator is a $\SpanP$ function.
	We will upper bound the $\SpanP$-martingale $d_n$ by another martingale $d_n'$ that is $\SharpP$-computable and still satisfies $d_n'(\lambda) \leq 2^{-f(n)}.$

	Let $U$ be a universal Turing machine and define
	\[ C = \myset{ \pair{0^n,w,x,\pi} }
		{\begin{array}{l}
				w \in \{0,1\}^{\leq n}, x \in \{0,1\}^n, \pi \in \{0,1\}^{<n-f(n)}, \\
				w \prefix x, \textrm{ and } U(\pi) = x\textrm{ in}\leq p(n)\textrm{ time}
			\end{array} }. \]
	Then $C \in \P$, so the function
	\[ g(0^n,w) = \left| \myset{ \pair{x,\pi} }{
			\begin{array}{l}
				x \in \{0,1\}^n, \pi \in \{0,1\}^{<n-f(n)}, w \prefix x \\
				\textrm{and } U(\pi) = x\textrm{ in}\leq p(n)\textrm{ time}
			\end{array} } \right|. \]
	is in $\SharpP$.
	We use Construction \ref{construction:random_variable_martingale}.
	Define the $\SharpP$-martingale
	\[ d_n'(w) = \frac{g(0^n,w)}{2^{n-|w|}} \]
	for all $w \in \{0,1\}^{\leq n}$ and $d_n'(y) = d_n'(y\restr n)$ for $y \in \{0,1\}^{>n}.$
	Notice that $g(0^n,w) \leq 2^{n-f(n)}$, so $d_n'(\lambda) \leq 2^{-f(n)}$.
	Also, $d_n'(x) \geq d_n(x)$ for all $x \in \{0,1\}^*$.
	If $x \in X_n$, then
	\[ g(0^n,x) =
		\left| \myset{ \pi }{
			\begin{array}{l}
				x \in \{0,1\}^n, \pi \in \{0,1\}^{<n-f(n)}, \\
				\textrm{and } U(\pi) = x\textrm{ in }\leq p(n)\textrm{ time}
			\end{array} } \right| \geq 1. \]
	and $d_n'(x) \geq 1 = d_n(x).$
	Therefore
	$X_n
		\subseteq S^1[d_n'].$ Also, $(d_n'\mid n \in \N)$ is exactly and uniformly $\SharpP$-computable by Lemma \ref{le:random_variable_martingale_construction}.
	We apply the Counting Measure
	Borel-Cantelli Lemma (Lemma \ref{lemma:counting-measure-borel-cantelli})
	to conclude that \[ X \subseteq \bigcap_{i=0}^\infty \bigcup_{j \geq i}^\infty S^1[d_n'] \]
	has $\SharpP$-measure 0.
\end{appendixproof}

\begin{corollary}
	Suppose $f \in \FP$ and the series $\sum\limits_{n=0}^\infty 2^{-f(n)}$ is $\p$-convergent.
	If $S$ is $\SharpP$-random, then $K^p(S \restr n) \geq n - f(n)$ a.e.
\end{corollary}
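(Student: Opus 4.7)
The plan is to derive this corollary directly as an almost-immediate consequence of Theorem \ref{th:Kolmogorov_SharpP_measure}. That theorem establishes, under the same hypotheses on $f$ and $p$, that the class
\[ X = \{S \mid (\exists^\infty n)\, K^p(S \restr n) < n - f(n) \} \]
has $\SharpP$-measure 0, meaning there is a $\SharpP$-martingale $d$ that succeeds on every sequence in $X$.

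First I would unpack what $\SharpP$-randomness gives us. By the definition of $\SharpP$-random, no $\SharpP$-martingale succeeds on $S$, i.e., $\{S\}$ does not have $\SharpP$-measure 0. Therefore $S \notin X$, since otherwise the witnessing $\SharpP$-martingale $d$ for $\muSharpP(X)=0$ would succeed on $S$, contradicting randomness.

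Finally, I would translate $S \notin X$ into the desired almost-everywhere statement. The negation of $(\exists^\infty n)\, K^p(S \restr n) < n - f(n)$ is exactly $(\forall^\infty n)\, K^p(S \restr n) \geq n - f(n)$, which is the claim that $K^p(S \restr n) \geq n - f(n)$ almost everywhere. There is no genuine obstacle here: the entire content of the corollary is in Theorem \ref{th:Kolmogorov_SharpP_measure}, and the only step is the standard observation that randomness with respect to a measure class means avoiding every measure-zero set in that class.
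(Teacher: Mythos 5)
Your proposal is correct and is exactly the argument the paper intends: the corollary is stated as an immediate consequence of Theorem \ref{th:Kolmogorov_SharpP_measure}, obtained by noting that a $\SharpP$-random $S$ cannot lie in the $\SharpP$-measure-zero class $X$, and that $S \notin X$ is precisely the almost-everywhere lower bound on $K^p(S \restr n)$.
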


The following Theorem is a variation of Theorem \ref{th:Kolmogorov_SharpP_measure} which will be used in Section \ref{sec:applications}.

\begin{theorem}\label{th:Kolmogorov_SharpP_measure_second_version}
	Suppose $f \in \FP$ and the series $\sum\limits_{n=0}^\infty 2^{-f(n)}$ is $\p$-convergent.
	Let $p$ be a polynomial and
	\[  X = \{ A \mid (\exists^\infty n) \,K^p(A_{=n}) < 2^n- f(2^n) \}. \]
	Then $X$ has $\SharpP$-measure 0.
\end{theorem}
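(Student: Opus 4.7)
The plan is to parallel the proof of Theorem \ref{th:Kolmogorov_SharpP_measure}, but at the length scale of the block $A_{=n}$ rather than the initial prefix of $A$. Since the standard enumeration places the length-$n$ strings at positions $2^n-1$ through $2^{n+1}-2$, the block $A_{=n}$ appears as a contiguous window of $m := 2^n$ bits inside the characteristic sequence of $A$. For each such $n$, let $X_n^* = \{y \in \{0,1\}^m \mid K^p(y) < m-f(m)\}$; then $|X_n^*| \leq 2^{m-f(m)}$. I aim to construct, for each $m$, an exact $\SharpP$-martingale $d_m$ on Cantor space with $d_m(\lambda) \leq 2^{-f(m)}$ that reaches value $\geq 1$ at $A$ whenever $m = 2^n$ and $A_{=n} \in X_n^*$, and then apply the Counting Measure Borel-Cantelli Lemma.

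For each $m$ that is a power of $2$, I would build the block-level martingale via Construction \ref{construction:random_variable_martingale} with the short-program counting trick from the proof of Theorem \ref{th:Kolmogorov_SharpP_measure}: take the random variable on $\{0,1\}^m$ that maps $y$ to the number of $\pi \in \{0,1\}^{<m-f(m)}$ with $U(\pi)=y$ in $\leq p(m)$ steps. By Lemma \ref{le:random_variable_martingale_construction} its conditional-expectation numerator
\[ g(0^m,u) = |\{\langle y,\pi\rangle \mid y \in \{0,1\}^m,\ u \prefix y,\ |\pi|<m-f(m),\ U(\pi)=y \text{ in } \leq p(m) \text{ steps}\}| \]
lies in $\SharpP$, has initial value at most $2^{-f(m)}$ (since there are fewer than $2^{m-f(m)}$ admissible $\pi$), and takes value at least $1$ on every $y \in X_n^*$. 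I then lift this block-martingale to a Cantor-space martingale $d_m$ by holding it constant on prefixes of length $<m-1$, betting only on positions $m-1,\dots,2m-2$, and holding it constant again on prefixes of length $\geq 2m-1$. For $m$ not a power of $2$, let $d_m \equiv 0$.

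The principal obstacle is uniformity: because $m=2^n$ is exponential in $n$, indexing the family by $n$ in unary would not give the $\SharpP$ machine enough time to guess candidate $y$'s or $\pi$'s of length $\sim m$ when $|w|$ is small. I sidestep this by parameterizing the family directly by $m$ in unary, so that on input $(0^m,w)$ the machine has time $\poly(m+|w|)$, which suffices to guess $y \in \{0,1\}^m$ and $\pi$ of length $<m-f(m)$, simulate $U$ for $p(m)$ steps, and verify the prefix condition. The denominator $2^{m-|u|}$ has $O(m)$ bits and is trivially in $\FP$.

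Finally, for the uniform $\p$-convergence of $\sum_m d_m(w)$, observe that for $m>|w|+1$ the prefix $w$ has not yet entered the betting window, so $d_m(w) = d_m(\lambda) \leq 2^{-f(m)}$. Because $\sum_n 2^{-f(2^n)}$ is a sub-sum of the $\p$-convergent series $\sum_n 2^{-f(n)}$, it too is $\p$-convergent: if $m(i)$ is a polynomial modulus for the original, then $\lceil \log m(i)\rceil$ works for the sub-sum. Letting $m_0(r)$ be such a polynomial modulus, the function $M(w,r) = \max(|w|+2,\,m_0(r))$ is a polynomial modulus for $\sum_m d_m(w)$. Applying Lemma \ref{lemma:counting-measure-borel-cantelli} yields $X \subseteq \limsup_m S^1[d_m]$ and hence $\muSharpP(X)=0$.
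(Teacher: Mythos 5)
Your proof is correct, but it takes a genuinely different route from the paper's. The paper's proof is a two-line reduction to Theorem \ref{th:Kolmogorov_SharpP_measure} via subadditivity of time-bounded Kolmogorov complexity: it bounds $K^p(A_{\leq n}) \leq K^p(A_{<n}) + K^p(A_{=n}) + O(n) \leq 2^{n+1} - f(2^n) + O(n)$ and then invokes the prefix version of the theorem at lengths $N = 2^{n+1}-1$. You instead rebuild the martingale machinery at the block level: a family indexed by $m = 2^n$ (in unary, which correctly resolves the uniformity issue), each member betting only on the window of positions $m-1,\dots,2m-2$ where $A_{=n}$ sits, with initial capital $\leq 2^{-f(m)}$ coming from the count of short programs, followed by Borel--Cantelli. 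Your route is longer but self-contained, and it has a genuine technical advantage: the convergence requirement you end up needing is exactly $\sum_m 2^{-f(m)}$ restricted to powers of two, which is immediate from the hypothesis with the same modulus. The paper's reduction, by contrast, silently requires $\sum_n 2^{-f(2^n)+O(n)}$ to be $\p$-convergent, and the additive $O(n)$ loss is not harmless for slowly growing $f$ (e.g., $f(k) = 2\log k$ satisfies the hypothesis but makes that series diverge); your construction incurs no such slack. The one imprecision in your write-up --- the martingale is constant on all prefixes of length $\leq m-1$, not just $< m-1$, since the first bet is resolved only when the bit at position $m-1$ is revealed --- is cosmetic and does not affect the argument.
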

\begin{appendixproof}[Proof of Theorem \ref{th:Kolmogorov_SharpP_measure_second_version}]
	If $K^p(A_{=n}) < 2^n- f(2^n)$ then we have
	\begin{eqnarray*}
		K^p(A_{\leq n}) & \leq & K^p(A_{< n}) + K^p(A_{= n}) + O(n) \\
		& \leq     &         2^n + 2^n - f(2^n) + O(n) \\
		&  = & 2^{n+1} - f(2^n) + O(n).
	\end{eqnarray*}
	It follows from Theorem \ref{th:Kolmogorov_SharpP_measure} that $X$ has $\SharpP$-measure 0.
\end{appendixproof}

We next consider Kolmogorov complexity rates, which leads to an interesting connection with one-way functions.

\begin{definition_cite}{\cite{Hitchcock:phdthesis,Hitchcock:DERC}} Let $X \subseteq \C$.
	\begin{enumerate}
		\item The {\em polynomial-time Kolmogorov complexity rate of $X$} is
		      \[ \calK_\poly(X) = \inf_{p \in \poly} \sup_{S \in X} \liminfn \frac{K^p(S\restr n)}{n}. \]
		\item The {\em polynomial-time strong Kolmogorov complexity rate of $X$} is
		      \[ \calK_\poly^\str(X) = \inf_{p \in \poly} \sup_{S \in X} \limsupn \frac{K^p(S\restr n)}{n}. \]
	\end{enumerate}
	\begin{enumerate}
		\item The {\em polynomial-space Kolmogorov complexity rate of $X$} is
		      \[ \calKS_\poly(X) = \inf_{p \in \poly} \sup_{S \in X} \liminfn \frac{KS^p(S\restr n)}{n}. \]
		\item The {\em polynomial-space strong Kolmogorov complexity rate of $X$} is
		      \[ \calKS_\poly^\str(X) = \inf_{p \in \poly} \sup_{S \in X} \limsupn \frac{KS^p(S\restr n)}{n}. \]
	\end{enumerate}
\end{definition_cite}

Hitchcock and Vinodchandran \cite{Hitchcock:DERC} showed that for all $X \subseteq \C$,
\begin{equation}\label{eq:DERC_inequalities}
	\left\{ \begin{array}{c}
		\dimpspace(X) \\ = \\
		\HPSPACE(X)   \\ = \\
		\calKS^{\poly}(X)
	\end{array} \right\}
	\leq
	\dim_\Deltapthree(X) \leq
	\HNP(X) \leq
	\left\{ \begin{array}{c} \HP(X), \\ \calK_{\poly}(X) \end{array}\right\}
	\leq
	\dimp(X) .\end{equation}
At the polynomial-space level, $\PSPACE$-dimension, the $\PSPACE$-entropy rate, and the Kolmogorov complexity rate all coincide. At the polynomial-time level, the $\P$-dimension, $\P$-entropy rate, and time-bounded Kolmogorov complexity rate are not known to be equal. No relationship is known between $\HP(X)$ and $\calK_\poly(X)$.
Analogous inequalities hold for the strong dimension versions of the quantities in \eqref{eq:DERC_inequalities}: for all $X \subseteq \C$,
\begin{equation}\label{eq:DERC_inequalities_strong_version}
	\left\{ \begin{array}{c}
		\Dimpspace(X)         \\ = \\
		\calH^\str_\PSPACE(X) \\ = \\
		\calKS^\str_{\poly}(X)
	\end{array} \right\}
	\leq
	\Dim_\Deltapthree(X) \leq
	\calH^\str_\NP(X) \leq
	\left\{ \begin{array}{c} \calH^\str_\P(X), \\ \calK^\str_{\poly}(X) \end{array}\right\}
	\leq
	\Dimp(X) .\end{equation}
Each quantity in \eqref{eq:DERC_inequalities_strong_version} is greater than or equal to the corresponding quantity in \ref{eq:DERC_inequalities}.

We now show that the polynomial-time Kolmogorov complexity rates upper bound the $\SharpP$-dimensions.
\begin{theorem}\label{th:Kolmogorov_rate_SharpP_dimension}
	For all $X \subseteq \C$,
	\[ \dimSharpP(X) \leq \calK_\poly(X) \]
	and
	\[ \DimSharpP(X) \leq \calK_\poly^\str(X). \]
\end{theorem}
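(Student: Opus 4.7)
The plan is to follow the strategy of Theorem \ref{th:Kolmogorov_SharpP_measure} almost verbatim, but with parameters tuned for the dimension (rather than measure) conclusion. The naive cover of short-program strings yields only a $\SpanP$-martingale, because different programs may print the same string and there is no promise of unique witnesses. The key move is to count the \emph{pairs} $(x,\pi)$ with $U(\pi)=x$ in time $p(n)$, which is a genuine $\SharpP$ function while still dominating the cover martingale.

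Fix $s > \calK_\poly(X)$. Choose a polynomial $p$ such that $\sup_{S\in X} \liminfn K^p(S\restr n)/n < s$, which is possible by the definition of $\calK_\poly(X)$. For each $n$ let
\[ C_n = \{\pair{0^n,w,x,\pi} \mid w \in \{0,1\}^{\leq n},\ x \in \{0,1\}^n,\ \pi \in \{0,1\}^{<sn},\ w \prefix x,\ U(\pi)=x \text{ in }\leq p(n)\text{ steps}\}, \]
so that $C_n \in \P$. Define the $\SharpP$ function
\[ g(0^n,w) = |\{(x,\pi) \mid \pair{0^n,w,x,\pi}\in C_n\}| \]
and use Construction \ref{construction:random_variable_martingale} with $f_n(x)=|\{\pi \in \{0,1\}^{<sn} \mid U(\pi)=x \text{ in }\leq p(n)\text{ steps}\}|$ to define
\[ d_n(w) = \frac{g(0^n,w)}{2^{n-|w|}} \qquad \text{for } w \in \{0,1\}^{\leq n}, \]
extended on longer $w$ by $d_n(w)=d_n(w\restr n)$. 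By Lemma \ref{le:random_variable_martingale_construction}, $(d_n)_{n\in\N}$ is a uniform family of exact $\SharpP$-martingales. Since the number of programs of length less than $sn$ is at most $2^{sn}$, we have the crucial bound
\[ d_n(\lambda) = \frac{g(0^n,\lambda)}{2^n} \leq \frac{2^{sn}}{2^n} = 2^{(s-1)n}. \]
Moreover, whenever $K^p(x) < sn$ there is at least one such program $\pi$, so $d_n(x) \geq 1$, i.e.\ $x \in S^1[d_n]$.

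For any $S \in X$, infinitely many $n$ satisfy $K^p(S\restr n) < sn$, so $S\restr n \in S^1[d_n]$ for infinitely many $n$; hence $X \subseteq \limsupn S^1[d_n]$. The Counting Dimension Borel-Cantelli Lemma (Lemma \ref{lemma:counting-dimension-borel-cantelli}) applied to $(d_n)$ with parameter $s$ gives $\dimSharpP(X) \leq \dimSharpP(\limsupn S^1[d_n]) \leq s$. Taking the infimum over all admissible $s$ yields the first inequality.

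For the strong dimension inequality, fix $s > \calK_\poly^\str(X)$ and choose $p$ with $\sup_{S\in X}\limsupn K^p(S\restr n)/n < s$. The same construction shows that now $S\restr n \in S^1[d_n]$ for \emph{all but finitely many} $n$, so $X \subseteq \liminfn S^1[d_n]$, and the strong dimension clause of Lemma \ref{lemma:counting-dimension-borel-cantelli} gives $\DimSharpP(X) \leq s$. The only substantive point, and the one that makes both bounds work uniformly, is the observation that counting (string, program) pairs preserves the $\SharpP$ upper bound on $d_n(\lambda)$ while still dominating the indicator of the short-Kolmogorov-complexity set; no further subtlety is required.
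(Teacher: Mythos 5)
Your proposal is correct and follows essentially the same route as the paper, which likewise invokes Construction \ref{construction:random_variable_martingale} exactly as in Theorem \ref{th:Kolmogorov_SharpP_measure} (counting $(x,\pi)$ pairs to get a genuine $\SharpP$ function with $d_n(\lambda)\leq 2^{(s-1)n}$ and $d_n(x)\geq 1$ on the short-description strings) and then applies the Counting Dimension Borel--Cantelli Lemma to the $\limsup$ and $\liminf$ of the unitary success sets. Your write-up in fact supplies details the paper leaves implicit, and correctly identifies the pair-counting step as the point where $\SpanP$ is upgraded to $\SharpP$.
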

\begin{proof}
	Let $s > \calK^\poly(X)$ be rational.
	For each $n$, let \[ X_n =
		\{ x \in \{0,1\}^n \mid K^p(x) \leq s|x| \}. \]
	Then use Construction \ref{construction:random_variable_martingale} as in the proof of Theorem \ref{th:Kolmogorov_SharpP_measure} to obtain a $\SharpP$-martingale $d_n$ for each $n$ where $d_n(\lambda) \geq 2^{(s-1)n}$ for all $x \in X_n.$ We then apply the Counting Dimension Borel-Cantelli Lemma (Lemma \ref{lemma:counting-dimension-borel-cantelli}).
	This shows $\dimsharpp(X) \leq s$. The proof of the lower bounds $\dimsharpp(X) \geq s$ follows from \cite{Hitchcock:SDNC}. The proof for strong dimension is analogous.
\end{proof}

Combining Theorem \ref{th:CountingDimLeqEntropyRates}, Theorem \ref{th:Kolmogorov_rate_SharpP_dimension}, Corollary \ref{co:PtoSharpP_dominates_counting_measures_and_dimensions}, and the inequalities in \eqref{eq:DERC_inequalities}, we have the refined picture in Figure \ref{fig:kolmogorov_relationships_diagram}.
\begin{figure}
	\begin{center}
		\begin{tikzpicture}[
				node distance = 0.6cm,
				every node/.style={font=\footnotesize},
				dim/.style={draw, rectangle, rounded corners, fill=blue!20},
				calH/.style={draw, rectangle, rounded corners, fill=orange!40},
				calK/.style={draw, rectangle, rounded corners,fill=green!20},
				->, >=Stealth
			]

			\node[dim] (dimpspace) {$\dimpspace$};
			\node[dim, right=of dimpspace] (dimpsharpp) {$\dim_{\P^\SharpP}$};
			\node[dim, right=of dimpsharpp] (dimdeltapthree) {$\dimDeltaPthree$};
			\node[dim, right=of dimdeltapthree] (dimspanp) {$\dimSpanP$};
			\node[dim, right=of dimspanp] (dimsharpp) {$\dimSharpP$};

			\node[calH, right=of dimsharpp] (HUP) {$\calH_\UP$};
			\node[calH, right=of HUP] (HP) {$\HP$};
			\node[dim, right=of HP] (dimp) {$\dimp$};

			\node[calH, above=of dimsharpp] (HNP) {$\calH_\NP$};
			\node[calK, above=of HUP] (Kpoly) {$\calK^{\poly}$};

			\node[calH, below=of dimsharpp] (HSPP) {$\calH_\SPP$};
			\node[dim, below=of dimspanp] (dimgapp) {$\dimGapP$};

			\node[calK, above=of dimpspace] (KSpoly) {$\calKS^\poly$};
			\node[calH, below=of dimpspace] (HPSPACE) {$\calH_\PSPACE$};

			\draw (dimpspace) -- (dimpsharpp);
			\draw (dimpsharpp) -- (dimdeltapthree);
			\draw (dimdeltapthree) -- (dimspanp);
			\draw (dimspanp) -- (dimsharpp);
			\draw (dimsharpp) -- (HUP);

			\draw (HUP) -- (HP);
			\draw (HP) -- (dimp);

			\path[->] (dimpsharpp) edge [bend right=10] node {} (dimgapp);
			\draw (dimgapp) -- (HSPP);
			\path[->] (HSPP) edge [bend right=10] node {} (HUP);
			\path[->] (dimgapp) edge [bend right=10] node {} (dimsharpp);

			\path[->] (dimsharpp) edge [bend right=10] node {} (Kpoly);
			\path[->] (Kpoly) edge [bend left=10] node {} (dimp);

			\path[->] (dimspanp) edge [bend left=10] node {} (HNP);
			\path[->] (HNP) edge [bend left=10] node {} (HUP);

			\draw (HNP) -- (Kpoly);

			\draw[<->] (dimpspace) -- (KSpoly);
			\draw[<->] (dimpspace) -- (HPSPACE);

		\end{tikzpicture}

	\end{center}
	\caption{Relationships between Dimensions, Entropy Rates, and Kolmogorov Complexity Rates}\label{fig:kolmogorov_relationships_diagram}
\end{figure}
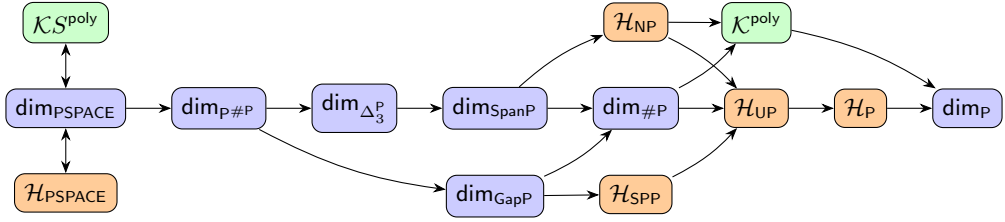
An arrow denotes that the dimension notion on the left is at most the dimension notion on the right. A double arrow denotes that the two dimension notions are equal.
Analogous inequalities hold for the strong dimension versions of the quantities in Figure \ref{fig:kolmogorov_relationships_diagram}

Nandakumar, Pulari, Akhil S, and Sarma \cite{NandakumarPulariSSarma24} showed that if one-way functions exist, then for all $\epsilon > 0$, there exists $X \subseteq \C$ with $\dimp(X) - \calK_\poly^\str(X) \geq 1-\epsilon$. In fact, $X$ may be taken as a singleton. Combining this result with Theorem \ref{th:Kolmogorov_rate_SharpP_dimension} and the inequalities $\calK_\poly(X) \leq \dimp(X)$ and $\calK^\str_\poly \leq \Dimp(X)$ from \eqref{eq:DERC_inequalities} and \eqref{eq:DERC_inequalities_strong_version}, we obtain the following corollaries.

\begin{corollary}
	If one-way functions exist, then for all $\epsilon > 0$, there exists $S \in  \C$ with $\dimp(S) - \dimSharpP(S) \geq 1-\epsilon$ and $\Dimp(S) - \DimSharpP(S) \geq 1-\epsilon$.
\end{corollary}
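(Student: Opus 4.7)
The plan is to deduce this as a direct consequence of the Nandakumar--Pulari--Akhil~S--Sarma singleton separation together with Theorem~\ref{th:Kolmogorov_rate_SharpP_dimension}. Assume one-way functions exist and fix $\epsilon > 0$. First, I would invoke the result of \cite{NandakumarPulariSSarma24} in its singleton form to obtain a sequence $S \in \C$ with
\[ \dimp(S) - \calK_\poly^\str(S) \geq 1 - \epsilon. \]

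Next, I would apply Theorem~\ref{th:Kolmogorov_rate_SharpP_dimension} to the singleton class $X = \{S\}$ to get both $\dimSharpP(S) \leq \calK_\poly(S)$ and $\DimSharpP(S) \leq \calK_\poly^\str(S)$. Combining this with the trivial inequality $\calK_\poly(S) \leq \calK_\poly^\str(S)$ yields the uniform upper bound $\dimSharpP(S) \leq \calK_\poly^\str(S)$ and $\DimSharpP(S) \leq \calK_\poly^\str(S)$. Since $\Dimp(S) \geq \dimp(S)$ always, we then have
\[ \dimp(S) - \dimSharpP(S) \geq \dimp(S) - \calK_\poly^\str(S) \geq 1 - \epsilon \]
and
\[ \Dimp(S) - \DimSharpP(S) \geq \dimp(S) - \calK_\poly^\str(S) \geq 1 - \epsilon, \]
which is the desired conclusion.

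There is no substantive obstacle here: the argument is a one-line composition of the cited separation with the counting-dimension upper bound already established in Theorem~\ref{th:Kolmogorov_rate_SharpP_dimension}. The only care needed is to verify that the same singleton $S$ witnesses both the $\dim$ and $\Dim$ gap, which is automatic because $\calK_\poly^\str$ simultaneously bounds both $\dimSharpP(S)$ and $\DimSharpP(S)$ from above, while $\dimp(S) \leq \Dimp(S)$ makes the $\Dim$ gap at least as large as the $\dim$ gap.
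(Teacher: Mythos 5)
Your proof is correct and follows exactly the route the paper intends: it combines the singleton form of the Nandakumar--Pulari--Akhil~S--Sarma separation with the upper bounds $\dimSharpP(S) \leq \calK_\poly(S) \leq \calK_\poly^\str(S)$ and $\DimSharpP(S) \leq \calK_\poly^\str(S)$ from Theorem~\ref{th:Kolmogorov_rate_SharpP_dimension}, plus $\dimp(S) \leq \Dimp(S)$. The paper leaves these details implicit, and your write-up fills them in accurately.
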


In other words: if one-way functions exist, then $\P$-dimension is different from $\SharpP$-dimension and strong $\P$-dimension is different from strong $\SharpP$-dimension.

\begin{corollary}
	\begin{enumerate}
		\item If $\dimSharpP(S) = \dimp(S)$ for all $S \in \C$, then one-way functions do not exist.
		\item If $\DimSharpP(S) = \Dimp(S)$ for all $S \in \C$, then one-way functions do not exist.
	\end{enumerate}
\end{corollary}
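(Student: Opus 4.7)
The plan is to derive this corollary as an immediate contrapositive of the previous corollary (the one immediately above, which states that if one-way functions exist, then for every $\epsilon > 0$ there is $S \in \C$ with $\dimp(S) - \dimSharpP(S) \geq 1 - \epsilon$ and $\Dimp(S) - \DimSharpP(S) \geq 1 - \epsilon$). No new machinery is needed; the argument is purely logical.

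For part 1, I would argue the contrapositive: suppose one-way functions do exist. Fix any $\epsilon \in (0,1)$, say $\epsilon = 1/2$. Applying the previous corollary yields a sequence $S \in \C$ such that $\dimp(S) - \dimSharpP(S) \geq 1/2 > 0$, and in particular $\dimSharpP(S) \neq \dimp(S)$. This directly contradicts the hypothesis that $\dimSharpP(S) = \dimp(S)$ for all $S \in \C$. Hence one-way functions do not exist. Part 2 follows identically, using the second inequality $\Dimp(S) - \DimSharpP(S) \geq 1 - \epsilon$ in place of the first.

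Since the previous corollary already encapsulates the substantive work — namely the reduction from Nandakumar, Pulari, Akhil S, and Sarma's separation between $\dimp$ and $\calK_\poly^\str$ via Theorem \ref{th:Kolmogorov_rate_SharpP_dimension} (which gives $\dimSharpP \leq \calK_\poly$) and the inequality $\calK_\poly(X) \leq \dimp(X)$ from \eqref{eq:DERC_inequalities} — there is no genuine obstacle here. The only thing worth stating explicitly in the write-up is the quantifier flip: assuming equality for \emph{all} $S$ is strong enough to contradict the existential $S$ witnessed by the previous corollary, so a single witness suffices. The proof will therefore consist of one short paragraph per item, citing the previous corollary and taking the contrapositive.
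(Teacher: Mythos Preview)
Your proposal is correct and matches the paper's approach exactly: the paper states this corollary immediately after the previous one, with the remark ``In other words: if one-way functions exist, then $\P$-dimension is different from $\SharpP$-dimension and strong $\P$-dimension is different from strong $\SharpP$-dimension,'' and gives no separate proof, treating it as the evident contrapositive. Your write-up would actually be more explicit than what the paper provides.
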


\sectionnewpage
\section{Applications}\label{sec:applications}
This section contains our main applications. We start with classical circuit complexity, then move on to quantum circuit complexity, and lastly the density of hard sets.
\subsection{Classical Circuit Complexity}
Lutz \cite{Lutz:AEHNC} showed that for all \(\alpha < 1\), the class \[X_\alpha = \SIZEio\left(\frac{2^n}{n}\left(1+\frac{\alpha \log n}{n}\right)\right)\] has \(\pspace\)-measure 0. Additionally, Lutz showed that for any \(c \geq 1\) and \(k \geq 1\), the classes \(\P/cn\) and \(\SIZE(n^k)\) have polynomial-time measure 0 and quasipolynomial-time measure 0, respectively.
Mayordomo \cite{Mayo94b} used Stockmeyer's approximate counting
of $\sharpP$ functions \cite{Stoc85} to show that $\P/\poly$ has
measure 0 in the third level of the exponential hierarchy.

We begin by extending Lutz's $\PSPACE$-measure result to $\MNP$-measure, in order to improve the theorem to $\SpanP$-measure. The proof uses the Minimum Circuit Size Problem ($\MCSP$) \cite{KabCai00} to form a cover. In $\MCSP$, we are given the full $2^n$-length truth-table of a Boolean function $f : \{0,1\}^n \to \{0,1\}$ and a number $s \geq 1$ and asked to decide whether there is a circuit of size at most $s$ computing $f$. The $\MCSP$ problem is in $\NP$ and not known to be $\NP$-complete \cite{KabCai00,MurrayWilliams17,Hitchcock:NPCMCSP}.
The $\MCSP$ problem fits perfectly into the $\MNP$ framework to help improve Lutz's result, that $X_\alpha$ has $\pspace$-measure 0.

\begin{theorem}\label{th:SIZEio_MNP_measure_0}
	For all $\alpha < 1$,
	\[\SIZEio\left(\frac{2^n}{n}\left(1+\frac{\alpha \log n}{n}\right)\right)\]
	has $\MNP$-measure 0.
\end{theorem}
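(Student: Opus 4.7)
The plan is to directly exhibit a cover $A \in \NP$ and a modulus $f \in \FP$ witnessing $\MNP(X_\alpha) = 0$ per the definition of entropy-rate measure. The cover leverages the Minimum Circuit Size Problem, exploiting $\MCSP \in \NP$: at carefully chosen length indices, the cover collects prefixes whose length-$n$ truth-table segment admits a circuit of size at most $s(n) = \frac{2^n}{n}\!\left(1+\frac{\alpha \log n}{n}\right)$.

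Concretely, let $m_n = 2^{n+1}-1$, so that the last $2^n$ bits of $L \restr m_n$ form the truth table of $L_{=n}$. Define
\[ A = \left\{ w \;\middle|\; |w| = m_n \text{ for some } n \geq 1 \text{ and the final } 2^n \text{ bits of } w \text{ are the truth table of some } g \in \SIZE(\lfloor s(n) \rfloor) \right\}. \]
Membership reduces to an $\MCSP$ instance of size polynomial in $m_n$, so $A \in \NP$. For any $L \in X_\alpha$, by definition there are infinitely many $n$ with $L_{=n}$ computable by a circuit of size $\leq s(n)$; for each such $n$, $L \restr m_n \in A$, giving $X_\alpha \subseteq A^{\io}$.

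To bound $|A_{=m_n}|$, apply the Shannon-type count $\log N(n,s) \leq s \log s + O(s)$ for the number $N(n,s)$ of Boolean functions on $n$ inputs realizable by size-$s$ circuits. Substituting $s = s(n)$ and expanding $\log s(n) = n - \log n + O(\log n / n)$ yields
\[ \log N(n, s(n)) \leq 2^n - (1-\alpha)\frac{2^n \log n}{n} + O\!\left(\frac{2^n}{n}\right). \]
Since $|A_{=m_n}| \leq 2^{m_n - 2^n} \cdot N(n, s(n))$, this gives $\log|A_{=m_n}| \leq m_n - (1-\alpha)\frac{2^n \log n}{n} + O(\frac{2^n}{n})$. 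Define $f(m_n) = \lfloor \frac{1-\alpha}{2} \cdot \frac{2^n \log n}{n} \rfloor$ for $n$ large and $f(m) = m$ for $m \notin \{m_n : n \geq 1\}$ (with $A_{=m} = \emptyset$ there). Then $f \in \FP$, condition (2) of the $\MC$-measure definition holds almost everywhere, and $\sum_m 2^{-f(m)}$ is $\p$-convergent because it splits into $\sum 2^{-m}$ and $\sum_n n^{-\Omega(2^n/n)}$, both decaying at rates well beyond geometric.

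The main obstacle is the arithmetic in the Shannon count: verifying that the savings term $(1-\alpha)\frac{2^n \log n}{n}$ eventually dominates the $O(\frac{2^n}{n})$ slack contributed by the $O(s)$ error and by the Taylor expansion of $\log(1 + \frac{\alpha \log n}{n})$. Since the ratio of savings to slack grows like $\Theta(\log n) \to \infty$, absorbing any constant fraction of the savings into $f(m_n)$ suffices, and the rapid decay of $2^{-f(m_n)}$ then makes $\p$-convergence essentially automatic.
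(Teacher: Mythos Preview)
Your proposal is correct and takes essentially the same approach as the paper: both use the $\MCSP$-based cover $A$ consisting of characteristic prefixes $B_{\le n}$ whose length-$n$ slice has a circuit of size $\le s(n)$, both bound $\log|A_{=m_n}|$ via the standard $(O(s))^s$ circuit count to extract a savings of order $(1-\alpha)\frac{2^n\log n}{n}$, and both set $f$ to a constant fraction of that savings. Your treatment is slightly more explicit about lengths $m\notin\{m_n\}$ (where $A_{=m}=\emptyset$ and $f(m)=m$), which the paper leaves implicit; otherwise the arguments are the same up to the harmless constant $\frac{1-\alpha}{2}$ versus the paper's $1-\frac{\alpha}{2}$.
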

\begin{appendixproof}[Proof of Theorem \ref{th:SIZEio_MNP_measure_0}]
	Let $s(n) = \frac{2^n}{n}\left(1+\frac{\alpha \log n}{n}\right)$ where $\alpha < 1$ and let $X = \SIZEio(s(n))$.

	Define
	\begin{eqnarray*}
		A
		&=& \myset{B_{\leq n} }{\pair{B_{=n},s(n)} \in \MCSP }\\
		&=& \myset{B_{\leq n} }{ B_{=n} \text{ has a circuit of size at most } s(n) }.
	\end{eqnarray*}
	Then $A \in \NP$, $X \subseteq \ioA$, and we need to show that $\log |A_{=N}| < N - f(N)$, where $N = 2^{n+1} - 1$,  for some $f \in \FP$ such that $\sum_{n=0}^\infty 2^{-f(n)}$ is $\p$-convergent. Using the bound in \cite{Lutz:AEHNC}, we have:
	\begin{eqnarray*}
		\log |A_{=N}| & < & \sum\limits_{m=0}^{n - 1} 2^m + \log \left( 48es\left(n\right)\right)^{s(n)} \\
		& = & 2^n - 1 + s(n)\left(\log (48e) + \log \left(s\left(n\right)\right)\right) \\
		& = & 2^n - 1 + \frac{2^n}{n}\left(1+\frac{\alpha \log n}{n}\right) \left( \log \left(48e\right) + n - \log n + \log \left( 1+\frac{\alpha \log n}{n}\right) \right) \\
		& < & 2^n - 1 + \frac{2^n}{n}\left(n + \alpha \log n - \log n + 6 \right) \\
		& < & 2^{n+1} - 1 - \frac{2^n}{n}\left(1 - \frac{\alpha}{2}\right)\log n \\
		& = & N -   \frac{2^n}{n}\left(1 - \frac{\alpha}{2}\right)\log n.
	\end{eqnarray*}
	As a result, set
	\begin{eqnarray*}
		f(N) & = & \left(1 - \frac{\alpha}{2}\right)\frac{2^n}{n}\log n \\
		& = & \left(1 - \frac{\alpha}{2}\right) \frac{N+1}{2(\log \left(N+1\right) - 1)} \log \left(\log \left(N + 1\right) - 1\right).
	\end{eqnarray*}
	It is easy to see that $f \in \FP$ and  $\sum_{n=0}^\infty 2^{-f(n)}$ is $\p$-convergent.
\end{appendixproof}

\begin{corollary}\label{co:SIZEio_SpanP_measure_0}
	For all $\alpha < 1$,
	$\SIZEio\left(\frac{2^n}{n}\left(1+\frac{\alpha \log n}{n}\right)\right)$
	has $\SpanP$-measure 0.
\end{corollary}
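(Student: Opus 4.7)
The plan is to observe that this corollary follows immediately by chaining the two preceding results. Theorem \ref{th:SIZEio_MNP_measure_0} establishes that $X_\alpha = \SIZEio\!\left(\frac{2^n}{n}\!\left(1+\frac{\alpha \log n}{n}\right)\right)$ has $\MNP$-measure $0$, and Theorem \ref{th:NP-entropy-SpanP-Measure}(2) states that $\MNP(X)=0$ implies $\muSpanP(X)=0$. So the only thing to write is the composition of these two implications.

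Concretely, I would first invoke Theorem \ref{th:SIZEio_MNP_measure_0} to obtain, for the given $\alpha < 1$, a cover $A \in \NP$ (namely the set of truth-tables $B_{\leq n}$ with $\pair{B_{=n}, s(n)} \in \MCSP$) together with the $\FP$ function $f$ witnessing that $X_\alpha \subseteq \ioA$, $\log|A_{=N}| < N - f(N)$ almost everywhere, and $\sum_N 2^{-f(N)}$ is $\p$-convergent. These are exactly the three conditions in the definition of $\MNP$-measure zero. I would then plug this witness into Theorem \ref{th:NP-entropy-SpanP-Measure}(2), whose proof feeds $A$ and $f$ into the Cover Martingale Construction (Construction \ref{construction:cover_martingale}) via Lemma \ref{le:cover_martingale_construction} to produce a uniform family of exact $\SpanP$-martingales $(d_N)$ with $d_N(\lambda) \leq 2^{-f(N)}$ and $d_N(w) = 1$ for all $w \in A_{=N}$, and then closes with the Counting Measure Borel-Cantelli Lemma (Lemma \ref{lemma:counting-measure-borel-cantelli}) to conclude $\muSpanP(X_\alpha) = 0$.

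Since both ingredients are already in hand, there is no real obstacle: the proof is a one-line citation of the two prior results. The only point worth flagging is that the definition of $\MNP$-measure zero has been chosen precisely so that the Borel-Cantelli argument goes through, so writing the corollary as an immediate consequence is the cleanest presentation. Accordingly, I would simply write, ``By Theorem \ref{th:SIZEio_MNP_measure_0} and Theorem \ref{th:NP-entropy-SpanP-Measure}(2),'' and end the proof there.
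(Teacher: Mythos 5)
Your proposal is correct and matches the paper's proof exactly: the paper also derives this corollary as an immediate consequence of Theorem \ref{th:SIZEio_MNP_measure_0} combined with Theorem \ref{th:NP-entropy-SpanP-Measure}. The additional unpacking of the cover martingale and Borel--Cantelli machinery is accurate but not needed beyond the two citations.
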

\begin{proof}
	This is immediate from Theorem \ref{th:SIZEio_MNP_measure_0} and Theorem \ref{th:NP-entropy-SpanP-Measure}.
\end{proof}

\begin{corollary}\label{co:lutzbound_DeltaEthree}
	For all $\alpha < 1$,
	$\SIZEio\left(\frac{2^n}{n}\left(1+\frac{\alpha \log n}{n}\right)\right)$
	has $\DeltaPthree$-measure 0 and measure 0 in $\DeltaEthree$.
\end{corollary}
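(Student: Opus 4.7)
The plan is to deduce both assertions from results already established in the excerpt, without any fresh construction. By Corollary \ref{co:SIZEio_SpanP_measure_0}, we know that $X_\alpha = \SIZEio\!\left(\lutzbound\right)$ has $\SpanP$-measure $0$ for every $\alpha < 1$. Theorem \ref{th:counting_measure_conservation_deltaEthree} gives the general implication $\muSpanP(X) = 0 \Rightarrow \muDeltaPthree(X) = 0$, so applying it to $X = X_\alpha$ immediately yields $\muDeltaPthree(X_\alpha) = 0$, which is the first claim.

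For the second claim, I would invoke the standard resource-bounded measure framework: having $\DeltaPthree$-measure $0$ implies having measure $0$ in the corresponding exponential-time result class $R(\DeltaPthree) = \DeltaEthree$. Concretely, a $\DeltaPthree$-martingale $d$ that succeeds on $X_\alpha$ succeeds, in particular, on every sequence in $X_\alpha \cap \DeltaEthree$, which is precisely what ``measure $0$ in $\DeltaEthree$'' means in the Lutz framework. Thus the same martingale obtained from the previous step serves both purposes.

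The only point requiring care is the bridge from the $\SpanP$-martingale produced (implicitly) by Corollary \ref{co:SIZEio_SpanP_measure_0} to the $\DeltaPthree$-supermartingale whose existence is guaranteed by Lemma \ref{le:spanp_to_deltapthree_martingale}; since we only need this supermartingale to succeed on $X_\alpha$, the multiplicative slack factor $\gamma > 0$ from that lemma is harmless. No genuine obstacle arises, since all the heavy lifting was done in establishing Theorem \ref{th:SIZEio_MNP_measure_0} and Theorem \ref{th:counting_measure_conservation_deltaEthree}; the corollary is simply a chaining together of these two ingredients.
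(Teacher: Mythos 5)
Your proposal is correct and matches the paper's proof exactly: the paper also derives the corollary as an immediate consequence of Corollary \ref{co:SIZEio_SpanP_measure_0} and Theorem \ref{th:counting_measure_conservation_deltaEthree}, with the passage from $\DeltaPthree$-measure $0$ to measure $0$ in $\DeltaEthree$ being the standard convention of the Lutz framework. Your added remarks about the supermartingale bridge and the harmless factor $\gamma$ are accurate but already absorbed into Theorem \ref{th:counting_measure_conservation_deltaEthree}.
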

\begin{proof}
	This is immediate from Corollary \ref{co:SIZEio_SpanP_measure_0} and Theorem \ref{th:counting_measure_conservation_deltaEthree}.
\end{proof}

Under a derandomization hypothesis, Corollary \ref{co:lutzbound_DeltaEthree} improves by one level in the exponential hierarchy to $\DeltaEtwo = \E^\NP$. This yields a stronger lower bound than other conditional approaches for obtaining lower bounds in $\E^\NP$ \cite{Aaronson:NECLB,Aydinlioglu:DAMG}.

\begin{corollary}\label{co:lutzbound_DeltaEtwo_derand}
	If Derandomization Hypothesis \ref{hypothesis:ENPtt_NPSVcircuits} is true, then  for all $\alpha < 1$,
	$\SIZEio\left(\frac{2^n}{n}\left(1+\frac{\alpha \log n}{n}\right)\right)$
	has $\DeltaPtwo$-measure 0 and measure 0 in $\DeltaEtwo$.
\end{corollary}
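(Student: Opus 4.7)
The plan is to derive this as an immediate consequence of the $\SpanP$-measure result combined with the derandomization-conditional measure conservation theorem developed earlier in the paper. Since Corollary \ref{co:SIZEio_SpanP_measure_0} already establishes $\muSpanP(X_\alpha) = 0$ for all $\alpha < 1$, where $X_\alpha = \SIZEio\!\left(\lutzbound\right)$, the bulk of the argument has been done; what remains is to transport this measure-0 statement from $\SpanP$ down to $\DeltaPtwo$ using the derandomization hypothesis.

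The first step is to invoke Corollary \ref{co:SIZEio_SpanP_measure_0} to obtain a $\SpanP$-martingale $d$ that succeeds on every element of $X_\alpha$. The second step is to apply Theorem \ref{th:counting_measure_conservation_deltaEthree_derand}, whose part~1 asserts that, under Derandomization Hypothesis \ref{hypothesis:ENPtt_NPSVcircuits}, $\muSpanP(X) = 0$ implies $\mu_\DeltaPtwo(X) = 0$. Applied to $X_\alpha$, this yields $\mu_\DeltaPtwo(X_\alpha) = 0$, which is exactly the $\DeltaPtwo$-measure 0 half of the claim. Internally, this step uses Lemma \ref{le:spanp_to_deltapthree_martingale}, which under the derandomization hypothesis converts the $\SpanP$-martingale $d$ to a $\DeltaPtwo$-supermartingale $d'$ with $d'(w) \geq \gamma d(w)$; thus $d'$ succeeds wherever $d$ does.

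The third step is to obtain measure 0 inside $\DeltaEtwo$. By the standard Lutz convention, a class $X$ has measure 0 in $\DeltaEtwo$ if $X \cap \DeltaEtwo$ has $\DeltaPtwo$-measure 0. Since $X_\alpha \cap \DeltaEtwo \subseteq X_\alpha$ and $\DeltaPtwo$-measure is monotone under subset inclusion (by restricting the same martingale), $\mu_\DeltaPtwo(X_\alpha) = 0$ immediately gives $\mu_\DeltaPtwo(X_\alpha \cap \DeltaEtwo) = 0$. The companion Corollary to Theorem \ref{th:counting_measure_conservation_deltaEthree_derand} (that $\DeltaEtwo$ does not have $\SpanP$-measure 0 under the hypothesis) is what guarantees this is a nontrivial statement, confirming that $X_\alpha$ is a genuine strict subclass of $\DeltaEtwo$ and that most problems in $\DeltaEtwo$ have near-maximal circuit complexity.

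There is no substantive obstacle: all technical machinery has been assembled in Section \ref{sec:counting_martingales} (Lemma \ref{le:spanp_to_deltapthree_martingale} and Theorem \ref{th:counting_measure_conservation_deltaEthree_derand}) and Section \ref{sec:applications} (Theorem \ref{th:SIZEio_MNP_measure_0} and its $\SpanP$ corollary via Theorem \ref{th:NP-entropy-SpanP-Measure}). The only thing worth being careful about is to note explicitly that the derandomization hypothesis is used exactly once, in the $\SpanP \to \DeltaPtwo$ conversion, and nowhere else; the underlying $\MCSP$-based cover martingale construction that produces the $\SpanP$-martingale is unconditional. Consequently, the proof can be stated in a single sentence pointing to Corollary \ref{co:SIZEio_SpanP_measure_0} and Theorem \ref{th:counting_measure_conservation_deltaEthree_derand}.
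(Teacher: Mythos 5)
Your proposal is correct and follows exactly the paper's route: the paper derives this corollary immediately from Corollary \ref{co:SIZEio_SpanP_measure_0} and Theorem \ref{th:counting_measure_conservation_deltaEthree_derand}, just as you do. Your additional remarks on the internal use of Lemma \ref{le:spanp_to_deltapthree_martingale} and on the convention for ``measure 0 in $\DeltaEtwo$'' are accurate elaborations of the same argument.
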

\begin{proof}
	This is immediate from Corollary \ref{co:SIZEio_SpanP_measure_0} and Theorem \ref{th:counting_measure_conservation_deltaEthree_derand}.
\end{proof}

Lutz \cite{Lutz:DCC} showed that
for all $\alpha \in [0,1]$, the class
\[ \calD_\alpha = \SIZE\left(\alpha\frac{2^n}{n}\right) \]
has $\pspace$-dimension $\alpha$. Athreya et al. \cite{Athreya:ESDAICC} showed that $\calD_\alpha$ also has strong $\pspace$-dimension $\alpha$.
Hitchcock and Vinodchandran \cite{Hitchcock:DERC} showed that
$\calH_\NP(\calD_\alpha) = \alpha,$
yielding the improvement that $\calD_\alpha$ has $\Deltapthree$-dimension $\alpha$.
It is immediate from this and Theorem \ref{th:CountingDimLeqEntropyRates} that $\calD_\alpha$ has $\SpanP$-dimension $\alpha$.
We improve this to $\SharpP$-dimension by using a Kolmogorov complexity argument.

\newcommand{\alphatwonn}{\alpha\frac{2^n}{n}}
\newcommand{\betatwonn}{\beta\frac{2^n}{n}}

\begin{theorem}\label{th:SIZE_dimSharpP}
	For all $\alpha \in [0,1]$,
	$\dimsharpp\left(\SIZE\left(\alpha\frac{2^n}{n}\right)\right) = \Dimsharpp\left(\SIZE\left(\alpha\frac{2^n}{n}\right)\right) = \alpha.$
\end{theorem}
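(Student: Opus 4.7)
The plan is to prove the equalities by combining a lower bound via polynomial-space dimension with an upper bound via the polynomial-time Kolmogorov complexity rate. Since $\dimSharpP \leq \DimSharpP$, it suffices to establish $\alpha \leq \dimSharpP$ and $\DimSharpP \leq \alpha$.

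For the lower bound, Proposition~\ref{prop:basic_counting_measure_dimension_relationships} gives $\dimpspace(X) \leq \dimSharpP(X)$ and $\Dimpspace(X) \leq \DimSharpP(X)$ for every $X \subseteq \C$. Lutz~\cite{Lutz:DCC} showed that $\dimpspace(\SIZE(\alpha 2^n/n)) = \alpha$ and Athreya et al.~\cite{Athreya:ESDAICC} showed that $\Dimpspace(\SIZE(\alpha 2^n/n)) = \alpha$, immediately giving $\dimSharpP(\SIZE(\alpha 2^n/n)) \geq \alpha$ and $\DimSharpP(\SIZE(\alpha 2^n/n)) \geq \alpha$.

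For the upper bound, I would apply Theorem~\ref{th:Kolmogorov_rate_SharpP_dimension}, which yields $\dimSharpP(X) \leq \calK_\poly(X)$ and $\DimSharpP(X) \leq \calK_\poly^\str(X)$. For any $A \in \SIZE(\alpha 2^n/n)$, for every sufficiently large $m$ the level $A_{=m}$ is computed by a circuit $C_m$ of size $s(m) = \alpha 2^m/m$. By Lutz's counting bound $|\SIZE_{=m}(s(m))| \leq (48 e\, s(m))^{s(m)}$ used in Theorem~\ref{th:SIZEio_MNP_measure_0}, each $C_m$ admits a description of at most $\alpha 2^m(1+o(1))$ bits whose decoder reconstructs the full $2^m$-bit truth table in time polynomial in $2^m$ by brute-force evaluation of $C_m$. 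Concatenating descriptions of $C_0, \dots, C_n$ gives a polynomial-time decodable program of length $\alpha N_0(n)(1+o(1))$ printing $A \restr N_0(n)$ at the clean lengths $N_0(n) = 2^{n+1}-1$; for lengths $N$ strictly between $N_0(n)$ and $N_0(n+1)$, the residual $N - N_0(n)$ bits of level $n+1$ are supplied by whichever of (a)~including the circuit $C_{n+1}$ and truncating its evaluation, or (b)~supplying the raw residual bits directly, is shorter.

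The main obstacle is the strong-dimension direction. The $\liminf$ bound $\calK_\poly(A) \leq \alpha$ follows directly from the clean encoding at lengths $N_0(n)$, yielding $\dimSharpP \leq \alpha$ with little difficulty. To obtain $\DimSharpP \leq \alpha$, one must show the $\limsup$ ratio $K^p(A \restr N)/N$ is bounded by $\alpha + o(1)$ uniformly in $N$, and the naive hybrid encoding above only gives ratio as large as $2\alpha/(1+\alpha)$ at awkward intermediate lengths. The delicate part of the proof is therefore to design a more careful encoding (interleaving circuit-description bits with truth-table bits of the active level, so that the compression ratio stabilizes at $\alpha$ at every length) and verify that this encoding is indeed polynomial-time decodable; this is what pushes the bound from the easy $\dim$ case up to the stronger $\Dim$ case claimed in the statement.
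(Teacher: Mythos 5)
Your lower bounds and your proof of $\dimsharpp\left(\SIZE\left(\alpha\twonn\right)\right)\leq\alpha$ are correct and are essentially the paper's argument: the paper likewise describes each level $B_{=m}$ by an $\approx\alpha 2^m$-bit encoding of a size-$\alpha\frac{2^m}{m}$ circuit (via Frandsen--Miltersen stack programs rather than the $(48es)^s$ counting bound, which is immaterial), concatenates these to get $K^{q(N)}(B\restr N)\leq(\alpha+o(1))N$ at $N=2^{n+1}-1$, and invokes Theorem \ref{th:Kolmogorov_rate_SharpP_dimension}; for the lower bound it uses $\dimSharpP\geq\dimh$ where you use $\dimSharpP\geq\dimpspace$, and either suffices.

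The gap is in the strong-dimension half, and you have identified exactly the right difficulty --- but the repair you sketch cannot be carried out. No interleaved encoding can bring the $\limsup$ ratio down to $\alpha$ at every length, because $\calK_\poly^\str\left(\SIZE\left(\alpha\twonn\right)\right)$, and indeed the unbounded $\limsup$ Kolmogorov rate of this class, is genuinely at least $\frac{2\alpha}{1+\alpha}$. Fix a dyadic $\beta$ with $\frac{\alpha}{2-\alpha}<\beta<\alpha$ and let $B$ be the language whose level-$m$ truth table places $k_m\approx\beta 2^m$ algorithmically random bits on the lexicographically first inputs and zeros elsewhere. The support of $B_{=m}$ is a union of $O(1)$ initial subcubes of dimension $m-O(1)$, so Lupanov gives circuits of size $\frac{\beta 2^m}{m}(1+o(1))<\alpha\frac{2^m}{m}$ and hence $B\in\SIZE\left(\alpha\twonn\right)$; yet at $N\approx(1+\beta)2^m$ the prefix $B\restr N$ contains $\approx 2\beta 2^m$ incompressible bits (all of levels $<m$ plus the first $k_m$ bits of level $m$), so $K(B\restr N)/N\to\frac{2\beta}{1+\beta}>\alpha$ along this subsequence. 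Since $\DimSharpP(\{B\})\geq\Dimpspace(\{B\})\geq\cDim(\{B\})=\limsup_N K(B\restr N)/N$, letting $\beta\to\alpha$ even yields $\DimSharpP\left(\SIZE\left(\alpha\twonn\right)\right)\geq\frac{2\alpha}{1+\alpha}>\alpha$ for $0<\alpha<1$. So the route through Theorem \ref{th:Kolmogorov_rate_SharpP_dimension} provably cannot give $\DimSharpP\leq\alpha$, and the chain above indicates that the strong-dimension equality in the statement itself needs to be revisited, not merely its proof.

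You should also know that the paper's own proof does not address this: it verifies $K^{q(N)}(B\restr N)/N\leq\gamma$ only for $N$ of the form $2^{n+1}-1$ and then applies Theorem \ref{th:Kolmogorov_rate_SharpP_dimension} to conclude the bound on $\DimSharpP$, whereas $\calK_\poly^\str$ is a $\limsup$ and requires control at \emph{all} sufficiently large $N$. Your instinct that the intermediate lengths are the delicate point is exactly right; the honest conclusion is that this method establishes only the $\dimsharpp$ half of the statement (where the $\liminf$ in $\calK_\poly$ lets you restrict to the lengths $2^{n+1}-1$), and that the $\Dimsharpp$ half cannot be salvaged by a cleverer encoding.
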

\begin{appendixproof}[Proof of Theorem \ref{th:SIZE_dimSharpP}]
	Let $X_\alpha = \left(\SIZE\left(\alpha\frac{2^n}{n}\right)\right)$ and let $B \in X_\alpha$.
	For all sufficiently large $n$, $\pair{B_{=n},s} \in \MCSP$ for $s = \alpha \frac{2^n}{n}$. Let $C$ be a circuit of size $s$ on $n$ inputs. Frandsen and Miltersen \cite{FraMil05} showed that there exists a stack program of size at most $(s+1)(c+\log (n+s))$ that constructs $C$. Let $\gamma > \beta > \alpha$ be arbitrarily close rationals. Let $p(n) = n^3$ and  $q(n) = n^4$.
	For all $n$ larger than some $n_0$,
	\begin{eqnarray*}
		K^{p(2^n)}(B_{=n}) &\leq& (s+1)(c+\log (n+s)) + O(\log n) \\
		&\leq& \left(\alphatwonn +1\right) \left(c + \log \left(\alphatwonn + n\right)\right) + O(\log n)\\
		&\leq& \left(\betatwonn \right) \left(c + \log \left(\betatwonn\right)\right)\\
		&\leq& \left(\betatwonn \right) \left(c + \log \beta + n - \log n\right)\\
		&\leq& \left(\betatwonn \right) n\\
		&\leq& \beta 2^n.
	\end{eqnarray*}
	Let $N = 2^{n+1}-1$.  We have\begin{eqnarray*}
		K^{q(N)}(B_{\leq n})
		&\leq& \sum_{k=n_0}^n K^{p(2^k)}(B_{=k}) + O(n) \\
		&\leq& \beta N + O(n) \\
		&\leq & \gamma N,
	\end{eqnarray*}
	so \[\frac{K^{q(N)}(B \restr N)}{N} \leq \gamma\]
	for all sufficiently large $N$ of the form $2^{n+1}-1$. Therefore
	$\DimSharpP(X_\alpha) \leq  \gamma$ by Theorem \ref{th:Kolmogorov_rate_SharpP_dimension}.
	The dimension lower bound holds because $\dimSharpP(X_\alpha)\allowbreak\geq \dimh(X_\alpha) = \alpha$ \cite{Lutz:DCC}. The theorem follows because  $\alpha$ and $\gamma$ are arbitrarily close.
\end{appendixproof}

\begin{corollary}\label{co:Ppoly_SharpP_dimension}
	$\Ppoly$ has $\SharpP$-strong dimension 0.
\end{corollary}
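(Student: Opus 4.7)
The plan is to deduce this corollary directly from Theorem \ref{th:SIZE_dimSharpP} together with the observation that polynomial-size circuits are eventually dominated by any fixed fraction of the Shannon bound $\frac{2^n}{n}$. Concretely, I would unfold the definition $\Ppoly = \bigcup_{k} \SIZE(n^k)$, and then argue that for every $\alpha > 0$ we have $\Ppoly \subseteq \SIZEdimbound$.

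To verify the inclusion, fix $A \in \Ppoly$, choose $k$ with $A \in \SIZE(n^k)$, and note that for sufficiently large $n$ we have $n^k \leq \alpha \frac{2^n}{n}$. Since $\SIZE$ is defined via circuit-size bounds that hold almost everywhere, this gives $A \in \SIZEdimbound$. Because this works for every $\alpha > 0$, we conclude $\Ppoly \subseteq \SIZEdimbound$ for each such $\alpha$.

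Now apply monotonicity of counting strong dimension: for every $\alpha > 0$,
\[ \DimSharpP(\Ppoly) \leq \DimSharpP\!\left(\SIZEdimbound\right) = \alpha, \]
where the equality is Theorem \ref{th:SIZE_dimSharpP}. Taking $\alpha \to 0^+$ yields $\DimSharpP(\Ppoly) = 0$, as required.

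There is no real obstacle here, since the heavy lifting has already been done in the theorem; the only care needed is to check that the quantifier order in the definition of $\SIZE(s(n))$ (``for all sufficiently large $n$'') plays nicely with the dependence of the threshold $n$ on $\alpha$ and $k$, which it does because $\alpha$ is fixed before choosing the ``sufficiently large'' cutoff. No counting-martingale construction is required beyond those already inherited through Theorem \ref{th:SIZE_dimSharpP}.
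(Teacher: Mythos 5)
Your proposal is correct and is exactly the intended derivation: the paper states this as an immediate corollary of Theorem \ref{th:SIZE_dimSharpP}, using $\Ppoly = \bigcup_k \SIZE(n^k) \subseteq \SIZEdimbound$ for every $\alpha > 0$ and monotonicity of $\DimSharpP$. Your remark about the quantifier order in the ``for all sufficiently large $n$'' clause of the $\SIZE$ definition is the right (and only) point requiring care, and you handle it correctly.
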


Regarding infinitely-often classes, Hitchcock and Vinodchandran \cite{Hitchcock:DERC} showed that the class
$\SIZEiodimbound$ has $\NP$-entropy rate and $\Deltapthree$-dimension $\frac{1+\alpha}{2}.$
This extends to $\SharpP$-dimension, with a proof similar to Theorem \ref{th:SIZE_dimSharpP}.
\begin{theorem}\label{th:SIZEio_dimSharpP}
	For all $\alpha \in [0,1]$,
	$\dimsharpp\left(\SIZEiodimbound\right) = \frac{1+\alpha}{2}.$
\end{theorem}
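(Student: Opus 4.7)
The plan is to mirror the proof of Theorem \ref{th:SIZE_dimSharpP}, replacing the almost-everywhere Kolmogorov-complexity bound with an infinitely-often one, and then invoke Theorem \ref{th:Kolmogorov_rate_SharpP_dimension}. The only new feature is that for $B \in X_\alpha = \SIZEio(\alpha \tfrac{2^n}{n})$ we only know that $B_{=n}$ has a small circuit for infinitely many $n$, but this is precisely enough to control $\liminf_N \frac{K^{q(N)}(B \restr N)}{N}$, which is the quantity underlying $\calK_\poly$ and hence $\dimSharpP$.

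For the upper bound, fix $B \in X_\alpha$ and let $I \subseteq \N$ be the infinite set of lengths $n$ at which $B_{=n}$ admits a circuit of size $s \leq \alpha \tfrac{2^n}{n}$. Fix rationals $\gamma > \beta > \alpha$ and polynomials $p(n) = n^3$, $q(n) = n^4$. For each $n \in I$, the Frandsen--Miltersen stack-program bound used in the proof of Theorem \ref{th:SIZE_dimSharpP} gives, for $n$ large, $K^{p(2^n)}(B_{=n}) \leq \beta 2^n$. Combining this with the trivial description of $B_{<n}$ (listing its $2^n - 1$ bits), I obtain, for $N = 2^{n+1} - 1$ with $n \in I$,
\[
K^{q(N)}(B_{\leq n}) \leq (2^n - 1) + \beta 2^n + O(\log n) = (1+\beta)2^n + O(\log n),
\]
so $\frac{K^{q(N)}(B\restr N)}{N} \leq \frac{1+\beta}{2} + o(1) \leq \frac{1+\gamma}{2}$ for all sufficiently large $n \in I$. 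Since $I$ is infinite and the polynomial $q$ does not depend on $B$, this yields
\[
\sup_{B \in X_\alpha} \liminf_{N \to \infty} \frac{K^{q(N)}(B \restr N)}{N} \leq \tfrac{1+\gamma}{2},
\]
hence $\calK_\poly(X_\alpha) \leq \tfrac{1+\alpha}{2}$ since $\gamma > \alpha$ is arbitrary. Theorem \ref{th:Kolmogorov_rate_SharpP_dimension} then gives $\dimSharpP(X_\alpha) \leq \tfrac{1+\alpha}{2}$.

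For the lower bound I invoke Proposition \ref{prop:basic_counting_measure_dimension_relationships}, which gives $\dimSharpP(X_\alpha) \geq \dimh(X_\alpha)$, and the classical Hausdorff-dimension computation $\dimh(\SIZEio(\alpha \tfrac{2^n}{n})) = \tfrac{1+\alpha}{2}$ established in \cite{Lutz:DCC,Hitchcock:DERC}. Combining the two bounds yields equality.

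The only delicate point, and thus the main potential obstacle, is making sure the Kolmogorov bound applies uniformly in $B$: the polynomial $q$ governing the reconstruction time must be fixed independently of $B$ and of the parameter $\beta$. This works because $q$ only has to dominate the running time of the universal machine decoding a Frandsen--Miltersen stack program of size $O(2^n/n)$ and concatenating it with the trivial description of $B_{<n}$, both of which are $\poly(N)$ operations regardless of $B$. Everything else is a routine repetition of the computations already carried out in Theorem \ref{th:SIZE_dimSharpP}.
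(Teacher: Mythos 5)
Your proposal is correct and follows essentially the same route as the paper: the Frandsen--Miltersen encoding of the small circuit at infinitely many lengths, concatenated with a trivial description of the earlier prefix, bounds $\calK_\poly(X_\alpha)$ by $\tfrac{1+\alpha}{2}$, and Theorem \ref{th:Kolmogorov_rate_SharpP_dimension} plus the Hausdorff-dimension lower bound finish the argument. You are in fact slightly more careful than the paper's writeup in insisting that the infinitely-often bound controls only the $\liminf$ and hence $\dimSharpP$ rather than $\DimSharpP$ (the strong dimension of this class is $1$), which is the correct reading.
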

\begin{appendixproof}[Proof of Theorem \ref{th:SIZE_dimSharpP}]
	Let $X_\alpha = \left(\SIZEiodimbound\right)$
	and let $B \in X_\alpha$.
	For all infinitely many $n$, $\pair{B_{=n},s} \in \MCSP$ for $s = \alpha \frac{2^n}{n}$.
	Let $\gamma > \beta > \alpha$ be arbitrarily close rationals, and let $p(n) = n^3$ and  $q(n) = n^4$.
	As in the proof of Theorem \ref{th:SIZE_dimSharpP},
	\begin{eqnarray*}
		K^{p(2^n)}(B_{=n})
		&\leq& \beta 2^n
	\end{eqnarray*}
	for infinitely many $n$.
	Let $N = 2^{n+1}-1$.  We have\begin{eqnarray*}
		K^{q(N)}(B_{\leq n})
		&\leq& K^p(B_{<n}) + K^p(B_{=n}) + O(n) \\
		&\leq& 2^{n}-1 + \beta 2^n + O(n) \\
		&\leq& \frac{1+\beta}{2} N + O(n) \\
		&\leq & \frac{1+\gamma}{2} N,
	\end{eqnarray*}
	so \[\frac{K^{q(N)}(B \restr N)}{N} \leq \frac{1+\gamma}{2}\]
	for infinitely many $N$ of the form $2^{n+1}-1$. Therefore
	$\DimSharpP(X_\alpha) \leq  \frac{1+\gamma}{2}$ by Theorem \ref{th:Kolmogorov_rate_SharpP_dimension}.
	The dimension lower bound holds because $\dimSharpP(X_\alpha)\allowbreak\geq \dimh(X_\alpha) = \frac{1+\alpha}{2}$ \cite{Hitchcock:DERC}. The theorem follows because  $\alpha$ and $\gamma$ are arbitrarily close.
\end{appendixproof}

We note that any infinitely-often defined class like $\SIZEiodimbound$ in Theorem \ref{th:SIZE_dimSharpP} always has its packing dimension and resource-bounded strong dimensions equal to 1 \cite{Gu:NDPSC}: $$\Dimsharpp\left(\SIZEiodimbound\right) = \dimpack\left(\SIZEiodimbound\right) = 1.$$

Li \cite{Li24}, building on work of Korten \cite{Korten22} and Chen, Hirahara, and Ren \cite{ChenHiraharaRen24}, showed that the symmetric exponential-time class $\SEtwo$  requires exponential-size circuits.
\begin{theorem_cite}{Li \cite{Li24}}
	$\SEtwo \not\subseteq \SIZEio\left(\frac{2^n}{n}\right)$.
\end{theorem_cite}
\noindent  Using  Lutz's counting argument \cite{Lutz:AEHNC} as in the proof of Theorem \ref{th:SIZEio_MNP_measure_0}, we improve this lower-bound to $\SIZEio\left(\lutzbound\right)$ for any  $\alpha < 1$.
\begin{theorem}\label{th:StwoE_lutzbound}
	For all $\alpha < 1$, $\SEtwo \not\subseteq \SIZEio\left(\lutzbound\right).$
\end{theorem}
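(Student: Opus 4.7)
The plan is to observe that Li's proof of $\SEtwo \not\subseteq \SIZEio(2^n/n)$ is fundamentally a range-avoidance argument against an enumeration of small circuits, so tightening the counting bound on size-$s(n)$ circuits immediately tightens the lower bound. First, I would isolate the quantitative ingredient in Li's construction: for each sufficiently large $n$, an $\SEtwo$ computation produces a truth table $T_n \in \{0,1\}^{2^n}$ that avoids the image of a polynomial-time map enumerating the truth tables of all $n$-input circuits of size at most $s(n)$. The feasibility of the avoidance step requires only that the number $\Phi_n(s(n))$ of $n$-variable Boolean functions computable by size-$s(n)$ circuits satisfies $\log \Phi_n(s(n)) < 2^n$, with enough slack to accommodate the $\SEtwo$ overhead.

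Second, I would substitute $s(n) = \lutzbound$ with $\alpha < 1$ into Li's construction and invoke the counting estimate already computed in the proof of Theorem \ref{th:SIZEio_MNP_measure_0}. That computation shows
\[
\log \Phi_n(s(n)) \;\leq\; 2^n - \left(1 - \tfrac{\alpha}{2}\right)\tfrac{2^n \log n}{n},
\]
so the gap between $\log \Phi_n(s(n))$ and $2^n$ is $\Omega\!\left(\tfrac{2^n \log n}{n}\right)$, a superpolynomial (in fact nearly exponential) margin in the truth-table length $N = 2^n$. This is comfortably larger than the constant-factor or polylogarithmic slack that Li's range-avoidance step consumes for $s(n) = 2^n/n$, so the avoidance step continues to succeed for the larger circuit-size bound.

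Third, I would confirm that no other parameter in the chain from Korten \cite{Korten22} through Chen--Hirahara--Ren \cite{ChenHiraharaRen24} and Li \cite{Li24} degrades under the substitution. Since $s(n) = \lutzbound = (1+o(1))\tfrac{2^n}{n}$, every polynomial-in-$N$ time bound governing the enumeration of circuits and the encoding of the range-avoidance instance is preserved, as is the $\SEtwo$ upper bound on the constructed language. The diagonalization then yields a language $L \in \SEtwo$ with $L_{=n} \notin \SIZE(s(n))$ for every sufficiently large $n$, which is precisely $L \notin \SIZEio(\lutzbound)$.

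The main obstacle, as I see it, is not the counting bound itself but the bookkeeping needed to verify that the various slack constants buried in Li's and Korten's $\SEtwo$ protocols (e.g., overheads from converting between symmetric alternation, $\MCSP$-style oracles, and range-avoidance) absorb the multiplicative factor $1 + \tfrac{\alpha \log n}{n}$ cleanly. Since that factor is $1 + o(1)$ and the counting gap $\Omega\!\left(\tfrac{2^n \log n}{n}\right)$ vastly exceeds any $\mathrm{poly}(n)$ overhead, the argument should go through without rebalancing; if any threshold turns out to be tight, it can be loosened by shrinking $\alpha$ slightly below $1$, which is already allowed by the hypothesis.
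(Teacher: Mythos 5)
Your proposal is correct and follows essentially the same route as the paper: the paper's proof likewise takes Li's single-valued $\FSPtwo$ range-avoidance algorithm applied to the truth-table generator circuit $\CCfont{TT}_{n,s}$ and observes that Lutz's counting bound (the same $(48es)^s$ estimate used in Theorem \ref{th:SIZEio_MNP_measure_0}) keeps the generator a stretching map for $s = \lutzbound$ with $\alpha < 1$. Your write-up simply makes explicit the slack verification that the paper leaves as an observation.
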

\begin{appendixproof}[Proof of Theorem \ref{th:StwoE_lutzbound}]
	Li \cite{Li24} showed there is a single-valued $\FSPtwo$ function that given any polynomial-size circuit $C : \{0,1\}^n \to \{0,1\}^{n+1}$ outputs a nonimage of $C$. Li applies this algorithm to the truth-table generator circuit $\CCfont{TT}_{n,s}$ for $s = \frac{2^n}{n}$ and uses its $2^n$-length output to define as the characteristic string of the $\SEtwo$ language at length $n$. We observe that this proof works with $s = \lutzbound$ by Lutz's counting argument \cite{Lutz:AEHNC}.
\end{appendixproof}

\subsection{Quantum Circuit Complexity}

Recent work has also studied circuit-size complexity within quantum models. For instance, Basu and Parida \cite{BasuParida23} showed that the number of distinct Boolean functions on $n$ variables that can be computed by quantum circuits of size at most $c\frac{2^n}{n}$ is bounded by $2^{2^{n-1}}$, where $0 \leq c \leq 1$ is a constant that depends only on the maximum number of inputs of the gates. They proved this bound in a general setting in which the set of quantum gates is uncountably infinite.
Using universal gate sets with constant fan-in,  Chia et al. \cite{ChiaChouZhangZhang:arXiv21,ChiaChouZhangZhang:ITCS22} showed that the fraction of Boolean functions on $n$ variables that require quantum circuits of size at least $\frac{2^n}{(c+1)n}$ is at least $1 - 2^{-\frac{2^n}{c+1}}$.
We extend these quantum circuit-size bounds to a counting dimension result. In the following result, the $\BQSIZE$ class is independent of the choice of gate set because of the $o\left(\frac{2^n}{n}\right)$ size bound.
\begin{theorem}\label{th:BQSIZE_GapP_dimension_0}
	$\dim_\GapP\left(\BQSIZE\left(o\left(\frac{2^n}{n}\right)\right)\right) =
		\Dim_\GapP\left(\BQSIZE\left(o\left(\frac{2^n}{n}\right)\right)\right) = 0$.
\end{theorem}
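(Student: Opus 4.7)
The plan is to sidestep the quantum machinery by reducing to polynomial-time Kolmogorov complexity. Specifically, I will show that $\calK_\poly^\str(\BQSIZE(o(2^n/n))) = 0$. Combined with Theorem~\ref{th:Kolmogorov_rate_SharpP_dimension}, this gives $\DimSharpP(\BQSIZE(o(2^n/n))) = 0$, and hence both $\dim_\GapP$ and $\Dim_\GapP$ vanish via the inequalities $\dim_\GapP(X) \leq \dim_\SharpP(X)$ and $\Dim_\GapP(X) \leq \Dim_\SharpP(X)$ from Proposition~\ref{prop:basic_counting_measure_dimension_relationships}. The approach parallels Theorem~\ref{th:SIZE_dimSharpP}, where the Frandsen--Miltersen stack-program encoding of classical circuits gave the Kolmogorov bound; here the encoding will be the explicit gate sequence of a quantum circuit, and the ``decoder'' will be a classical simulator.

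Fix a universal gate set with efficiently computable amplitudes, for example Toffoli together with Hadamard, so that after $s$ gates all state-vector amplitudes are rationals with denominator $2^{O(s)}$. For any $B \in \BQSIZE(s(n))$ with $s(n) = o(2^n/n)$, take a witnessing family $(C_n)_{n \geq 0}$ with $|C_n| \leq s(n)$. Each $C_n$ has a classical description of length $O(s(n)\log(s(n)\cdot n))$ bits, listing the gate types and the qubits they act on. Given this description, a classical simulator can recover $B_{=n}$ by iterating over all $2^n$ inputs $x$, applying the $s(n)$ gates to the initial state $|x\rangle$, and rounding the resulting acceptance probability. Because $s(n) \leq 2^n$ eventually, this simulation runs in time $\poly(2^n, s(n)) = \poly(N)$ where $N = 2^{n+1} - 1$.

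Concatenating the descriptions of $C_0,\ldots,C_n$ as the program, we obtain a fixed polynomial $q$ such that for every $B \in \BQSIZE(o(2^n/n))$,
\[ K^q(B_{\leq n}) \leq \sum_{k=0}^n O\bigl(s(k)\log(s(k)\cdot k)\bigr) + O(n) = o(2^{n+1}) = o(N). \]
Hence $\limsup_{N\to\infty} K^q(B\restr N)/N = 0$ for every such $B$, which gives $\calK_\poly^\str(\BQSIZE(o(2^n/n))) = 0$. Applying Theorem~\ref{th:Kolmogorov_rate_SharpP_dimension} yields $\DimSharpP(\BQSIZE(o(2^n/n))) = 0$, and the $\GapP$ bounds follow from Proposition~\ref{prop:basic_counting_measure_dimension_relationships}.

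The main obstacle will be ensuring the classical simulation runs in a \emph{single} fixed polynomial time bound $q(N)$ that is valid for all $B \in \BQSIZE(o(2^n/n))$ simultaneously, rather than one tailored to each family. This is the reason for fixing a gate set whose amplitudes are rationals of bit-complexity $\poly(s)$: then each gate application is a sparse-matrix multiplication on the $2^n$-dimensional state vector executable in $\poly(2^n, s(n))$ time, uniformly across all admissible circuit families. Once the uniformity of $q$ is verified, the rest of the argument is an immediate application of the counting-dimension machinery already established in the paper.
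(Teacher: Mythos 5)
Your reduction to Kolmogorov complexity rates is logically sound in its outer structure: $\calK_\poly^\str(X)=0$ would indeed give $\DimSharpP(X)=0$ via Theorem \ref{th:Kolmogorov_rate_SharpP_dimension}, and then $\Dim_\GapP(X)\leq\DimSharpP(X)$ and $\dim_\GapP(X)\leq\Dim_\GapP(X)$ finish the job. The gap is in the one claim doing all the work: that a deterministic universal machine can recover $B_{=n}$ from the gate-list description of $C_n$ in time $\poly(N)$ with $N=2^{n+1}-1$. A quantum circuit of size $s(n)=o(2^n/n)$ may act on up to $n+O(s(n))$ qubits, since the standard definition of $\BQSIZE$ (and the circuit-counting bound of Chia et al.\ that the paper relies on, $2^{O(s\log s)}$ circuits) permits ancilla qubits in number up to the gate count. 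State-vector simulation then costs $2^{n+\Omega(s(n))}$ time, and the Feynman path-sum alternative costs $2^{\Theta(h)}$ for $h\leq s(n)$ Hadamard branchings; both are $2^{\Theta(2^n/n)}=N^{\omega(1)}$, so no fixed polynomial $q$ exists. Your ``sparse-matrix multiplication on the $2^n$-dimensional state vector'' silently assumes an ancilla-free circuit, which is not the class the theorem is about. The deeper point is that the acceptance probability of such a circuit is naturally a ratio of a $\GapP$ function to a power of $2$ --- efficiently computable by a counting machine but not, as far as anyone knows, by a deterministic one --- and this is precisely why the paper introduces $\GapP$-martingales for quantum classes rather than routing through $K^p$. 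Indeed, your argument, if correct, would prove the strictly stronger statement $\DimSharpP(\BQSIZE(o(2^n/n)))=0$, which would undercut the paper's stated reason for needing $\GapP$ at all.

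The paper's actual proof avoids deterministic simulation entirely. It feeds the $\GapP$ path-sum amplitudes $f(x,b)/2^{t(|x|)}$ into the Acceptance Probability Martingale Construction (Construction \ref{construction:probability_martingale_construction}), obtaining for each candidate circuit family $\vec{C}$ an exact $\GapP$-martingale $d_{\vec{C}}$ that wins $\Omega(2^{2^{n+1}})$ on the language it decides; it then sums over all $2^{O(s(n)\log s(n))}$ circuit families of the given size, normalizes by $2^{\epsilon 2^{n+1}}$, and applies the Summation Lemma (Lemma \ref{lemma:counting-martingale-summation}) to get a single $\GapP$-martingale that $\epsilon'$-strongly succeeds on every language in the class. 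If you want to salvage a Kolmogorov-style route, you would need a $\GapP$- or nondeterministic analogue of the decoder (the witness being the circuit description together with the path sums), which is essentially what the cover/conditional-expectation constructions formalize; the purely deterministic $K^p$ bound does not go through.
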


\begin{appendixproof}[Proof of Theorem \ref{th:BQSIZE_GapP_dimension_0}]
	Let \(A\) be a language with quantum circuits of size $o(\frac{2^n}{n})$.
	We will use the Acceptance Probability Construction (Construction \ref{construction:probability_martingale_construction}) technique to construct a \(\GapP\)-martingale.
	Let $b \geq 1$ and let $s(n) = \frac{2^n}{bn}$.
	There exists \(n_0\) such that for all \(n \geq n_0\), the quantum circuit for \(A_{=n}\) has a size of at most \(s(n)\).
	Let \(\vec{C} = (C_{n_0},C_{n_0+1},\ldots)\) be these quantum circuits. We assume that the circuits are amplified so that \(C_{k}\) has error probability at most \(2^{-2k}\).

	Let \(d_{\vec{C}}\) be a martingale that on input \(x\)
	if \(2^{n_0}-1 \leq |x| \leq 2^{n+1}-1\) (i.e., we are betting on a string of length between \(n_0\) and \(n\)),
	runs the Acceptance Probability Construction with circuit \(C_{n_0}\) starting from length \(n_0\) on the bits of \(x\) corresponding to length \(n_0\). We bet on length \(n_0\) using \(C_{n_0}\), length \(n_0+1\) using \(C_{n_0+1}\), and so on, up to length \(n\) using \(C_{n}\).
	If \(|x| < 2^{n_0}-1\), \(d_{\vec{C}}(x) = 1\).
	Note that \(d_{\vec{C}}(\lambda) = 1\) and \(d_{\vec{C}}\) is a \(\GapP\) martingale.

	For any length \(k\) where \(n_0 \leq k \leq n\), \(d_{\vec{C}}\) wins \(\Omega(2^{2^k})\) by the analysis in Section \ref{sec:acceptance_probability_construction}.
	Therefore \(d_{\vec{C}}(A_{\leq n}) = \Omega(2^{2^{n+1}})\).
	Let \(\gamma > 0\) be a constant so that \(d_{\vec{C}}(A_{\leq n}) \geq \gamma 2^{2^{n+1}}\) for all sufficiently large \(n\).

	Define \(g_{n_0,n}\) on input \(x\in \{0,1\}^{\leq 2^{n+1}-1}\) to guess an extension \(w \in \{0,1\}^{2^{n+1}-1}\) with \(x \prefix w\) and guess a vector of quantum circuits \(\vec{C} = (C_{n_0},\ldots, C_n)\) where \(\mathrm{size}(C_i) \leq s(i)\) for all $i$ from  \(n_0\) to \(n\). Then \(g_{n_0,n}\) computes \(d_{\vec{C}}(w)\). Thus, \[g_{n_0,n}(x) = \sum\limits_{\vec{C}~:~(\forall i \in [n_0,n])\ \mathrm{size}(C_i) \leq s(i)} d_{\vec{C}}(x).\]

	By \cite{ChiaChouZhangZhang:ITCS22}, there is a constant $ c \geq 1$ depending on the universal gate set so that there are at most $2^{c s(n) \log s(n)} \leq 2^{\frac{c}{b(c+1)}2^n}$ quantum circuits of size \(s(n)\) for all sufficiently large \(n\).
	Let $\delta = \frac{c}{b(c+1)}$ and
	let $\epsilon > \delta$ be a dyadic rational.

	Define \(h_n(x)\) to be \(2^{\epsilon 2^{n+1}}\). Then
	\[d_{n_0,n}(x) = \frac{g_{n_0,n}(x)}{h_n(x)}\]
	is an exact \(\GapP\)-martingale. We have \[d_{n_0,n}(\lambda) \leq
		\frac{\prod_{m=n_0}^n 2^{\delta 2^m}}
		{2^{\epsilon 2^{n+1}}}
		=\frac{2^{\sum_{m=n_0}^n\delta 2^m}}{2^{\epsilon 2^{n+1}}}
		= 2^{\delta (2^{n+1}-2^{n_0})-\epsilon 2^{n+1}}
			= 2^{(\delta-\epsilon)2^{n+1}-\delta 2^{n_0}}.\]

	For \(m \geq 0\), define
	$$f_m(x) = \sum_{n_0=0}^m \sum_{n=n_0}^m d_{n_0,n}(x).$$
	This is a uniform family of exact \(\GapP\)-martingales.
	We have
	\[
		f_m(\lambda) = \sum_{n_0 = 0}^m \sum_{n = n_0}^m d_{n_0,n}(\lambda)
		\leq \sum_{n_0 = 0}^m \sum_{n = n_0}^m 2^{(\delta - \epsilon) 2^{n+1}}.
	\]
	There are \((m+1)^2\) terms in the double sum, each at most \(2^{(\delta - \epsilon) 2^{m+1}}\), so
	\[
		f_m(\lambda) \leq (m+1)^2 \cdot 2^{(\delta - \epsilon) 2^{m+1}}
		= 2^{(\delta - \epsilon) 2^{m+1} + 2 \log(m+1)}.
	\]
	This makes \(\sum\limits_{m=0}^\infty f_{m}(\lambda)\) \(\P\)-convergent because \(\epsilon > \delta\).
	Let
	$$f(x) = \sum_{m=0}^\infty f_m(x).$$
	By the Summation Lemma (Lemma \ref{lemma:counting-martingale-summation}), \(f\) is a \(\GapP\)-martingale.

	Let \(B \in \BQSIZE(o(\frac{2^n}{n}))\) be arbitrary with small quantum circuits starting at some \(n_0 \geq 0\). Let $\epsilon' > \epsilon$.
	We have
	\begin{eqnarray*}
		f(B_{\leq n})
		&\geq& f_n(B_{\leq n}) \\
		&\geq& d_{n_0,n}(B_{\leq n}) \\
		&\geq& \frac{\gamma 2^{2^{n+1}}}{2^{\epsilon 2^{n+1}}} \\
		&=& \gamma 2^{(1-\epsilon)2^{n+1}} \\
		&\geq& \gamma 2^{(1-\epsilon)(2^{n+1}-1)} \\
		&\geq& 2^{(1-\epsilon')(2^{n+1}-1)}
	\end{eqnarray*}
	for sufficiently large \(n\).
	Therefore \(f\) \(\epsilon'\)-strongly succeeds on \(B\).
	Since $B \in \BQSIZE(o(\frac{2^n}{n}))$ and $\epsilon' > \epsilon$ are arbitrary, \(\BQSIZE(o(\frac{2^n}{n}))\) has \(\GapP\)-strong dimension at most $\epsilon$. Since this holds for all $\epsilon > \delta$, the class has $\GapP$-strong dimension at most $\delta = \frac{c}{b(c+1)}$. Since $b \geq 1$ is arbitrary, the class has $\GapP$-strong dimension 0.
\end{appendixproof}

\begin{corollary}\label{co:BQPpoly_has_GapP_strong_dimension_0}
	$\BQP/\poly$ has $\GapP$-strong dimension 0.
\end{corollary}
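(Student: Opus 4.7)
The plan is to derive this corollary directly from Theorem \ref{th:BQSIZE_GapP_dimension_0} via class inclusion. I first observe that $\BQP/\poly \subseteq \BQSIZE\!\left(o\!\left(\frac{2^n}{n}\right)\right)$. Any language $L \in \BQP/\poly$ is decided by a $\BQP$ machine taking polynomial-length advice, and by hard-coding the advice into the quantum circuit, $L$ is decided by a uniform-in-name-only family of quantum circuits of some polynomial size $p(n)$. Since every polynomial satisfies $p(n) = o(2^n/n)$, we have $L \in \BQSIZE(p(n)) \subseteq \BQSIZE\!\left(o\!\left(\frac{2^n}{n}\right)\right)$.

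Next I invoke monotonicity of $\Dim_\GapP$ under class inclusion: if $X \subseteq Y$, then any $\GapP$-martingale that $s$-strongly succeeds on all of $Y$ in particular $s$-strongly succeeds on all of $X$, so $\Dim_\GapP(X) \leq \Dim_\GapP(Y)$. Applying this with $X = \BQP/\poly$ and $Y = \BQSIZE\!\left(o\!\left(\frac{2^n}{n}\right)\right)$, together with Theorem \ref{th:BQSIZE_GapP_dimension_0}, yields $\Dim_\GapP(\BQP/\poly) \leq 0$, and hence equals $0$.

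There is no real obstacle here: the only technical point worth stating explicitly is the inclusion step, since $\BQP/\poly$ is traditionally defined with an advice string rather than as a nonuniform circuit class, and one should note that the standard equivalence between polynomial-advice quantum Turing machines and polynomial-size quantum circuit families (with hard-coded advice) places every $L \in \BQP/\poly$ inside $\BQSIZE(n^{O(1)})$. Everything else is immediate from the definitions, so the corollary follows in one or two lines after the inclusion is noted.
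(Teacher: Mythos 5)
Your proposal is correct and matches the paper's (implicit) argument exactly: the corollary follows from Theorem \ref{th:BQSIZE_GapP_dimension_0} via the inclusion $\BQP/\poly \subseteq \BQSIZE\!\left(o\!\left(\frac{2^n}{n}\right)\right)$ (hard-coding the polynomial advice into polynomial-size quantum circuits) together with monotonicity of $\Dim_\GapP$ under subsets. The paper states this as an immediate corollary without further proof, and your explicit treatment of the inclusion step is the right thing to make precise.
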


Lutz \cite{Lutz:DCC} showed that $\SIZE(\alpha\twonn)$ has $\pspace$-dimension $\alpha$ and we improved this to $\SharpP$-dimension $\alpha$ in Theorem \ref{th:SIZE_dimSharpP}. It remains open whether a similar result holds for quantum circuits in either $\PSPACE$-dimension or $\GapP$-dimension (or even Hausdorff dimension). Achieving this would refine the current dimension zero statement into an exact dimension classification for small quantum circuits, but it would apparently depend on the choice of universal quantum gate set.

Another interesting direction is determining the measure or dimension of $\BQP/\qpoly$. While we know $\BQP/\poly$ has $\GapP$-dimension 0, extending this to quantum advice remains open. We note that Aaronson \cite{Aaronson:oracles_subtle} has shown that $\E^{\SharpP} \not\subseteq \BQP/\qpoly$, so it would be consistent with Theorem \ref{th:counting_measure_conservation} to show that $\BQP/\qpoly$ has $\GapP$-dimension 0.

A similar and simpler proof shows an infinitely-often version of Theorem \ref{th:BQSIZE_GapP_dimension_0}, analogous to Theorem \ref{th:SIZEio_dimSharpP}.

\begin{theorem}
	$\dimgapp(\BQSIZEio(o(\twonn))) = \frac{1}{2}.$
\end{theorem}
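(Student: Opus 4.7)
The plan is to mirror the proof of Theorem \ref{th:BQSIZE_GapP_dimension_0}, but with single-length bets instead of bets over a range $[n_0,n]$. Since the i.o.\ condition only guarantees a small quantum circuit at infinitely many (not almost all) lengths, the martingale cannot sustainably bet across a range; instead, for each candidate length $n$ I will try every quantum circuit on $n$ inputs of size at most $s(n) := 2^n/(bn)$. This parallels how Theorem \ref{th:SIZEio_dimSharpP} extended Theorem \ref{th:SIZE_dimSharpP} to the i.o.\ setting for classical circuits.

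For the upper bound $\dim_\GapP \leq \tfrac{1}{2}$, I will fix a rational $s > \tfrac{1}{2}$ and choose $b \geq 1$ large enough that $2s-1 > \delta$, where $\delta = \delta(b)$ is the constant from \cite{ChiaChouZhangZhang:ITCS22} (as used inside the proof of Theorem \ref{th:BQSIZE_GapP_dimension_0}) bounding the set $\mathcal{C}_n$ of size-$s(n)$ quantum circuits on $n$ inputs by $|\mathcal{C}_n| \leq 2^{\delta 2^n}$; such $b$ exists because $\delta \to 0$ as $b \to \infty$. For each $n$ and each $C \in \mathcal{C}_n$, internally amplified to error $\leq 2^{-(n+1)}$ at $O(n)$ multiplicative cost so that the product of correctness probabilities over the $2^n$ length-$n$ positions stays $\geq \tfrac{1}{2}$, I define $d_{n,C}$ as the Acceptance Probability Martingale from Construction \ref{construction:probability_martingale_construction} specialized to bet on length-$n$ positions only and to remain constant elsewhere. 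Arguing as in Lemma \ref{le:bptime_martingales} and Theorem \ref{th:AWPTIME_GapP_1_succeeds}, whenever $C$ computes $A_{=n}$ we get $d_{n,C}(A \restr N) = \Omega(2^{2^n})$ with $N = 2^{n+1}-1$.

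The next step is to aggregate and scale: set $\tilde{D}_n(w) = \bigl(\sum_{C \in \mathcal{C}_n} d_{n,C}(w)\bigr)/2^{\lceil (2s-1)2^n\rceil}$, an exact $\GapP$-martingale obtained by guessing $C$ inside the $\GapP$ computation. Writing $\eta := 2s - 1 - \delta > 0$, we get $\tilde{D}_n(\lambda) \leq 2^{-\eta 2^n}$, and since $\tilde{D}_n(w) = \tilde{D}_n(\lambda)$ whenever $n > \log_2(|w|+2)$, the series $\sum_n \tilde{D}_n(w)$ is uniformly $\p$-convergent with polynomial-time modulus $m(w,r) = O(\log(|w|+r))$. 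The Summation Lemma \ref{lemma:counting-martingale-summation} then delivers a single $\GapP$-martingale $D = \sum_n \tilde{D}_n$. When $A_{=n}$ has some $C \in \mathcal{C}_n$ computing it, $\tilde{D}_n(A \restr N) = \Omega(2^{(1-s)2^{n+1}}) \geq 2^{(1-s')N}$ for any rational $s' > s$ and all large $n$, so $D$ $s'$-succeeds on every $A \in \BQSIZEio(o(2^n/n))$; sending $b \to \infty$ drives $s$ to $\tfrac{1}{2}$. For the matching lower bound $\dim_\GapP \geq \tfrac{1}{2}$, I will apply Proposition \ref{prop:basic_counting_measure_dimension_relationships} to reduce to Hausdorff dimension and exhibit $Y = \{A : A_{=n} = 0^{2^n} \text{ for every } n \in \{2^k : k \geq 1\}\}$. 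Every $A \in Y$ has a constant-size quantum circuit at lengths $n = 2^k$, so $Y \subseteq \BQSIZEio(o(2^n/n))$; at each length $N_k = 2^{2^k+1}-1$ the forced-zero bits contribute $\sim N_k/2$ (the contributions from shorter forced blocks being geometrically dominated by the block at length $2^k$), and the uniform measure on free bits satisfies a Frostman-type bound $\mu_Y(\C_w) \leq 2^{-(1/2 - o(1))|w|}$, yielding $\dim_H(Y) = \tfrac{1}{2}$.

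The hard part will be verifying that $D$ is a bona fide $\GapP$-martingale despite the denominator $2^{\lceil (2s-1)2^n \rceil}$ having bit-length exponential in $n$. The fix is the approximate-computation clause of Definition \ref{def:counting_martingales}: for $n > m(w,r) = O(\log(|w|+r))$, the summand $\tilde{D}_n(w) = \tilde{D}_n(\lambda) \leq 2^{-\eta 2^n}$ is doubly-exponentially small and can be truncated to $0$ within precision $2^{-r}$; for $n \leq m(w,r)$ both the numerator (a $\GapP$ sum over $C \in \mathcal{C}_n$, each $C$ computed after $O(n)$-fold amplification of size $O(2^n/b) = O(|w|)$) and the denominator have bit-length polynomial in $|w| + r$, so the partial sum is $\GapP$-computable in polynomial time using the same quantum-to-$\GapP$ simulation appearing in Theorem \ref{th:AWPTIME_GapP_1_succeeds}.
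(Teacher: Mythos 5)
Your proof is correct and is essentially the argument the paper intends: the paper offers no proof beyond the remark that the result follows by a ``similar and simpler'' adaptation of Theorem \ref{th:BQSIZE_GapP_dimension_0} in the style of Theorem \ref{th:SIZEio_dimSharpP}, and your single-length betting martingale (which forgoes accumulating winnings across all blocks and therefore only reaches $2^{2^n}\approx 2^{N/2}$, i.e.\ exponent $(1-s)N$ with $s\to\tfrac12$) is exactly that adaptation. Your explicit Hausdorff-dimension witness $Y$ for the matching lower bound, and your handling of the exponentially long denominators via the approximate clause of Definition \ref{def:counting_martingales} rather than a verbatim appeal to the Summation Lemma, supply details the paper leaves implicit (the paper's own proof of Theorem \ref{th:BQSIZE_GapP_dimension_0} glosses over the same denominator issue).
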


\subsection{Density of Hard Sets}

Investigations of the density of hard sets for complexity classes began with motivation from the Berman-Hartmanis Isomorphism Conjecture \cite{BerHar77}.
A language $S$ is {\em sparse} if $|S_{\leq n}| \leq p(n)$ for some polynomial $p$ and all $n$.
A language $S$ is {\em dense} if $|S_{\leq n}| > 2^{n^\epsilon}$ for some $\epsilon > 0$ and almost all $n$.
Let $\SPARSE$ be the class of all sparse languages and $\DENSE$ be the class of all dense languages.
Meyer \cite{Meyer77} showed that every hard set for $\E$ is dense.
A problem is in $\Ppoly$ if and only if it is in $\PT(\SPARSE)$, the polynomial-time Turing closure of $\SPARSE$ \cite{BerHar77,KarLip82}.
A line of subsequent work strengthened these results in multiple directions
\cite{Maha82,Wata87b,OgiWat91,BuhHom92,Lutz:MSDHL,Fu95,Lutz:DWCPAR,Hitchcock:OLRBD,Harkins:DHDHS,Buhrman:LRSS}.
The current best result for $\E$ is that the polynomial-time bounded-query Turing reduction closures
$\PnaT(\DENSE^c)$
and $\P_{o(n/\log n)-\CCfont{T}}(\SPARSE)$ both have $\p$-dimension 0, implying every hard language for $\E$ is dense or sparse, respectively, under these reductions \cite{Hitchcock:OLRBD}. Wilson \cite{Wilson85} showed that there is an oracle relative to which $\E $ has sparse hard sets under $O(n)$-truth-table reductions.
We now show that counting measure can handle polynomial-time Turing reductions to nondense sets, even if they are computed by $\Ppoly$ circuits.
Note that the following theorem extends Corollary \ref{co:Ppoly_SharpP_dimension}.
\begin{theorem}\label{th:PPolyT_density_SharpP_dimension_0}
	The class $(\Ppoly)_\T(\DENSE^c)$ of problems that $\Ppoly$-Turing reduce to nondense sets has $\SharpP$-dimension 0.
\end{theorem}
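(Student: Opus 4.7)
The plan is to bound the polynomial-time Kolmogorov complexity rate of $X = (\Ppoly)_\T(\DENSE^c)$ from above by $0$, then invoke Theorem \ref{th:Kolmogorov_rate_SharpP_dimension}, which gives $\dimSharpP(X) \leq \calK_\poly(X)$. The task then reduces to exhibiting a single polynomial $p$ such that every $A \in X$ satisfies $\liminf_N K^p(A \restr N)/N = 0$, which yields $\calK_\poly(X) = 0$ and hence $\dimSharpP(X) = 0$.

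Fix $A \in X$ with reduction given by a polynomial-size circuit family $(C_m)$, $|C_m| \leq m^k$, querying a nondense oracle $S$. For each $n$, I form a short program that outputs $A \restr N_n$ with $N_n = 2^{n+1} - 1$, consisting of the circuits $(C_1,\ldots,C_n)$, taking $O(n^{k+1}\log n)$ bits, together with the oracle snapshot $S \cap \{0,1\}^{\leq n^k}$ (every query made by $C_m$ at lengths $m \leq n$ lies in this range). Because $S$ is nondense, for every $\delta > 0$ we have $|S_{\leq M}| \leq 2^{M^\delta}$ for infinitely many $M$; passing from $M$ to $n = \lfloor M^{1/k}\rfloor$ and using the monotonicity of $|S_{\leq \cdot}|$ produces, for any chosen $\delta' > 0$, infinitely many $n$ with $|S_{\leq n^k}| \leq 2^{n^{\delta'}}$. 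At such $n$ the snapshot costs at most $2^{n^{\delta'}} \cdot n^k$ bits, so the total program length is $o(2^n) = o(N_n)$. Decoding simulates each $C_m$ on all length-$m$ inputs, consulting the snapshot to answer oracle queries, in total time $O(N_n \cdot n^{2k} \cdot 2^{n^{\delta'}}) = N_n^{1+o(1)}$.

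Choosing $p(N) = N^2$ once and for all, the decoding time is below $p(N_n)$ for every sufficiently large $n$, regardless of the specific $k$ and $\delta'$, since the factors $(\log N_n)^{2k}$ and $2^{(\log N_n)^{\delta'}}$ are absorbed into the gap between $N^{1+o(1)}$ and $N^2$. This gives $K^p(A \restr N_n) = o(N_n)$ along an infinite subsequence, hence $\liminf_N K^p(A \restr N)/N = 0$ for every $A \in X$, and so $\calK_\poly(X) = 0$. The main obstacle is precisely this uniformity point: a single polynomial $p$ must serve every reduction in the non-uniform class $\bigcup_k (\SIZE(n^k))_\T(\DENSE^c)$, and the slack between the quasi-linear decoding time and a quadratic bound is what absorbs both the circuit-size exponent $k$ and the density exponent $\delta'$ into a fixed polynomial valid for every member of the class simultaneously.
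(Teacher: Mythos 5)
Your proof is correct and takes essentially the same route as the paper's: compose the $\Ppoly$ reduction with a lookup table (snapshot) of the nondense oracle to get $K^p(A\restr N)=o(N)$ for infinitely many $N$, conclude $\calK_\poly\bigl((\Ppoly)_\T(\DENSE^c)\bigr)=0$, and apply Theorem \ref{th:Kolmogorov_rate_SharpP_dimension}. The paper compresses this into one sentence; your accounting of the snapshot size, the decoding time, and the uniformity of the single polynomial $p$ is exactly the detail it leaves implicit.
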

\begin{appendixproof}[Proof of Theorem \ref{th:PPolyT_density_SharpP_dimension_0}]
	Let $A \in (\Ppoly)_\T(\DENSE^c)$ be in the class.
	Composing the reduction with a lookup table for the nondense set shows that for all $\epsilon > 0$ and infinitely many $n$, the polynomial-time bounded Kolmogorov of $A_{\leq n}$ is at most $2^{n^\epsilon}$. We then apply Theorem \ref{th:Kolmogorov_rate_SharpP_dimension}.
\end{appendixproof}

\begin{corollary}
	Every problem that is $\Ppoly$-Turing hard for $\DeltaEthree$ is {\em dense}.
\end{corollary}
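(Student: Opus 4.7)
The plan is to argue by contradiction using the machinery already established. Suppose $H$ is $\Ppoly$-Turing hard for $\DeltaEthree$ but not dense, so $H \in \DENSE^c$. Then every problem in $\DeltaEthree$ reduces to $H$ via a $\Ppoly$-Turing reduction, giving the containment $\DeltaEthree \subseteq (\Ppoly)_\T(\{H\}) \subseteq (\Ppoly)_\T(\DENSE^c)$.

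Next, I would invoke Theorem \ref{th:PPolyT_density_SharpP_dimension_0}, which asserts that $(\Ppoly)_\T(\DENSE^c)$ has $\SharpP$-dimension $0$. Since $\dim_\SharpP(X) < 1$ implies $\mu_\SharpP(X) = 0$ (the basic dimension/measure relationship following the counting dimension definitions), the inclusion above forces $\mu_\SharpP(\DeltaEthree) = 0$. Because every $\SharpP$-martingale is also a $\SpanP$-martingale (as noted in Proposition \ref{prop:basic_counting_measure_dimension_relationships}), this downgrades to $\mu_\SpanP(\DeltaEthree) = 0$.

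This directly contradicts Corollary \ref{co:DeltaEthree_does_not_have_SpanP_measure_0}, which states that $\DeltaEthree$ does not have $\SpanP$-measure $0$. The contradiction shows $H$ must be dense, completing the argument.

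The steps are entirely bookkeeping — the real content was already packed into Theorem \ref{th:PPolyT_density_SharpP_dimension_0} (whose proof used Kolmogorov complexity via Theorem \ref{th:Kolmogorov_rate_SharpP_dimension}) and into the measure-conservation result Corollary \ref{co:DeltaEthree_does_not_have_SpanP_measure_0} (which relied on approximate counting of $\SpanP$ functions by a $\DeltaPthree$ oracle). There is no genuine obstacle here; the only thing to be careful about is making sure the reduction-closure operator $(\Ppoly)_\T(\cdot)$ is used monotonically, i.e., that $\{H\} \subseteq \DENSE^c$ yields $(\Ppoly)_\T(\{H\}) \subseteq (\Ppoly)_\T(\DENSE^c)$, which is immediate from the definition of the closure.
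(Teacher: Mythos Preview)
Your proposal is correct and follows essentially the same approach as the paper's proof, which simply cites Theorem \ref{th:PPolyT_density_SharpP_dimension_0}, Proposition \ref{prop:basic_counting_measure_dimension_relationships}, and Corollary \ref{co:DeltaEthree_does_not_have_SpanP_measure_0}. You have merely unpacked the routine bookkeeping (contradiction via the reduction closure, passing from dimension to measure, and downgrading from $\SharpP$-measure to $\SpanP$-measure) that the paper leaves implicit.
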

\begin{proof}
	This follows from
	Theorem \ref{th:PPolyT_density_SharpP_dimension_0},
	Proposition \ref{prop:basic_counting_measure_dimension_relationships}, and
	Corollary \ref{co:DeltaEthree_does_not_have_SpanP_measure_0}.
\end{proof}

\sectionnewpage
\section{Conclusion}\label{sec:conclusion}

We have introduced $\SharpP$, $\GapP$, and
$\SpanP$ counting measures and dimensions. These are intermediate in power between polynomial-time measure and dimension and polynomial-space measure and dimension. We have shown that counting measures and dimensions are useful for classes where the space-bounded measure or dimension is known but the time-bounded measure or dimension is not known.
This is the primary way to use counting measures and dimensions and we expect more results in this form.
\begin{enumerate}
	\item   If $\mupspace(X) = 0$ and $\mup(X)$ is unknown, investigate the counting measures $\musharpp(X)$, $\muspanp(X)$, and $\muGapP(X)$.
	\item If $\dimpspace(X) = \alpha$ is known and $\dimp(X)$ is  unknown, investigate the counting dimensions $\dimsharpp(X)$, $\dimspanp(X)$, and $\dimGapP(X)$.
	\item If $\Dimpspace(X) = \alpha$ is known and $\Dimp(X)$ is  unknown, investigate the counting strong dimensions $\Dimsharpp(X)$, $\Dimspanp(X)$, and $\DimGapP(X)$.
\end{enumerate}

We strengthened Lutz's $\pspace$-measure result by showing that the class of languages with circuit size $\frac{2^n}{n}(1+\frac{\alpha \log n}{n})$ has $\SpanP$-measure zero for all $\alpha < 1$. This improvement utilizes the Minimum Circuit Size Problem ($\MCSP$) to bridge the gap between $\pspace$-measure and $\SpanP$-measure. As a consequence, we showed that this measure-theoretic circuit size lower bound holds in the third level of the exponential-time hierarchy, $\DeltaEthree = \E^{\SigmaPtwo}$. Previously this was only known to hold in the exponential-space class $\ESPACE$. We also noted that recent work \cite{Li24} on exponential circuit lower bounds for the symmetric alternation class  $S^\E_2$ extends to this tighter $\frac{2^n}{n}(1+\frac{\alpha \log n}{n})$ bound. Under derandomization assumptions, our results further improve to the second level, $\DeltaEtwo = \E^\NP$. We showed that $\BQP$ and more generally, the class of problems with $o\left(\frac{2^n}{n}\right)$-size quantum circuits, has $\GapP$-strong dimension 0. This is the first work in resource-bounded measure or dimension to address quantum complexity.
Our results on circuit-size complexity are summarized in Figure \ref{figure:summary_of_results}.

We conclude with several open questions. The relationships between counting dimensions and other dimension notions, entropy rates, and Kolmogorov complexity rates are summarized in Figure \ref{fig:kolmogorov_relationships_diagram}.
\begin{question} Can any of the relationships in Figure \ref{fig:kolmogorov_relationships_diagram} be improved?
\end{question}

Arvind and K{\"o}bler \cite{ArvKob01} showed that for
each $\calC \in \{\parityP,\PP\}$, $\mup(\calC) \not= 0$ implies
$\PH \subseteq \calC$. Hitchcock \cite{Hitchcock:SSPP} showed that if $\SPP$ does not have $\p$-measure 0, then $\PH \subseteq\SPP$.
All of these classes have $\PSPACE$-measure 0 and their $\p$-measures are unknown, so the above paradigm applies:
\begin{question}
	What are the counting measures and dimensions of counting complexity classes including  $\PP$, $\ParityP$, and $\SPP$?
\end{question}
In particular, we know from Corollary \ref{co:SharpP_SpanP_random_UP_NP} that $\GapP$-random languages are not in $\SPP$. It is known that $\SPP$ is low for $\GapP$ \cite{FeFoKu94}, meaning that an $\SPP$ oracle provides no additional power to $\GapP$: $\GapP^\SPP = \GapP$. We do not know if $\SPP$ languages can be shown to uniformly have $\GapP$-measure 0.
\begin{question} Does $\SPP$ have $\GapP$-measure 0?
\end{question}
We showed in Theorem \ref{th:counting_measure_conservation_deltaEthree} that $\SpanP$-measure is dominated by $\DeltaPthree$-measure, which implies $\DeltaEthree$ does not have $\SpanP$-measure 0 or $\SharpP$-measure 0. For $\GapP$-measure, we only know that $\E^\GapP = \E^\SharpP$ does not have $\GapP$-measure 0.
\begin{question}
	What is the smallest complexity class that does not have $\GapP$-measure 0?
\end{question}

We also know that $\NP$ and $\UP$ have $\PSPACE$-measure 0, but we do not know their $\P$-measures. From Corollary \ref{co:SharpP_SpanP_random_UP_NP}, we know that $\SpanP$-random languages are not in $\NP$ and $\SharpP$-random languages are not in $\UP$. As with $\SPP$, it is not clear how to extend these proofs to show that the classes $\NP$ or $\UP$ have counting measure 0.
\begin{question}
	Does $\NP$ have $\SpanP$-measure 0?
\end{question}
\begin{question}
	Does $\UP$ have $\SharpP$-measure 0?
\end{question}

The {measure hypothesis} $\mup(\NP) \neq 0$ is known to have many plausible consequences \cite{Lutz:CVKL,Hitchcock:HHDCC,Lutz:TPRBM,Lutz:QSET}.
Hypotheses that complexity classes do not have counting measure 0 could be interesting to study. For example:
\[ \begin{array}{ccccc}
		\muspanp(\NP) \neq 0  & \Rightarrow &
		\musharpp(\NP) \neq 0 & \Rightarrow & \mup(\NP) \neq 0                                        \\
		\Downarrow
		                      &             & \Downarrow            &             & \Downarrow        \\
		\muspanp(\PP) \neq 0  & \Rightarrow & \musharpp(\PP) \neq 0 & \Rightarrow & \mup(\PP) \neq 0.
	\end{array} \]
Our results imply that if $\muSharpP(\NP) \neq 0$, then strong consequences hold for $\NP$:
\begin{enumerate}
	\item $\NP$ has problems with circuit-size complexity at least $\frac{2^n}{n}$ (Theorem \ref{th:SIZE_dimSharpP}).
	\item The $\leq^\Ppoly_\T$-hard languages for $\NP$ are dense (Theorem \ref{th:PPolyT_density_SharpP_dimension_0}).
\end{enumerate}
\begin{question}
	What else does $\mu_\SharpP(\NP) \neq 0$ imply? How does it compare with the standard measure hypothesis $\mup(\NP) \neq 0$?
\end{question}

We were able to show in Theorem \ref{th:SIZE_dimSharpP} that $\SIZE\left(\alpha\twonn\right)$ has $\SharpP$-dimension $\alpha$, but in Theorem \ref{th:BQSIZE_GapP_dimension_0} we only showed that $\BQSIZE\left(o\left(\twonn\right)\right)$ has $\GapP$-strong dimension 0.
\begin{question}
	Is it possible to determine the $\GapP$-dimension for the class of problems with $\alpha\twonn$-size quantum circuits, for an appropriate choice of universal gate set?
\end{question}
While $\BQP/\poly$ has $\GapP$-strong dimension 0, we do not know if this can be extended to quantum advice, even for $\GapP$-measure.
\begin{question}\label{question:BQPqpoly}
	Does $\BQP/\qpoly$ have $\GapP$-measure 0?
\end{question}
\noindent
Aaronson \cite{Aaronson:oracles_subtle} has shown that $\E^\SharpP \not\subseteq \BQP/\qpoly$, so
a positive answer to Question \ref{question:BQPqpoly} is consistent with Theorem \ref{th:counting_measure_conservation}.

\begin{acks}
	We thank Morgan Sinclaire and Saint Wesonga for helpful discussions.
\end{acks}

\sectionnewpage
\bibliographystyle{plainurl}
\bibliography{rbm,dim,random,main,newinclude,dimrelated,new,quantum}

\begin{thebibliography}{10}

\bibitem{Aaronson:NECLB}
S.~Aaronson, B.~Aydinlioglu, H.~Buhrman, J.~Hitchcock, and D.~van Melkebeek.
\newblock A note on exponential circuit lower bounds from derandomizing
  {Arthur-Merlin} games.
\newblock Technical Report TR10-174, Electronic Colloquium on Computational
  Complexity, 2010.

\bibitem{Aaronson:oracles_subtle}
Scott Aaronson.
\newblock Oracles are subtle but not malicious.
\newblock In {\em 21st Annual {IEEE} Conference on Computational Complexity
  {(CCC} 2006), 16-20 July 2006, Prague, Czech Republic}, pages 340--354.
  {IEEE} Computer Society, 2006.
\newblock \href {https://arxiv.org/abs/0504048} {\path{arXiv:0504048}}, \href
  {https://doi.org/10.1109/CCC.2006.32} {\path{doi:10.1109/CCC.2006.32}}.

\bibitem{AllRub88}
E.~Allender and R.~Rubinstein.
\newblock {P}-printable sets.
\newblock {\em SIAM Journal on Computing}, 17:1193--1202, 1988.
\newblock \href {https://doi.org/10.1137/0217075} {\path{doi:10.1137/0217075}}.

\bibitem{AmbMay97}
K.~Ambos-Spies and E.~Mayordomo.
\newblock Resource-bounded measure and randomness.
\newblock In A.~Sorbi, editor, {\em Complexity, Logic and Recursion Theory},
  Lecture Notes in Pure and Applied Mathematics, pages 1--47. Marcel Dekker,
  New York, N.Y., 1997.
\newblock \href {https://doi.org/10.1201/9780429187490-1}
  {\path{doi:10.1201/9780429187490-1}}.

\bibitem{AmNeTe96}
K.~Amb{os-Spies}, H.-C. Neis, and S.~A. Terwijn.
\newblock Genericity and measure for exponential time.
\newblock {\em Theoretical Computer Science}, 168(1):3--19, 1996.
\newblock \href {https://doi.org/10.1016/0304-3975(96)89424-2}
  {\path{doi:10.1016/0304-3975(96)89424-2}}.

\bibitem{AroraBarak09}
Sanjeev Arora and Boaz Barak.
\newblock {\em Computational Complexity: A Modern Approach}.
\newblock Cambridge University Press, 2009.

\bibitem{ArunachalamGriloGurOliveiraSundaram02}
Srinivasan Arunachalam, Alex~B. Grilo, Tom Gur, Igor~C. Oliveira, and Aarthi
  Sundaram.
\newblock Quantum learning algorithms imply circuit lower bounds.
\newblock In {\em 62nd {IEEE} Annual Symposium on Foundations of Computer
  Science, {FOCS} 2021, Denver, CO, USA, February 7-10, 2022}, pages 562--573.
  {IEEE}, 2021.
\newblock \href {https://doi.org/10.1109/FOCS52979.2021.00062}
  {\path{doi:10.1109/FOCS52979.2021.00062}}.

\bibitem{ArvKob01}
V.~Arvind and J.~K{\"o}bler.
\newblock On pseudorandomness and resource-bounded measure.
\newblock {\em Theoretical Computer Science}, 255(1--2):205--221, 2001.
\newblock \href {https://doi.org/10.1016/s0304-3975(99)00164-4}
  {\path{doi:10.1016/s0304-3975(99)00164-4}}.

\bibitem{Athreya:ESDAICC}
K.~B. Athreya, J.~M. Hitchcock, J.~H. Lutz, and E.~Mayordomo.
\newblock Effective strong dimension in algorithmic information and
  computational complexity.
\newblock {\em SIAM Journal on Computing}, 37(3):671--705, 2007.
\newblock \href {https://arxiv.org/abs/cs.CC/0211025}
  {\path{arXiv:cs.CC/0211025}}, \href
  {https://doi.org/10.1137/s0097539703446912}
  {\path{doi:10.1137/s0097539703446912}}.

\bibitem{Aydinlioglu:DAMG}
B.~Aydinlioglu, D.~Gutfreund, J.~M. Hitchcock, and A.~Kawachi.
\newblock Derandomizing {Arthur-Merlin} games and approximate counting implies
  exponential-size lower bounds.
\newblock {\em Computational Complexity}, 20(2):329--366, 2011.
\newblock \href {https://doi.org/10.1007/s00037-011-0010-8}
  {\path{doi:10.1007/s00037-011-0010-8}}.

\bibitem{BasuParida23}
Saugata Basu and Laxmi Parida.
\newblock Quantum analog of {Shannon}'s lower bound theorem.
\newblock Technical Report 2308.13091, arXiv, 2023.
\newblock \href {https://arxiv.org/abs/2308.13091} {\path{arXiv:2308.13091}}.

\bibitem{BerHar77}
L.~Berman and J.~Hartmanis.
\newblock On isomorphism and density of {NP} and other complete sets.
\newblock {\em SIAM Journal on Computing}, 6(2):305--322, 1977.
\newblock \href {https://doi.org/10.1137/0206023} {\path{doi:10.1137/0206023}}.

\bibitem{Book74b}
R.~V. Book.
\newblock Tally languages and complexity classes.
\newblock {\em Information and Control}, 26:186--193, 1974.
\newblock \href {https://doi.org/10.1016/s0019-9958(74)90473-2}
  {\path{doi:10.1016/s0019-9958(74)90473-2}}.

\bibitem{Buhrman:LRSS}
H.~Buhrman, L.~Fortnow, J.~M. Hitchcock, and B.~Loff.
\newblock Learning reductions to sparse sets.
\newblock In {\em Proceedings of the 38th International Symposium on
  Mathematical Foundations of Computer Science}, pages 243--253.
  Springer-Verlag, 2013.
\newblock \href {https://doi.org/10.1007/978-3-642-40313-2_23}
  {\path{doi:10.1007/978-3-642-40313-2_23}}.

\bibitem{BuhHom92}
H.~Buhrman and S.~Homer.
\newblock Superpolynomial circuits, almost sparse oracles and the exponential
  hierarchy.
\newblock In {\em Proceedings of the 12th Conference on Foundations of Software
  Technology and Theoretical Computer Science}, pages 116--127. Springer, 1992.
\newblock \href {https://doi.org/10.1007/3-540-56287-7_99}
  {\path{doi:10.1007/3-540-56287-7_99}}.

\bibitem{ChenHiraharaRen24}
Lijie Chen, Shuichi Hirahara, and Hanlin Ren.
\newblock Symmetric exponential time requires near-maximum circuit size.
\newblock In {\em Proceedings of the 56th Annual ACM Symposium on Theory of
  Computing}, STOC 2024, page 1990–1999, New York, NY, USA, 2024. Association
  for Computing Machinery.
\newblock \href {https://doi.org/10.1145/3618260.3649624}
  {\path{doi:10.1145/3618260.3649624}}.

\bibitem{ChiaChouZhangZhang:arXiv21}
Nai-Hui Chia, Chi-Ning Chou, Jiayu Zhang, and Ruizhe Zhang.
\newblock Quantum meets the minimum circuit size problem.
\newblock Technical Report 2108.03171, arXiv, 2021.
\newblock \href {https://arxiv.org/abs/2108.03171} {\path{arXiv:2108.03171}}.

\bibitem{ChiaChouZhangZhang:ITCS22}
Nai-Hui Chia, Chi-Ning Chou, Jiayu Zhang, and Ruizhe Zhang.
\newblock {Quantum Meets the Minimum Circuit Size Problem}.
\newblock In Mark Braverman, editor, {\em 13th Innovations in Theoretical
  Computer Science Conference (ITCS 2022)}, volume 215 of {\em Leibniz
  International Proceedings in Informatics (LIPIcs)}, pages 47:1--47:16,
  Dagstuhl, Germany, 2022. Schloss Dagstuhl -- Leibniz-Zentrum f{\"u}r
  Informatik.
\newblock \href {https://doi.org/10.4230/LIPIcs.ITCS.2022.47}
  {\path{doi:10.4230/LIPIcs.ITCS.2022.47}}.

\bibitem{DowHir:book}
R.~Downey and D.~Hirschfeldt.
\newblock {\em Algorithmic Randomness and Complexity}.
\newblock Springer-Verlag, 2010.
\newblock \href {https://doi.org/10.1007/978-0-387-68441-3}
  {\path{doi:10.1007/978-0-387-68441-3}}.

\bibitem{FeFoKu94}
S.~A. Fenner, L.~Fortnow, and S.~A. Kurtz.
\newblock Gap-definable counting classes.
\newblock {\em Journal of Computer and System Sciences}, 48(1):116--148, 1994.
\newblock \href {https://doi.org/10.1016/s0022-0000(05)80024-8}
  {\path{doi:10.1016/s0022-0000(05)80024-8}}.

\bibitem{FennerFortnowKurtzLi03}
Stephen Fenner, Lance Fortnow, Stuart~A Kurtz, and Lide Li.
\newblock An oracle builder’s toolkit.
\newblock {\em Information and Computation}, 182(2):95--136, 2003.
\newblock \href {https://doi.org/10.1016/S0890-5401(03)00018-X}
  {\path{doi:10.1016/S0890-5401(03)00018-X}}.

\bibitem{Fenner03}
Stephen~A. Fenner.
\newblock {PP}-lowness and a simple definition of {AWPP}.
\newblock {\em Theory of Computing Systems}, 36(3):199--212, 2003.
\newblock \href {https://doi.org/10.1007/s00224-002-1089-8}
  {\path{doi:10.1007/s00224-002-1089-8}}.

\bibitem{Fort97}
L.~Fortnow.
\newblock Counting complexity.
\newblock In L.~A. Hemaspaandra and A.~L. Selman, editors, {\em Complexity
  Theory Retrospective {II}}, pages 81--107. Springer-Verlag, 1997.
\newblock \href {https://doi.org/10.1007/978-1-4612-1872-2_4}
  {\path{doi:10.1007/978-1-4612-1872-2_4}}.

\bibitem{ForRog99}
L.~Fortnow and J.~Rogers.
\newblock Complexity limitations on quantum computation.
\newblock {\em Journal of Computer and System Sciences}, 59(2):240--252, 1999.
\newblock \href {https://doi.org/10.1006/jcss.1999.1651}
  {\path{doi:10.1006/jcss.1999.1651}}.

\bibitem{Fortnow03:One}
Lance Fortnow.
\newblock One complexity theorist's view of quantum computing.
\newblock {\em Theoretical Computer Science}, 292(3):597--610, 2003.
\newblock \href {https://doi.org/10.1016/S0304-3975(01)00377-2}
  {\path{doi:10.1016/S0304-3975(01)00377-2}}.

\bibitem{FraMil05}
Gudmund~Skovbjerg Frandsen and Peter~Bro Miltersen.
\newblock Reviewing bounds on the circuit size of the hardest functions.
\newblock {\em Information Processing Letters}, 95(2):354--357, 2005.
\newblock \href {https://doi.org/10.1016/j.ipl.2005.03.009}
  {\path{doi:10.1016/j.ipl.2005.03.009}}.

\bibitem{Fu95}
B.~Fu.
\newblock With quasilinear queries {EXP} is not polynomial time {Turing}
  reducible to sparse sets.
\newblock {\em SIAM Journal on Computing}, 24(5):1082--1090, 1995.
\newblock \href {https://doi.org/10.1137/s0097539792237188}
  {\path{doi:10.1137/s0097539792237188}}.

\bibitem{Gu:NDPSC}
X.~Gu.
\newblock A note on dimensions of polynomial size circuits.
\newblock {\em Theoretical Computer Science}, 359(1--3):176--187, 2006.
\newblock \href {https://doi.org/10.1016/j.tcs.2006.02.022}
  {\path{doi:10.1016/j.tcs.2006.02.022}}.

\bibitem{Harkins:DHDHS}
Ryan~C. Harkins and John~M. Hitchcock.
\newblock Dimension, halfspaces, and the density of hard sets.
\newblock {\em Theory of Computing Systems}, 49(3):601--614, 2011.
\newblock \href {https://doi.org/10.1007/S00224-010-9288-1}
  {\path{doi:10.1007/S00224-010-9288-1}}.

\bibitem{Haus19}
F.~Hausdorff.
\newblock Dimension und {\"a}u\ss eres {M}a\ss.
\newblock {\em Mathematische Annalen}, 79:157--179, 1919.
\newblock \href {https://doi.org/10.1007/BF01457179}
  {\path{doi:10.1007/BF01457179}}.

\bibitem{Hitchcock:phdthesis}
J.~M. Hitchcock.
\newblock {\em Effective Fractal Dimension: Foundations and Applications}.
\newblock PhD thesis, Iowa State University, 2003.
\newblock URL:
  \url{https://www.proquest.com/dissertations-theses/effective-fractal-dimension-foundations/docview/305335849/se-2}.

\bibitem{Hitchcock:SSPP}
J.~M. Hitchcock.
\newblock The size of {SPP}.
\newblock {\em Theoretical Computer Science}, 320(2--3):495--503, 2004.
\newblock \href {https://doi.org/10.1016/s0304-3975(04)00128-8}
  {\path{doi:10.1016/s0304-3975(04)00128-8}}.

\bibitem{Hitchcock:CPED}
J.~M. Hitchcock.
\newblock Correspondence principles for effective dimensions.
\newblock {\em Theory of Computing Systems}, 38(5):559--571, 2005.
\newblock \href {https://doi.org/10.1007/s00224-004-1122-1}
  {\path{doi:10.1007/s00224-004-1122-1}}.

\bibitem{Hitchcock:OLRBD}
J.~M. Hitchcock.
\newblock Online learning and resource-bounded dimension: {Winnow} yields new
  lower bounds for hard sets.
\newblock {\em SIAM Journal on Computing}, 36(6):1696--1708, 2007.
\newblock \href {https://arxiv.org/abs/cs/0512053} {\path{arXiv:cs/0512053}},
  \href {https://doi.org/10.1137/050647517} {\path{doi:10.1137/050647517}}.

\bibitem{Hitchcock:SDKCTHS}
J.~M. Hitchcock, M.~L{\'o}pez-Vald{\'e}s, and E.~Mayordomo.
\newblock Scaled dimension and the {Kolmogorov} complexity of {Turing}-hard
  sets.
\newblock {\em Theory of Computing Systems}, 43(3-4):471--497, 2008.
\newblock \href {https://doi.org/10.1007/s00224-007-9013-x}
  {\path{doi:10.1007/s00224-007-9013-x}}.

\bibitem{Hitchcock:SDNC}
J.~M. Hitchcock, J.~H. Lutz, and E.~Mayordomo.
\newblock Scaled dimension and nonuniform complexity.
\newblock {\em Journal of Computer and System Sciences}, 69(2):97--122, 2004.
\newblock \href {https://doi.org/10.1016/j.jcss.2003.09.001}
  {\path{doi:10.1016/j.jcss.2003.09.001}}.

\bibitem{Hitchcock:FGCC}
J.~M. Hitchcock, J.~H. Lutz, and E.~Mayordomo.
\newblock The fractal geometry of complexity classes.
\newblock {\em SIGACT News}, 36(3):24--38, September 2005.
\newblock \href {https://doi.org/10.1145/1086649.1086662}
  {\path{doi:10.1145/1086649.1086662}}.

\bibitem{Hitchcock:HHDCC}
J.~M. Hitchcock and A.~Pavan.
\newblock Hardness hypotheses, derandomization, and circuit complexity.
\newblock {\em Computational Complexity}, 17(1):119--146, 2008.
\newblock \href {https://doi.org/10.1007/s00037-008-0241-5}
  {\path{doi:10.1007/s00037-008-0241-5}}.

\bibitem{Hitchcock:NPCMCSP}
J.~M. Hitchcock and A.~Pavan.
\newblock On the {NP}-completeness of the minimum circuit size problem.
\newblock In {\em Proceedings of the 35th IARCS Conference on Foundations of
  Software Technology and Theoretical Computer Science}, pages 236--245.
  Leibniz International Proceedings in Informatics, 2015.
\newblock \href {https://doi.org/10.4230/LIPICS.FSTTCS.2015.236}
  {\path{doi:10.4230/LIPICS.FSTTCS.2015.236}}.

\bibitem{Hitchcock:DERC}
J.~M. Hitchcock and N.~V. Vinodchandran.
\newblock Dimension, entropy rates, and compression.
\newblock {\em Journal of Computer and System Sciences}, 72({4}):760--782,
  2006.
\newblock \href {https://doi.org/10.1016/j.jcss.2005.10.002}
  {\path{doi:10.1016/j.jcss.2005.10.002}}.

\bibitem{ImpWig01}
R.~Impagliazzo and A.~Wigderson.
\newblock Randomness vs. time: Derandomization under a uniform assumption.
\newblock {\em Journal of Computer and System Sciences}, 63:672--688, 2001.
\newblock \href {https://doi.org/10.1006/jcss.2001.1780}
  {\path{doi:10.1006/jcss.2001.1780}}.

\bibitem{Lutz:WCE}
D.~W. Juedes and J.~H. Lutz.
\newblock Weak completeness in {$\mathrm{E}$} and {$\mathrm{E}_2$}.
\newblock {\em Theoretical Computer Science}, 143(1):149--158, 1995.
\newblock \href {https://doi.org/10.1016/0304-3975(95)80030-D}
  {\path{doi:10.1016/0304-3975(95)80030-D}}.

\bibitem{Lutz:CWCPSC}
D.~W. Juedes and J.~H. Lutz.
\newblock Completeness and weak completeness under polynomial-size circuits.
\newblock {\em Information and Computation}, 125(1):13--31, 1996.
\newblock \href {https://doi.org/10.1006/inco.1996.0017}
  {\path{doi:10.1006/inco.1996.0017}}.

\bibitem{KabCai00}
V.~Kabanets and J.-Y. Cai.
\newblock Circuit minimization problem.
\newblock In {\em Proceedings of the 32nd Annual ACM Symposium on Theory of
  Computing}, pages 73--79. ACM, 2000.
\newblock \href {https://doi.org/10.1145/335305.335314}
  {\path{doi:10.1145/335305.335314}}.

\bibitem{KarLip82}
R.~M. Karp and R.~J. Lipton.
\newblock Turing machines that take advice.
\newblock {\em L'Enseignement Math\'ematique}, 28:191--201, 1982.
\newblock \href {https://doi.org/10.5169/seals-52237}
  {\path{doi:10.5169/seals-52237}}.

\bibitem{KoScTo89}
J.~K{\"o}bler, U.~Sch{\"o}ning, and J.~Toran.
\newblock On counting and approximation.
\newblock {\em Acta Informatica}, 26(4):363--379, 1989.
\newblock \href {https://doi.org/10.1007/BF00276023}
  {\path{doi:10.1007/BF00276023}}.

\bibitem{Korten22}
Oliver Korten.
\newblock The hardest explicit construction.
\newblock In {\em 2021 IEEE 62nd Annual Symposium on Foundations of Computer
  Science (FOCS)}, pages 433--444. IEEE, 2022.
\newblock \href {https://doi.org/10.1109/FOCS52979.2021.00051}
  {\path{doi:10.1109/FOCS52979.2021.00051}}.

\bibitem{Kuich70}
W.~Kuich.
\newblock On the entropy of context-free languages.
\newblock {\em Information and Control}, 16:173--200, 1970.
\newblock \href {https://doi.org/10.1016/s0019-9958(70)90105-1}
  {\path{doi:10.1016/s0019-9958(70)90105-1}}.

\bibitem{Lebesgue1902}
Henri Lebesgue.
\newblock Int{\'e}grale, longueur, aire.
\newblock {\em Annali di Matematica Pura ed Applicata}, 7(1):231--359, 1902.
\newblock \href {https://doi.org/10.1007/BF02420592}
  {\path{doi:10.1007/BF02420592}}.

\bibitem{Li93}
Lide Li.
\newblock {\em On the counting functions}.
\newblock PhD thesis, University of Chicago, 1993.
\newblock URL:
  \url{https://www.proquest.com/dissertations-theses/on-counting-functions/docview/304080357/se-2}.

\bibitem{Li24}
Zeyong Li.
\newblock Symmetric exponential time requires near-maximum circuit size:
  Simplified, truly uniform.
\newblock In {\em Proceedings of the 56th Annual ACM Symposium on Theory of
  Computing}, STOC 2024, page 2000–2007, New York, NY, USA, 2024. Association
  for Computing Machinery.
\newblock \href {https://doi.org/10.1145/3618260.3649615}
  {\path{doi:10.1145/3618260.3649615}}.

\bibitem{Lupa58}
O.~B. Lupanov.
\newblock On the synthesis of contact networks.
\newblock {\em Doklady Akademii Nauk SSSR}, 119(1):23--26, 1958.

\bibitem{Lutz:thesis}
J.~H. Lutz.
\newblock {\em Resource-Bounded Category and Measure in Exponential Complexity
  Classes}.
\newblock PhD thesis, California Institute of Technology, 1987.
\newblock \href {https://doi.org/10.7907/qny92-v6h14}
  {\path{doi:10.7907/qny92-v6h14}}.

\bibitem{Lutz:CMCC}
J.~H. Lutz.
\newblock Category and measure in complexity classes.
\newblock {\em SIAM Journal on Computing}, 19(6):1100--1131, 1990.
\newblock \href {https://doi.org/10.1137/0219076} {\path{doi:10.1137/0219076}}.

\bibitem{Lutz:AEHNC}
J.~H. Lutz.
\newblock Almost everywhere high nonuniform complexity.
\newblock {\em Journal of Computer and System Sciences}, 44(2):220--258, 1992.
\newblock \href {https://doi.org/10.1016/0022-0000(92)90020-j}
  {\path{doi:10.1016/0022-0000(92)90020-j}}.

\bibitem{Lutz:QSET}
J.~H. Lutz.
\newblock The quantitative structure of exponential time.
\newblock In L.~A. Hemaspaandra and A.~L. Selman, editors, {\em Complexity
  Theory Retrospective {II}}, pages 225--254. Springer-Verlag, 1997.
\newblock \href {https://doi.org/10.1007/978-1-4612-1872-2_10}
  {\path{doi:10.1007/978-1-4612-1872-2_10}}.

\bibitem{Lutz:DCC}
J.~H. Lutz.
\newblock Dimension in complexity classes.
\newblock {\em SIAM Journal on Computing}, 32(5):1236--1259, 2003.
\newblock \href {https://arxiv.org/abs/cs/0203016} {\path{arXiv:cs/0203016}},
  \href {https://doi.org/10.1137/S0097539701417723}
  {\path{doi:10.1137/S0097539701417723}}.

\bibitem{Lutz:EFD}
J.~H. Lutz.
\newblock Effective fractal dimensions.
\newblock {\em Mathematical Logic Quarterly}, 51(1):62--72, 2005.
\newblock \href {https://doi.org/10.1002/malq.200310127}
  {\path{doi:10.1002/malq.200310127}}.

\bibitem{Lutz:MSDHL}
J.~H. Lutz and E.~Mayordomo.
\newblock Measure, stochasticity, and the density of hard languages.
\newblock {\em SIAM Journal on Computing}, 23(4):762--779, 1994.
\newblock \href {https://doi.org/10.1137/s0097539792237498}
  {\path{doi:10.1137/s0097539792237498}}.

\bibitem{Lutz:CVKL}
J.~H. Lutz and E.~Mayordomo.
\newblock Cook versus {Karp-Levin}: Separating completeness notions if {NP} is
  not small.
\newblock {\em Theoretical Computer Science}, 164(1--2):141--163, 1996.
\newblock \href {https://doi.org/10.1016/0304-3975(95)00189-1}
  {\path{doi:10.1016/0304-3975(95)00189-1}}.

\bibitem{Lutz:TPRBM}
J.~H. Lutz and E.~Mayordomo.
\newblock Twelve problems in resource-bounded measure.
\newblock {\em Bulletin of the European Association for Theoretical Computer
  Science}, 68:64--80, 1999.
\newblock Also appears as \cite{Lutz:TPRBM01}.

\bibitem{Lutz:DWCPAR}
J.~H. Lutz and Y.~Zhao.
\newblock The density of weakly complete problems under adaptive reductions.
\newblock {\em SIAM Journal on Computing}, 30(4):1197--1210, 2000.
\newblock \href {https://doi.org/10.1137/s0097539797321547}
  {\path{doi:10.1137/s0097539797321547}}.

\bibitem{Maha82}
S.~R. Mahaney.
\newblock Sparse complete sets for {NP}: Solution of a conjecture of {Berman}
  and {Hartmanis}.
\newblock {\em Journal of Computer and System Sciences}, 25(2):130--143, 1982.
\newblock \href {https://doi.org/10.1016/0022-0000(82)90002-2}
  {\path{doi:10.1016/0022-0000(82)90002-2}}.

\bibitem{Mayo94}
E.~Mayordomo.
\newblock Almost every set in exponential time is {P}-bi-immune.
\newblock {\em Theoretical Computer Science}, 136(2):487--506, 1994.
\newblock \href {https://doi.org/10.1016/0304-3975(94)00023-c}
  {\path{doi:10.1016/0304-3975(94)00023-c}}.

\bibitem{Mayo94b}
E.~Mayordomo.
\newblock {\em Contributions to the study of resource-bounded measure}.
\newblock PhD thesis, Universitat Polit\`{e}cnica de Catalunya, 1994.
\newblock URL:
  \url{https://eccc.weizmann.ac.il/static/books/Contributions_to_the_Study_of_Resource_Bounded_Measure/}.

\bibitem{Mayordomo:EHD}
E.~Mayordomo.
\newblock Effective {Hausdorff} dimension.
\newblock In B.~L{\"o}we, B.~Piwinger, and T.~R{\"a}sch, editors, {\em
  Classical and New Paradigms of Computation and their Complexity Hierarchies},
  volume~23 of {\em Trends in Logic}, pages 171--186. Kluwer Academic Press,
  2004.
\newblock \href {https://doi.org/10.1007/978-1-4020-2776-5_10}
  {\path{doi:10.1007/978-1-4020-2776-5_10}}.

\bibitem{Mayordomo:EFDAIT}
E.~Mayordomo.
\newblock Effective fractal dimension in algorithmic information theory.
\newblock In S.~B. Cooper, B.~L{\"o}we, and A.~Sorbi, editors, {\em New
  Computational Paradigms: Changing Conceptions of What is Computable}, pages
  259--285. Springer-Verlag, 2008.
\newblock \href {https://doi.org/10.1007/978-0-387-68546-5_12}
  {\path{doi:10.1007/978-0-387-68546-5_12}}.

\bibitem{Meyer77}
A.~R. Meyer, 1977.
\newblock Reported in \cite{BerHar77}.

\bibitem{MiViWa99}
P.~B. Miltersen, N.~V. Vinodchandran, and O.~Watanabe.
\newblock Superpolynomial versus subexponential circuit size in the exponential
  hierarchy.
\newblock In {\em Proceedings of the Fifth Annual International Computing and
  Combinatorics Conference}, pages 210--220, 1999.
\newblock \href {https://doi.org/10.1007/3-540-48686-0\_21}
  {\path{doi:10.1007/3-540-48686-0\_21}}.

\bibitem{MurrayWilliams17}
Cody~D. Murray and R.~Ryan Williams.
\newblock On the (non) {NP}-hardness of computing circuit complexity.
\newblock {\em Theory of Computing}, 13(4):1--22, 2017.
\newblock \href {https://doi.org/10.4086/toc.2017.v013a004}
  {\path{doi:10.4086/toc.2017.v013a004}}.

\bibitem{NandakumarPulariSSarma24}
Satyadev Nandakumar, Subin Pulari, Akhil S, and Suronjona Sarma.
\newblock One-way functions and polynomial time dimension.
\newblock Technical Report 2411.02392, arXiv, 2025.
\newblock \href {https://arxiv.org/abs/2411.02392} {\path{arXiv:2411.02392}}.

\bibitem{OgiWat91}
M.~Ogiwara and O.~Watanabe.
\newblock On polynomial-time bounded truth-table reducibility of {NP} sets to
  sparse sets.
\newblock {\em SIAM Journal on Computing}, 20(3):471--483, 1991.
\newblock \href {https://doi.org/10.1137/0220030} {\path{doi:10.1137/0220030}}.

\bibitem{ReSiCa95}
K.~W. Regan, D.~Sivakumar, and J.~Cai.
\newblock Pseudorandom generators, measure theory, and natural proofs.
\newblock In {\em Proceedings of the 36th Symposium on Foundations of Computer
  Science}, pages 26--35. IEEE Computer Society, 1995.
\newblock \href {https://doi.org/10.1109/SFCS.1995.492459}
  {\path{doi:10.1109/SFCS.1995.492459}}.

\bibitem{Roge98}
C.~A. Rogers.
\newblock {\em Hausdorff Measures}.
\newblock Cambridge University Press, 1998.
\newblock Originally published in 1970.

\bibitem{Schoning90}
Uwe Sch{\"o}ning.
\newblock The power of counting.
\newblock In Alan~L. Selman, editor, {\em Complexity Theory Retrospective},
  pages 204--223. Springer, 1990.
\newblock \href {https://doi.org/10.1007/978-1-4612-4478-3_9}
  {\path{doi:10.1007/978-1-4612-4478-3_9}}.

\bibitem{ShaUma05}
R.~Shaltiel and C.~Umans.
\newblock Pseudorandomness for approximate counting and sampling.
\newblock In {\em Proceedings of the 20th IEEE Conference on Computational
  Complexity}, pages 212--226. IEEE Computer Society, 2005.
\newblock \href {https://doi.org/10.1007/s00037-007-0218-9}
  {\path{doi:10.1007/s00037-007-0218-9}}.

\bibitem{Shannon49}
C.~E. Shannon.
\newblock The synthesis of two-terminal switching circuits.
\newblock {\em Bell System Technical Journal}, 28(1):59--98, 1949.
\newblock \href {https://doi.org/10.1002/j.1538-7305.1949.tb03624.x}
  {\path{doi:10.1002/j.1538-7305.1949.tb03624.x}}.

\bibitem{Staiger93}
L.~Staiger.
\newblock Recursive automata on infinite words.
\newblock In {\em Proceedings of the 10th Annual Symposium on Theoretical
  Aspects of Computer Science}, pages 629--639. Springer-Verlag, 1993.
\newblock \href {https://doi.org/10.1007/3-540-56503-5_62}
  {\path{doi:10.1007/3-540-56503-5_62}}.

\bibitem{Stoc85}
L.~J. Stockmeyer.
\newblock On approximation algorithms for \#{P}.
\newblock {\em SIAM Journal on Computing}, 14:849--861, 1985.
\newblock \href {https://doi.org/10.1137/0214060} {\path{doi:10.1137/0214060}}.

\bibitem{Stull20}
Donald~M. Stull.
\newblock Resource bounded randomness and its applications.
\newblock In Johanna N.~Y. Franklin and Christopher~P. Porter, editors, {\em
  Algorithmic Randomness: Progress and Prospects}, volume~50 of {\em Lecture
  Notes in Logic}, page 301–348. Cambridge University Press, Cambridge, 2020.
\newblock \href {https://doi.org/10.1017/9781108781718.010}
  {\path{doi:10.1017/9781108781718.010}}.

\bibitem{Toda91}
Seinosuke Toda.
\newblock {PP} is as hard as the polynomial-time hierarchy.
\newblock {\em SIAM Journal on Computing}, 20(5):865--877, 1991.
\newblock \href {https://doi.org/10.1137/0220053} {\path{doi:10.1137/0220053}}.

\bibitem{Tric82}
C.~Tricot.
\newblock Two definitions of fractional dimension.
\newblock {\em Mathematical Proceedings of the Cambridge Philosophical
  Society}, 91:57--74, 1982.

\bibitem{Valiant79:Permanent}
Leslie~G Valiant.
\newblock The complexity of computing the permanent.
\newblock {\em Theoretical Computer Science}, 8(2):189--201, 1979.
\newblock \href {https://doi.org/10.1016/0304-3975(79)90044-6}
  {\path{doi:10.1016/0304-3975(79)90044-6}}.

\bibitem{vanM00}
D.~van Melkebeek.
\newblock The zero-one law holds for {BPP}.
\newblock {\em Theoretical Computer Science}, 244(1--2):283--288, 2000.
\newblock \href {https://doi.org/10.1016/s0304-3975(00)00191-2}
  {\path{doi:10.1016/s0304-3975(00)00191-2}}.

\bibitem{Vill39}
J.~Ville.
\newblock {\em \'{E}tude Critique de la Notion de Collectif}.
\newblock Gauthier--Villars, Paris, 1939.

\bibitem{Wata87b}
O.~Watanabe.
\newblock Polynomial time reducibility to a set of small density.
\newblock In {\em Proceedings of the Second Structure in Complexity Theory
  Conference}, pages 138--146. IEEE Computer Society, 1987.
\newblock \href {https://doi.org/10.1109/PSCT.1987.10319263}
  {\path{doi:10.1109/PSCT.1987.10319263}}.

\bibitem{Wilson85}
C.~B. Wilson.
\newblock Relativized circuit complexity.
\newblock {\em Journal of Computer and System Sciences}, 31(2):169--181, 1985.
\newblock \href {https://doi.org/10.1016/0022-0000(85)90040-6}
  {\path{doi:10.1016/0022-0000(85)90040-6}}.

\end{thebibliography}

\end{document}